\documentclass[10pt,reqno]{amsart}
\usepackage{todonotes}
\usepackage{commands}

\begin{document}

\title[Optimal Bubble Riding with Price-Dependent Entry]{Optimal Bubble Riding with Price-Dependent Entry:\\
a Mean Field Game of Controls with Common Noise}\blfootnote{The authors wish to thank Camilo Hern\'andez for fruitful discussions. L.T. is partially supported by the NSF grant DMS-2005832, the NSF CAREER award DMS-2143861 and the AMS Claytor-Gilmer fellowship.
S.W. is partially supported by the NSF grant DMS-2005832.}
\author{Ludovic Tangpi}
\author{Shichun Wang}
\maketitle
\begin{center}
\today
\end{center}
\begin{abstract}
In this paper we further extend the optimal bubble riding model proposed in \cite{TangpiWang22} by allowing for price-dependent entry times. Agents are characterized by their individual \emph{entry threshold} that represents their belief in the strength of the bubble. Conversely, the growth dynamics of the bubble is fueled by the influx of players.
Price-dependent entry naturally leads to a mean field game of controls with common noise and random entry time, for which we provide an existence result.
The equilibrium is obtained by first solving discretized versions of the game in the weak formulation and then examining the measurability property in the limit.
In this paper, the common noise comes from two sources: the price of the asset which all agents trade, and also the exogenous bubble burst time, which we also discretize and incorporate into the model via progressive enlargement of filtration. 
\end{abstract}

\section{Introduction}
Financial bubbles have become a topic of growing concern in the recent past. The classical view of \citet{Beaver68} that investors only follow ``information content'' is clearly not applicable to bubbles. Empirical evidence demonstrates the substantial stock price premium following vacuous company announcements of broad intentions to enter the cryptocurrency market \cite{AkyildirimCorbet20} and investors' overreaction to description of Blockchain activities in firms' 8-K disclosures \cite{RidingBlockchainMania19}. Journeying back another thirty years, a similar ``gold rush'' occurs during the dot-com bubble. Instead of avoiding the overpriced technology segment, sophisticated investors such as hedge funds invest heavily in the bubble while acknowledging an inevitable burst in the future \cite{HedgeFunds04, TechBubble2011}. The herding behavior is more prevalent now due to the increasing democratization of investing, as evidenced by the dramatic surge of retail traders during the ``meme-stock" frenzy \cite{CreditSuisse22}. Recent advancements in generative artificial intelligence (AI) unleash a frenzy both on Wall Street and among retail investors, pushing up stock prices of big technology companies. As NVIDIA's market capitalization marches across the trillion dollar line, many begin to suspect overvaluation in the semiconductor market. However, despite the unprecedented interest rate hikes and the recent turmoil in the cryptocurrency market, enthusiasm towards a potential AI bubble has not dampened. The intricate interplay between the intention to leverage rapid growth and the apprehension towards a future price adjustment provides the motivation for our equilibrium-based model proposed in \cite{TangpiWang22}, which we further investigate in this paper.

Substantial empirical evidence points to the inaccuracy of viewing bubbles merely as ``irrational exuberance''. A wealth of literature exists on the topic of rationality behind bubbles. The famous ``greater fool'' model formulated by \cite{ALLEN93}, as well as a more recent adaptation \cite{LIU2018}, pinpoints the driving factor behind bubble riding as the perception that others will acquire the overpriced asset in the future. \citet{AbreuBrunnermeier03} explore the idea of ``information asymmetry'' from another perspective by giving investors different entry times and various priors on the bubble formation time (see also \cite{RobustBubbleModel12, FiniteHorizon16} for extensions of this model). \citet{RidingBubbleConvex18} use a dynamic trading model to show that sophisticated, risk-averse money managers can invest in overvalued non-benchmark asset due to the presence of convex incentives. The authors in \cite{Rubinstein1987, AwayaIwasakiWatanabe22} argue that a ``chain of middlemen'' could also spur the escalation of the asset price. 

Despite extensive debates surrounding how a bubble is formed, or even defined, it is commonly agreed that the influx of investors and capital is what sustains the overvaluation. In other words, it is natural to model these events in a large population setting. This is the motivation behind the use of a mean field game (MFG) in our previous paper \cite{TangpiWang22}, which should be understood as the infinite population limit of symmetric stochastic differential games \cite{DelarueConv, TangpiConv, Lacker16, CardaliaguetDelarueLasryLions19}. First introduced by \citet{LasryLions1, LasryLions2, LasryLions33} and also by \citet{Huang1, Huang2}, mean field games provide tractable solutions compared to large but finite population games. We refer the readers to the monographs of \citet{CarmonaBookI, CarmonaBookII} for the probabilistic approach and to the notes of \citet{Cardaliaguet2012Notes} for the partial differential equation (PDE) approach to MFGs. 

Our previous study \cite{TangpiWang22} introduced a class of MFGs with varying entry times. Players begin to take advantage of the rising price trajectory at different times during the ``awareness window", a period viewed by \citet{AbreuBrunnermeier03} as a measure of heterogeneity among traders. The inflow of traders in turn fuels the price dynamics, whose drift is a function of the number of players currently in the game. We modeled the burst of the bubble as the minimum of exogenous and endogenous burst times. While an exogenous crash occurs due to events independent of trading, an endogenous crash happens when the average inventory of the players in the game falls below a threshold. We also included price impact as a second source of interactions among the agents through the controls, leading to an \emph{extended} MFG. We proved existence of MFG equilibrium using the method initiated in \cite{CarmonaLacker15}. Leveraging established methods on filtration enlargement, we were able to decompose the optimal strategy into before-and-after-burst segments, each part being progressively measurable with respect to the original filtration. Numerically, our model discovers that the equilibrium strategy attempts to delay the burst time and therefore sustain the growth if the bubble is large enough. 

The aim of this paper is to remove two major limitations of the model in \cite{TangpiWang22}. The first is that the entry times
could not depend on the price of the traded asset. They were instead modeled as independent samples from the same pre-determined distribution. However, an intuitive entry criterion for bubble riders is the first time the price crosses a certain threshold, which we use to characterize player influx in the present work. The second improvement is that we allow for an unbounded drift term in the price dynamics. In particular, since players' entry affects the price, the drift term will depend on the running maximum process of the price itself. We also provide a well-posedness result for this path-dependent dynamics.

As natural as these changes may seem, they require the model to incorporate a ``common noise'' which is famously challenging because the law of the population has to react to the realization of the noise. Just as in the case of a classical MFG, there are two approaches to deal with common noise. The analytic approach either reformulates the problem into a coupled system of stochastic PDEs or a deterministic, but infinite dimensional, PDE called \emph{the master equation}
% on the space of probability measures 
(see \citet{CardaliaguetDelarueLasryLions19} for a careful presentation). Given a sufficiently smooth solution of the master equation, one can usually obtain strong results on the MFG equilibrium such as uniqueness or even regularity. However, almost all well-posedness results of the master equation require the Lasry-Lions monotonicity condition \cite{LasryLions33}, or the ``displacement monotonicity'' condition \cite{Ahuja16_WeakMonotone, DisplacementMono22, jackson2023quantitative} (see also the ``anti-monotonicity'' condition \cite{mou2022mean}). On the other hand, the probabilistic approach introduced by \citet{CarmonaCommonNoise} avoids making this assumption by a compactness argument. Since the monotonicity condition is too strong for our model, we take the latter route for constructing MFG equilibrium. A notable drawback to this compactness approach is that the controls might only be measurable with respect to a larger filtration. A well-known \emph{immersion} property is enforced to ensure fairness in observing that additional information. Immersion is a crucial property in the theory of filtration enlargement \cite{KharroubiLim14, PE2}, stochastic control \cite{Kurtz14, KarouiNguyenJeanblanc87}, the theory of conditional McKean-Vlasov SDEs \cite{Lacker2020SuperpositionAM} and of course mean field games \cite{CarmonaCommonNoise}. For an extensive discussion and generalization on both methods of tackling common noise, see \cite{CarmonaBookII}. Other recent extensions on related topics include MFGs with finite state space \cite{FiniteState21}, restoring uniqueness of equilibrium \cite{DelarueRestoreUnique19}, incorporating absorption \cite{Campi21}, convergence from N-player games \cite{DelarueConv, lacker2022closedloop}, and MFGs with interactions through controls \cite{djete2021extended}. To our knowledge, there aren't general existence results on the equilibrium of extended MFGs with common noise, which will be our main contribution in this work (Theorem \ref{Theorem:MFGExistence}). It is worth noting that although we provide a more intuitive model by incorporating common noise compared to our previous version in \cite{TangpiWang22}, the result is certainly weaker and less explicit, especially for numerical analysis.

The paper is structured as follows. In Section \ref{Section:ModelSetup}, we recall the features of the $N$-player model for bubble riding from \cite{TangpiWang22} and also introduce the new mechanism for price-dependent entry. Then we formulate the limit mean field game with common noise and state the assumptions as well as the main existence result. The proof is broken down into two parts. In Section \ref{Section:WeakControl} we first show an intermediate step with a weaker notion of admissibility. Then we strengthen the result in Section \ref{Section:StrongControl} by reducing the filtration for the controls. 

\section{Model Setup}\label{Section:ModelSetup}
\subsection{The N-Player Game}
\subsubsection{Price-dependent Entry}
Suppose the price trajectory $P_t$ starts at $P_0 > 0$ at time $t = 0$ when the bubble starts. Each player $i$ is characterized by $p^i\iid \nu_p$ on $[P_0, \infty)$, a price threshold for the player to deem the bubble worth riding. The player enters the game at $$t^i \ce \inf\{t \geq 0: P_{t} \geq p^i\} \wedge (T+1).$$
The value $T+1$ is chosen arbitrarily, but strictly greater than $T$, in order to bound $\T$ if the price never reaches the threshold. Note that in contrast to \cite{TangpiWang22}, the entry times are still random even if we fix the individual information $p^i$ due to their dependence on the common noise in $P$. If the price trajectory is c\`{a}dl\`{a}g and jumps are non-positive, then $P_{t^i} = p^i$ on the event that $t^i \leq T$. We assume that there are initial players with thresholds $P_0$ who are already in the game at $t = 0$. That is, $\nu_p(\{P_0\}) > 0$. Let $\vec{p} = (p^1, \dots, p^N)$ be the vector of thresholds. Then the number of players in the game by time $t$ is
\begin{equation}\label{eq:N_in}
N_{in}(t, P; \vec{p}) = \sumN \ind{t^i \leq t} = \sumN \ind{\max_{s\leq t} P_s \geq p^i } = NF_p^{N, \vec{p}}(\max_{s\leq t} P_s)
\end{equation}
where $F_p^{N, \vec{p}}$ is the empirical CDF of the thresholds $(p^1, \dots, p^N)$.

\subsubsection{Price Dynamics in the Bubble Phase}
Let $P^+$ denote the price process in the bubble phase before the burst. The growth of the bubble should depend on the inflow of players, $N_{in}(t, P; \vec{p})$, which by \eqref{eq:N_in} is a function of the maximum process of the bubble price itself. This motivates the following price dynamics before burst
\begin{equation}\label{eq:PriceDynamics+}
dP^+_t = b(t, \max_{s\leq t} P^+_s, P^+_t)dt + \sigma_0 dB_t, \quad P^+_0  = P_0,
\end{equation}
where $b$ is called the bubble trend function. Because the price grows with entry, $b$ depends on the thresholds $\vec{p}$ and should be monotone increasing in its second argument. We present the generalized form of the two examples for $b$ given in \cite{TangpiWang22} with price-dependent entry. 
\begin{example}[Exponential Bubble]\label{Example:1}
    \citet{AbreuBrunnermeier03} assumed a fully deterministic model with exponential price trajectory. In our setting, this translates to $$dP^+_t = \ell_t P^+_t dt + \sigma_0 dB_t, \quad \ell_t = \ell \frac{N_{in}(t, P;\vec{p})}{N} = \ell F_p^{N, \vec{p}}(\max_{s\leq t} P_s),$$
    where $\ell > 0$ stands for the peak growth rate of the bubble. If we assume that everyone enters at $t = 0$, that is having $p^i = P_0$ for all $i$, we obtain the model used in \cite{AbreuBrunnermeier03} with rate $\ell$.
\end{example}
\begin{example}[LPPL Bubble]\label{Example:2}
    The Johansen-Ledoit-Sornette (JLS) model proposed by \citet{JLS1} uses an assumption on the hazard rate $h_t$ of the burst time and arrives at a mean trajectory following the log-periodic power law (LPPL). While we model the burst time very differently, we can match the shape of the process by taking    $dP^+_t = h_t P^+_t dt + \sigma_0 dB_t$ with 
    \begin{equation*}
        h_t = A(t_c - t)^{\ell_t - 1} + C(t_c - t)^{\ell_t- 1}\cos(\omega \ln(t_c- t) - \phi),
    \end{equation*}
    with parameters $A, C, \omega, \phi$ and critical time $t_c$ set to $T$. In particular, the vanilla JLS model uses $\ell_t = \ell \in (0, 1)$ measures the power law acceleration of prices, which we generalize by incorporating the impact from the players' entry, namely $$\ell_t = \ell F_p^{N, \vec{p}}(\max_{s\leq t} P_s), \quad \ell \in (0, 1).$$
    Similar to the previous example, if all players enter at $t = 0$, the model reduces to a standard LPPL. See \cite{JLS2, JLS3, JLS4} for more detailed analysis of the JLS model.
\end{example}
It is not obvious that the path-dependent SDE \eqref{eq:PriceDynamics+} is strongly solvable, since $b$ is not necessarily bounded or Lipschitz, as in the case of empirical CDF. We defer the well-posedness result to the next section (Proposition \ref{Prop:PriceSDEWellPosed}), where monotonicity of the drift is the key condition that ensures existence and uniqueness of a strong solution. The risk of the bubble bursting is not reflected in \eqref{eq:PriceDynamics+} since this is the dynamics given that the bubble is still alive. We take a constant diffusion as the time horizon is relatively short-term by nature of a bubble. 

\subsubsection{Inventory and Trading Rate}
Each player $i$ has initial endowment $K^i_0 \iid \nu_K$ on $\R$ and goes ``all in'' when they enters the bubble ride. For simplicity, assume that there is no transaction cost when joining a bubble ride. That is, each player joins the game with $K^i_0/p^i$ shares of the bubble asset. We also allow for negative values of $K^i$, which represent a initial short (attack) position on the asset. Note that allowing price-dependent entry fixes a shortcoming of the original model proposed in \cite{TangpiWang22} where only the initial inventory is assumed to be i.i.d., which implies that the players who enter later will have more initial wealth due to the higher asset price at entry. 

Suppose that there is a common horizon $T > 0$. By choosing her trading rate $\alpha^i = (\alpha^i_t)_{t^i \leq t \leq T}$ after entry, the player can control her inventory trajectory by $$dX_t^i = \alpha_t^idt + \sigma dW^i_t,\quad X_t^i = 0 \text{ on } t<t^i, \quad X^i_{t^i} = K^i_0/p^i.$$ 
where $\sigma > 0$ is fixed and $W^i,\dots, W^N $ are independent $1$--dimensional Brownian motions corresponding to random streams of demand \cite{CarmonaLacker15,LealThesis}. A positive $\alpha_t$ corresponds to buying and a negative $\alpha_t$ corresponds to selling. We require $\alpha_t^i = 0$ on $t < t^i$ before entry for each $i \in \{1, \dots, N\}$, and $\alpha_t$ takes values in a compact interval $A \subset \R$. 

\subsubsection{Burst Time and Post-burst Price Dynamics}
Following \cite{AbreuBrunnermeier03} and \cite{TangpiWang22}, we allow the bubble to burst for both \emph{exogenous} and \emph{endogenous} reasons. An exogenous burst will be modeled as a non-anticipative random time, more specifically a totally inaccessible stopping time $\tau$ that is independent from the market information $(B, \{W^i\}_{i = 1, \dots, N})$. On the other hand, an endogenous burst occurs when the inventory of the active players (i.e. those who have entered) becomes too low to sustain the frenzy of the bubble. Define the empirical measure of the inventory $\mu^N_t$ and the average inventory $\bar{\mu}_t$ as 
\begin{equation}\label{empirical_meanmu}
\mu^N_t \ce \avg \sumN \delta_{X^i_t}, \quad \bar{\mu}_t^N \ce\frac{\ind{N_{in}(t, P, \vec{p}) \neq 0}}{N_{in}(t, P; \vec{p})} \sumN X_t^i  = \frac{\ind{N_{in}(t, P, \vec{p}) \neq 0}}{F^{N, \vec{p}}_{p}(\max_{s \leq t}P_s)}\int_\R x \mu_t^N(dx).
\end{equation}
Note that our assumption on the existence of initial players allows us to drop the indicator for all $t \in [0, T]$ when $N \to \infty$. 
For a given inventory threshold function $\zeta: [0, T] \to \R_+$, define the endogenous burst as
\begin{equation*}
\bar{\tau}^N(\mu^N) \ce \inf\cbra{t > \min_{i \in \{1, 2, \dots, N\}}t^i: \inf_{s \in [0, t]}\bar{\mu}^N_s \leq \zeta_t} \wedge T.
\end{equation*}
The true burst time is defined as the first of the two events: $$\tau^*(\mu^N) \ce \tau \wedge \bar\tau^N(\mu^N).$$
At burst time, the price drops by a fraction $\beta_t$ of the bubble component $\gamma_t$, defined by 
\begin{equation}\label{eq:bubble_component}
\gamma_t \ce \intt b(s, \max_{u\leq s} P^+_u, P^+_s)ds. 
\end{equation}
The function $\beta: [0,T] \to \R_+$ is referred to as the ``size'' of the bubble  \cite{AbreuBrunnermeier03}, or the ``loss amplitude'' in the JLS model \cite{JLS4}.

The setup becomes an optimal execution problem after the crash. Trades convey information which has a long-term impact to the price dynamics. When the bubble is present, the frenzy of the bubble growth dominates the impact from selling. However, after the crash, the asset price is governed by price impact within the short horizon. We use the game-theoretic extension of the model by \citet{AlmgrenandChriss01}, where the aggregate trading rate determines the instantaneous price impact. 

Let $\rho: A \to \R$ be a concave (hence also continuous) function for the instantaneous impact. See \cite{BOUCHAUD200957, MastromatteoTothBouchaud14, Amazing} for reasons of the concavity of price impact. 
Define the empirical measures of controls $\theta^N$ and the permanent price impact term $\qv{\rho, \theta^N_t}_{F_p^{N, \vec p}}$ as

\begin{equation}\label{empirical_meantheta}
\theta^N_t \ce \avg \sumN \delta_{\alpha^i_t}, \quad \qv{\rho, \theta^N_t}_{F_p^{N, \vec p}} \ce \frac{\ind{N_{in}(t, P, \vec{p}) \neq 0}}{N_{in}(t, P; \vec{p})}\sumN\rho(\alpha_t^i)  = \frac{\ind{N_{in}(t, P, \vec{p}) \neq 0}}{F^{N, \vec{p}}_{p}(\max_{s \leq t}P_s)}\int_A \rho(a) \theta^N_t(da).
\end{equation}
This is the second source of interaction among players currently in the game. Since by definition $\alpha^i$ and $X^i$ are both $0$ before entry, there is a factor $1/F^N_p$ in both \eqref{empirical_meanmu} and \eqref{empirical_meantheta} before the integral. After burst, the bubble trend is no longer present, so the price $P^-$ after burst follows on $[\tau^*, T]$
\begin{equation}\label{eq:PriceDynamics-}
dP^-_t = \qv{\rho, \theta^N_t}_{F_p^{N, \vec p}} + \sigma_0 dB_t, \quad P^{-}_{\tau^*} = P^+_{\tau*} - \beta_{\tau^*}\gamma_{\tau^*}.
\end{equation}
Define $D^*_t = \ind{t \geq \tau^*}$. Using chain rule on $P_t = P^+_t(1-D_t^*)  + P^-_tD_t^*$ gives us the price dynamics
\begin{equation}
    \begin{split}\label{eq:PriceDynamics}
            dP_t & = (1-D_t^*)dP^+_t - P_t^{+}dD_t^* + D_t^*dP^-_t + P^-_tdD_t^*\\
            & = \ind{t < \tau^*}dP^+_t + \ind{t \geq \tau^*}dP^-_t - \gamma_{\tau^*}\beta_{\tau^*}dD^*_t.
    \end{split}
\end{equation} 

\subsubsection{Objective and Equilibrium}
Player $i$'s cash process is modeled by $$dK^i_t = -\alpha^i_tP_t - \kappa(\alpha^i_t)dt, \quad K^i_0 \sim \nu_k.$$
where $\kappa(\cdot)$ is a continuous, strictly convex function satisfying $\kappa(0) = 0$ that measures the temporary price impact that affects only the individual trader and not the price itself. The well-known linear temporary impact \cite{AlmgrenandChriss01} corresponds to $\kappa$ being quadratic. See also \cite[Section 2.1]{CarmonaLacker15} for choosing $\kappa$ as the antiderivative of $\rho$. Note that the cash process remains at the initial endowment until the player enters the game, since $\alpha^i$ is kept at $0$. Under the usual self-financing condition, the pre-burst wealth $V^i$ of this player follows
\begin{align*}
    dV^i_t & = dK^i_t + X^i_tdP_t + P_tdX^i_t\\
    & = \pa{- \kappa(\alpha^i_t) + X_t^ib(t, \max_{s\leq t} P^+_s, P^+_t)\ind{t < \tau^*} +X_t^i\qv{\rho, \theta^N_t}_{F_p^{N, \vec p}}\ind{t \geq \tau^*} }dt\\
    & \qquad - X^i_t\beta_t\gamma_tdD^*_t + \sigma_0X_t^idB_t + \sigma P_t dW_t^i.
\end{align*}

The players are allowed to continue trading until $T$, even if the burst has already happened. But by definition of riding a bubble, the players do not believe in the fundamental value of the asset. Therefore, we impose a quadratic terminal inventory penalty $c(X^i_T)^2$ with $c > 0$ to encourage selling. For a fixed $\phi > 0$, we also impose a quadratic running inventory cost $\phi (X_t^i)^2$ which Cartea \textit{et al.} \cite{Carteaetal17} refer to as \emph{ambiguity aversion}.  Adding these costs to the negative of increase in wealth, we have the total cost of player $i$ that she wants to minimize: 
\begin{equation}\label{eq:Nobj}
	\begin{split}
		J^{N, i}(\balpha, \vec{p}) & \ce \E\sqbra{-(V_T^i - V_{t^i}^i) + \int_{t^i}^T\phi (X_t^i)^2dt + c(X_T^i)^2}  = \E\sqbra{c(X^i_T)^2 + X^i_{\tau^*}\beta_{\tau^*}\gamma_{\tau^*}}\\
  &+ \E\sqbra{\int_{t^i}^{T} \pa{\kappa(\alpha^i_t) + \phi (X^i_t)^2 - X_t^ib(t, \max_{s\leq t} P^+_s, P^+_t)\ind{t < \tau^*} + X_t^i\qv{\rho, \theta^N_t}_{F_p^{N, \vec p}}\ind{t \geq \tau^*}}dt}
	\end{split}
\end{equation}
for given vectors of strategies $\balpha = (\alpha^1, \dots, \alpha^N)$ and entry thresholds $\vec{p} = (p^1, \dots, p^N)$. The interaction among the players appears both in the price impact term through the average trading speed and also the burst time through the average inventory. We refer the readers to \cite{TangpiWang22} for more details on the model.

It is well-known that finite-player games of this type quickly becomes intractable as $N$ increases. Since the phenomenon of bubble riding fits naturally in the large-population setting, we shift our focus directly to the mean field limit of the game described above. 

\subsection{Mean Field Game Setup}
Let $(\Omega, \F, \P)$ be a probability space that supports independent $(K_0, W, B, \tau)$, whose law under $\P$ is $\nu_K \times \W \times \W \times \nu_\tau$ where $\W$ is the (one dimensional) Wiener measure. Let $\mathbb{F} = (\F_t)_{t \in [0, T]}$ be a $\P$-completed filtration defined on this probability space such that $W, B$ are $(\mathbb{F}, \P)$ Brownian motions, the initial wealth $K_0$ is $\F_0$-measurable, and exogenous burst time $\tau$ is not an $\mathbb{F}$-stopping time. Let $\mathbb{G}$ be the smallest filtration, on which $\tau$ is a stopping time, that contains $\mathbb{F}$. We will see in later sections that by construction, $\tau$ will in fact be a $\mathbb{G}$-totally inaccessible stopping time under mild assumptions. 

Let $\B(E)$ denote the Borel subsets of a Polish space $(E, d)$, and let $\mathcal{P}(E)$ denote the set of all probability measures on $\B(E)$. Unless specified otherwise, $\mathcal{P}(E)$ is equipped with the topology of weak convergence of measures, and $\mathcal{P}(E)$ is also a Polish space. Denote the Wasserstein space (or order 1) by $\calP_1(E)$, that is $$\calP_1(E) \ce \cbra{\mu \in \calP(E): \int_E d(x_0, x)\mu(dx) < \infty}.$$
where the choice of $x_0 \in E$ is arbitrary. Equip $\calP_1(E)$ with the 1-Wasserstein distance $$\W_1(\mu, \mu') \ce \inf_{\pi \in \Pi(\mu\times \mu')}\int_{E \times E} d(x, y)d\pi(x, y).$$
Let $D([0, T], \R)$ denote the space of all c\`adl\`ag functions from $[0, T]$ to $\R$. For a fixed $t^* \leq T$, define 
\begin{equation*}
    \X^{t^*} \ce \Big\{\bx^{t^*} \in D([0, T], \R): \bx_t^{t^*}= 0 \text{ on } [0,t^*), \text{ continuous on } [t^*, T]\Big\}\text{ and } \X^{*} \ce \bigcup_{t^* \in [0, T]} \X^{t^*}.
\end{equation*}
For each $\bx^{t^*} \in \X^*$, we require $t^*$ to be the largest value such that $\bx^{t^*} \in \X^{t^*}$ to avoid redundancies. Suppose $t_1, t_2 \in {[0, T]}$. Let $\bx^{t_1}, \by^{t_2} \in \X^{*}$. Notice that the standard sup distance $d(\bx^{t_1}, \by^{t_2} )= ||\bx^{t_1} - \by^{t_2}||_\infty$ is no longer suitable on $\X^*$ because it does not allow two processes to be close unless $t_1 = t_2$. Therefore, for each $\bx^{t^*} \in \X^*$, we can define its continuous counterpart $\bar{\bx}^{t^*} \in C([0, T], \R)$ as
 $$\bar{\bx}_t^{t^*} \ce \begin{cases}
    \bx_{t^*}^{t^*} & t \in [0, t^*)\\
    \bx_t^{t^*} & t \geq t^*.
\end{cases}
$$
We then define a metric on $\X^* \subset D([0, T], \R)$ to be 
\begin{equation}\label{supmetric}
    d_{\X^*}(\bx^{t_1}, \by^{t_2}) \ce \|\bar{\bx}^{t_1} - \bar{\by}^{t_2}\|_\infty + |t_1 - t_2|, \quad \text{and} \quad \norm{\bx^{t^*}}_{\X^*} \ce \norm{\bar{\bx}^{t^*}}_{\infty} + t^*.
\end{equation}

Define $\Theta^{A} \ce \calP([0, T] \times A)$ with time marginal being Lebesgue, and define
\begin{equation}\label{Definition:Mnew}
    \M \ce \cbra{\mu \in \mathcal{P}(\X^*): \int_{\X^*} \norm{\bx}_{\X^*}\mu(d\bx) < \infty} = \calP_1(\X^*).
\end{equation} 
Equip $\M$ and $\Theta^A$ with the topology of 1-Wasserstein convergence and weak convergence, respectively. Each $\theta\in \Theta^A$ can uniquely disintegrate into $\theta(dt, da) = \theta_t(da)$ with some measurable map $t \mapsto \theta_t \in \calP(A)$. Let $F_p$ be the CDF of the price threshold distribution $\nu_p$. Each $\mu \in \M$ can also be viewed as a $\calP(\R)$-valued process $t \mapsto \mu_t$, where $\mu_t = \mu \circ \calE_t^{-1}$ with $\calE_t: \X^* \to \R$ being the time coordinate mapping. Let $\mu$ and $\theta$ be the law of $X$ and $\alpha$. Then the average inventory among the players in the game and their price impact are the natural limit of \eqref{empirical_meanmu} and \eqref{empirical_meantheta}, namely:
$$\bar\mu_t \ce \frac{1}{F_p(\max_{s \leq t} P_s)}\qv{x, \mu_t}, \quad \qv{\rho, \theta_t}_{F_p} \ce \frac{1}{F_p(\max_{s \leq t} P_s)}\qv{\rho, \theta_t}.$$
\subsubsection{Burst Time}
While the exogenous burst time is the same as the $N$-player game, the endogenous burst now depends on $\bar\mu$. To ensure that there are already players in the game at time $t = 0$, we assume that $\nu_p(\{P_0\}) > 0$ and define $$\bar\tau(\mu) \ce \inf\cbra{t \in [0, T]: \inf_{s\in[0, t]}\bar\mu_s \leq \zeta_t} \wedge T,$$
where $\zeta:[0,T] \to \R^+$ is deterministic, continuous and strictly increasing with $\zeta_0 \in (0, \E[K_0]/P_0)$. The upper bound is set so that the bubble at least survives the initial players. These conditions guarantee enough regularity of $\bar\tau$ for the equilibrium to exist (see \cite[Sections 2.1.5 and 6.1]{TangpiWang22}). The actual burst time is again  $\tau^*(\mu) = \bar\tau(\mu) \wedge \tau$. Throughout the paper, we work under the following assumptions.
\begin{assumption}{E}\label{Assumption:E}~
\begin{enumerate}[label={(E\arabic*)}]
    \item \label{Assumption:E1}$A$ is a compact interval that includes $0$. 
    % \item \label{Assumption:E2} The distribution of price thresholds satisfies $\nu_p(\{P_0\}) > 0$ and $\nu_p$ has no more atoms on $(P_0, \infty)$.
    \item \label{Assumption:E2}$\nu_\tau$ is absolutely continuous with respect to Lebesgue measure on $\R^+$ satisfying $\nu_\tau({\tau > T}) > 0$. Moreover, its deterministic, non-negative intensity process $k: \R_+ \to \R_+$ is bounded by some $C_k > 0$ on $[0, T]$. 
    \item \label{Assumption:E3} $K_0$ has all moments. $\rho: A \to \R$ is locally bounded.
    \end{enumerate}
\end{assumption}
For a c\`{a}dl\`{a}g process $Y$, define $M^Y_t = \sup_{0 \leq s \leq t}Y_s$. Observe from Examples \ref{Example:1} and \ref{Example:2} that the bubble trend function $b$ depends on the running maximum of the price $P$ naturally through the CDF of entry thresholds. That is, the dynamics of $P$ depends on $F_p(M^P_t)$ where $F_p$ is not necessarily Lipschitz continuous. Therefore, the SDE defining the price process may not be well-posed. We will show in the appendix that appropriate growth and monotonicity conditions on $b$ below, which we make as standing assumptions on the bubble, allow to obtain well-posedness.
\begin{assumption}{B}\label{Assumption:B}~
\begin{enumerate}[label={(B\arabic*)}]
    \item \label{Assumption:B1}The bubble function $b: [0, T] \times [0, 1] \times \R \to \R$ is non-negative and satisfies the assumptions in Proposition \ref{Prop:PriceSDEWellPosed}.
    \item \label{Assumption:B2}The bubble size $\beta: [0, T] \times \Omega \to \R$ is a positive, continuous, bounded $\mathbb{F}$-progressively measurable process.
    %\item \label{Assumption:B3}The price impact function $\rho: A \to \R$ is continuous.
\end{enumerate}
\end{assumption}
Using Proposition \ref{Prop:PriceSDEWellPosed}, the price dynamics \eqref{eq:PriceDynamics} is well defined for a fixed $(\mu, \theta) \in \M \times \Theta^A$, namely
$$P_t = \ind{t < \tau^*)}P^+_t + \ind{t \geq \tau^*}P^-_t,$$
where the pre-burst price $P^+$ follows \eqref{eq:PriceDynamics+} and the post-burst price $P^-$ follows
\begin{align*}
    P_t^- &= P_{\tau^*}^+ - \beta_{\tau^*}\gamma_{\tau^*} + \int_{\tau^*}^t
\qv{\rho, \theta_s}_{F_p}ds + \sigma_0(B_t - B_{\tau^*})\\
& = P_0 + \gamma_{\tau^*} + \sigma_0B_{\tau^*} - \beta_{\tau^*}\gamma_{\tau^*} + \int_{\tau^*}^t
\qv{\rho, \theta_s}_{F_p}ds + \sigma_0(B_t - B_{\tau^*})\\
& = P_0 + \int_{\tau^*}^t
\qv{\rho, \theta_s}_{F_p}ds + \sigma_0B_t + (1-\beta_{\tau^*})\gamma_{\tau^*}.
\end{align*}
The bubble component $\gamma$ is defined by \eqref{eq:bubble_component}. Before crash, the bubble component dominates the drift of the price dynamics, whereas the impact term takes over after the the crash. Note that the price has exactly one jump at $\tau^*$, and the jump size $-\beta_{\tau^*}\gamma_{\tau^*}$ is always negative.
\subsubsection{Entry Time}\label{Subsub:EntryTime}
Since the underlying asset starts at a known value $P_0$, the entry threshold should be at least this amount, that is, $\nu_p$ is a distribution on $[P_0, \infty)$. Consider the product probability space $(\Omega \times [P_0, \infty), \F \otimes \B([P_0, \infty)), \P \otimes\nu_p)$. We naturally extend $(K_0, W, B, \tau)$ on the product space. The representative player enters the game at a random $\mathbb{G}$-stopping time $\T(p^*)$, $p^* \in [P_0, \infty)$, where $\T: \Omega \times [P_0, \infty) \to [0, T] \cup \{T + 1\}$ is given by
\begin{equation}\label{eq:EntryTime}
\T(p^*) \ce \inf\{t \in [0, T]: P_t \geq p^*\}\wedge (T+1), \quad \P-\as
\end{equation}
The value $T+1$ is again arbitrarily chosen. Then $\T(p^*)$ is a bounded $\mathbb{F}^B$ stopping time for every $p^* \in [P_0, \infty)$.

\begin{lemma}\label{Lemma:p*measurable}
    Consider $\T: \Omega \times [P_0, \infty) \to [0, T] \cup \{T + 1\}$ in equation \eqref{eq:EntryTime}. For $\P$-almost every $\omega$, the function $\T(\omega, \cdot)$ is strictly increasing until and if it reaches $T+1$. Moreover, it is left continuous with right limit on $[P_0, \infty)$, and $\T$ is jointly measurable.
\end{lemma}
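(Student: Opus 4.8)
The plan is to reduce the statement to elementary properties of the running maximum $M^P_t=\sup_{0\le s\le t}P_s$ of the realized price path and of its generalized inverse. The starting point is that, under Assumptions~\ref{Assumption:B1}--\ref{Assumption:B2}, the price process $P$ of \eqref{eq:PriceDynamics} is c\`{a}dl\`{a}g with a single jump, at the burst time $\tau^*$, of size $-\beta_{\tau^*}\gamma_{\tau^*}\le 0$, since $\beta_{\tau^*}>0$ and $\gamma_{\tau^*}=\int_0^{\tau^*}b(s,M^{P^+}_s,P^+_s)\,ds\ge 0$ by non-negativity of $b$. Hence $t\mapsto M^P_t$ is continuous and non-decreasing with $M^P_0=P_0$, $\P$-a.s. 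Because $P$ has no positive jumps it can only reach a level $p^*>P_0$ by crossing it continuously; consequently, on the $\P$-full event $\Omega_0$ on which the crash does not occur exactly at a time where $P^+$ attains a new running maximum, one has $P_{\T(p^*)}=p^*$ whenever $\T(p^*)\le T$ and
\begin{equation*}
\T(\omega,p^*)=\inf\{t\in[0,T]:M^P_t(\omega)\ge p^*\}\wedge(T+1),\qquad\omega\in\Omega_0,
\end{equation*}
so that $\T(\omega,\cdot)$ is the left-continuous generalized inverse of $t\mapsto M^P_t(\omega)$, extended by the value $T+1$ above $M^P_T(\omega)$. To justify $\P(\Omega_0)=1$ I would split $\tau^*=\tau\wedge\bar\tau$: on $\{\tau<\bar\tau\}$ the time $\tau$ is independent of $B$ with an absolutely continuous law while $\{t:P^+_t=M^{P^+}_t\}$ is a.s.\ Lebesgue-null, so $\P(P^+_\tau=M^{P^+}_\tau,\ \tau<\bar\tau)=0$; the contribution of $\bar\tau$ is treated the same way, using the regularity of $\bar\tau$ recorded in \cite[Section 6.1]{TangpiWang22}.

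Granting this identification, the remaining assertions are standard facts about generalized inverses. Monotonicity of $\T(\omega,\cdot)$ is immediate because $\{t:M^P_t\ge p^*\}$ shrinks as $p^*$ grows, and $\{p^*:\T(\omega,p^*)<T+1\}=[P_0,M^P_T(\omega)]$. For strict monotonicity on that interval, take $P_0\le p_1<p_2\le M^P_T(\omega)$ and set $t_1:=\T(p_1)$: then $M^P_{t_1}=p_1<p_2$ (with $M^P_0=P_0$ covering the case $p_1=P_0$), so continuity of $M^P$ gives $\varepsilon>0$ with $M^P<p_2$ on $[t_1,t_1+\varepsilon]$, whence $\T(p_2)\ge t_1+\varepsilon>\T(p_1)$; if instead $p_1\le M^P_T(\omega)<p_2$ then $\T(p_1)\le T<T+1=\T(p_2)$, and above $M^P_T(\omega)$ the map is constant equal to $T+1$. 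Right limits exist everywhere since the map is monotone, so only left-continuity is at issue: for $p^*>P_0$ and $p_n\uparrow p^*$ set $\ell:=\lim_n\T(p_n)\le\T(p^*)$; if some $p_{n_0}>M^P_T$ then $p_n>M^P_T$ for all $n\ge n_0$ and $\ell=T+1=\T(p^*)$, while otherwise $p^*\le M^P_T$, every $\T(p_n)\le T$, and $M^P_{\T(p_n)}\ge p_n$, so continuity of $M^P$ on $[0,T]$ forces $M^P_\ell\ge p^*$ and hence $\T(p^*)\le\ell$; either way $\ell=\T(p^*)$.

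Joint measurability I would get from measurability in $\omega$ together with left-continuity in $p^*$, by a Carath\'eodory-type discretization. For fixed $p^*$ the map $\omega\mapsto\T(\omega,p^*)$ is measurable, being a bounded stopping time (as recorded right after \eqref{eq:EntryTime}); for fixed $\omega$ the map $p^*\mapsto\T(\omega,p^*)$ is left-continuous by the previous step. Putting $\pi_n(p^*):=P_0+2^{-n}\lfloor 2^n(p^*-P_0)\rfloor$ and $\T_n(\omega,p^*):=\T(\omega,\pi_n(p^*))$, each $\T_n$ is $\F\otimes\B([P_0,\infty))$-measurable, being a countable combination of the measurable maps $\omega\mapsto\T(\omega,P_0+k2^{-n})$, $k\ge 0$, with Borel coefficients in $p^*$; since $\pi_n(p^*)\uparrow p^*$, left-continuity gives $\T_n(\omega,p^*)\to\T(\omega,p^*)$ pointwise, and a pointwise limit of jointly measurable functions is jointly measurable.

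The real content is concentrated in the first step — recognizing that $P$ has no upward jumps, so that $M^P$ is continuous and $\T(\omega,\cdot)$ coincides with the generalized inverse of a continuous non-decreasing function, and checking that this coincidence holds off a $\P$-null set, i.e.\ that the crash a.s.\ does not strike exactly when the pre-burst price sets a new record. Everything after that is bookkeeping, the only care-points being the behaviour at the artificial value $T+1$, where $\T(\omega,\cdot)$ jumps up but stays left-continuous, and at the left endpoint $p^*=P_0$.
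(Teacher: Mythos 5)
Your proof follows essentially the same route as the paper's: show $M^P$ is continuous because the only jump of $P$ (at $\tau^*$) is non-positive, identify $\T(\omega,\cdot)$ with the first time $M^P$ reaches the level $p^*$, read off strict monotonicity and left-continuity from this generalized-inverse structure, and conclude joint measurability by a Carath\'eodory-style argument. Your dyadic $\pi_n$ construction for the last step is exactly the content the paper packages into Lemma~\ref{Lemma:Caratheodory}, so that part matches.

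There is one genuine difference, and it is to your credit that you noticed it: the rewrite $\T(p^*)=\inf\{t:M^P_t\ge p^*\}\wedge(T+1)$, which the paper uses in its first line without comment, is \emph{not} an unconditional identity. It can fail precisely when the crash strikes at a time where $P^+$ is at a strict all-time high that has not yet been attained — then $M^P_{\tau^*}=p^*$ while $P_s<p^*$ for all $s\le\tau^*$, so $\T(p^*)$ exceeds $\inf\{t:M^P_t\ge p^*\}$, and left-continuity of $\T$ at this $p^*$ actually fails. You identify this and reduce it to the claim $\P(P^+_{\tau^*}=M^{P^+}_{\tau^*})=0$. Your Fubini argument on $\{\tau<\bar\tau\}$ is sound, because $\tau$ is independent of $B$ and has an absolutely continuous law while $\{t:P^+_t=M^{P^+}_t\}$ is a.s.\ Lebesgue-null. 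What does not go through as stated is the endogenous half: $\bar\tau$ is a first-hitting time of a process built from the common noise (and from $\mathfrak P$), so it is \emph{not} independent of $B$, and "treated the same way" is not available; the reference to the regularity of $\bar\tau$ in \cite{TangpiWang22} concerns continuity of $\mu\mapsto\bar\tau(\mu)$, which is a different kind of statement. This is an incomplete step rather than a wrong one — but since it is the one point where you deliberately go beyond the paper, it is worth either closing it (e.g.\ by arguing that on $\{\bar\tau\le\tau\}$ the event in question forces $M^{P^+}_{\bar\tau}$ to be a fixed level, and then exploiting the Brownian part of $P^+$) or explicitly flagging it as an assumption. One small bookkeeping remark: the claim $P_{\T(p^*)}=p^*$ on $\{\T(p^*)\le T\}$ actually holds unconditionally (right-continuity gives the infimum is attained and $P_{\T(p^*)}\ge p^*$, while absence of upward jumps gives $P_{\T(p^*)}\le P_{\T(p^*)-}\le p^*$), so only the generalized-inverse identity needs $\Omega_0$; you bundle the two, which is slightly misleading.
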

\begin{proof}
    Monotonicity is obvious. To see that it is strict, we can first write $\T(p^*) = \inf\{t\geq 0: M^P_t \geq p^*\}\wedge (T+ 1)$. Note that the price process $P$ is $\P$-almost surely continuous except at $\tau^*$ where there is a non-positive jump. Therefore, $M^P$ is a monotone increasing, $\P$-almost surely continuous process, which implies strict monotonicity of $\T$.

    For each $\omega$ such that $M^P$ is continuous, $\T(p^*) = T+1$ for $p^* > M^P_T$. For $p^* \leq M^P_T$, we have $\T(p^*) \in [0, T]$ and $M_{\T(p^*)} = p^*$. Take an increasing sequence $p_n \uparrow p^*$. Then $\T(p_n)$ is also an increasing sequence which converges to some $t \leq \T(p^*)$ as $n \to \infty$. If $t < \T(p^*)$, then we can find $t' \in (t, \T(p^*))$ such that $p_n \leq M^P_{t'} \leq p^*$ for all $n$ by monotonicity of $M^P$. $p_n$ converging to $p^*$ implies $M^P_{t'} = p^*$ which contradicts the definition of $\T(p^*)$. The existence of right limit follows a similar argument. Joint measurability  follows from Lemma \ref{Lemma:Caratheodory}.
\end{proof}
\subsubsection{Admissibility of Controls}
By continuity of $P$, given $\T = t^*$ we can also recover the price threshold by $p^* = P_{t^*}$. However, it is still useful to define admissibility of controls in two separate ways: one in terms of entry times, and the other in terms of entry thresholds. To simplify notation, we denote by $\mathcal{PM}(\mathbb{G})$ (resp. $\mathbb{F}$) the $\sigma$-algebra generated by the $\mathbb{G}$ (resp. $\mathbb{F}$)-progressively measurable subsets of $\Omega \times [0, T]$.
\begin{definition}
Define the following sets for admissible controls:
\begin{itemize}
    \item For $t^* \in [0, T]$, let $\A({t^*})$ denote the set of square integrable, $\mathcal{PM}(\mathbb{G})$-measurable processes $\alpha: \Omega \times [0, T] \to A$ such that $\alpha_t  = 0$ for $t \in [0, t^*)$. We also set $\A(T+1)$ to be the singleton of the constant $\bzero$ process.
    \item A \emph{time-admissible control} $\alpha$ is a process $\Omega \times [0, T] \times [0, T] \ni (\omega, t, t^*) \mapsto \alpha_t^{t^*}(\omega) \in A$ that is $\mathcal{PM}(\mathbb{G}) \otimes \B([0, T])$-measurable such that for almost all $t^*$, $\alpha^{t^*} \in \A(t^*)$. Let $\A$ denote all such strategies.
    \item A \emph{price-admissible control} $\alpha$ is a process $\Omega \times [0, T] \times [P_0, \infty) \ni (\omega, t, p^*) \mapsto \alpha_t(\omega, p^*)\in A$ that is $\mathcal{PM}(\mathbb{G}) \otimes \B([P_0, \infty))$-measurable such that for $\nu_p$-almost all $p^*$, $\alpha_t(\cdot, p^*) = 0 \as$ on the random interval $[0, \T(p^*) \wedge T)$. Let $\A^*$ denote all such strategies.
\end{itemize}
\end{definition}

By Lemma \ref{Lemma:p*measurable}, any time-admissible control $\alpha \in \A$ induces a price-admissible control by $\alpha_t(\cdot, p^*) = \alpha_t^{\T(\cdot, p^*)}(\cdot)$.  For each $\alpha \in \A^*$, the corresponding state process satisfies
\begin{equation}\label{eq:StateProcess}
    X^{\T, \alpha}_t = \ind{t \geq \T}K_0/\mathscr{P} + 
    \intt \alpha_s ds + \sigma (W_{t \vee \T} - W_{\T}), \quad t \in [0, T].
\end{equation}

\subsubsection{Objective and Equilibrium}
Following the same derivation from the N-player game, using \eqref{eq:Nobj} we can define the running cost function $f: [0, T] \times \R \times \R\times [0, T] \times \R \times A \to \R$:
\begin{equation}\label{eq:RunningCost}
f(t, x, \mb, \upeta, \varrho, a) = \kappa(a) + \phi x^2 - x\big(\mb\ind{t < \upeta} + \varrho\ind{t \geq \upeta}\big)
\end{equation}
and the terminal cost function $g: \Omega \times \R \times \R \times [0, T] \to \R$:
\begin{equation}\label{eq:TerminalCost}
g(x_1, x_2, \upeta) = c x_1^2 + \beta_{\upeta}\gamma_{\upeta}x_2.
\end{equation}
Allowing $C >0$ to vary in each step, by Assumptions \eqref{Assumption:B}, \eqref{Assumption:E} and Proposition \ref{Prop:PriceSDEWellPosed} we have
\begin{align*}
    \E\biggl[\sup_{\upeta \in [0, T], \alpha \in \A^*}|g(X^{\T, \alpha}_T, X^{\T, \alpha}_\upeta, \upeta)|^2\biggr] &\leq C\E\sqbra{1 + \pa{\frac{K_0}{p^*}}^4 + \sigma^4W_T^4 + \sup_{\upeta \in [0, T]}\gamma^2_{\upeta} + \sup_{\upeta \in [0, T]}X^2_\upeta}\\
     \leq C\E\sqbra{1 + \pa{\int_0^T b(t, M^P_t, P_t)dt}^2} &\leq C\E\sqbra{1 + \int_0^T |M^{|P|}_t|^2dt}\\
    & \leq C\E\sqbra{1 + T\sup_{t \in [0, T]}P_t^2} < \infty.
\end{align*}
Given a price process $P$, define the process $b^P_t \ce b(t, M^P_t, P_t)$. For a fixed $\theta = (\theta_t)_{t \in [0, T]}$ and $\mu = (\mu_t)_{t\in[0,T]}$, the objective which the representative player minimizes over $\A^*$ is:
\begin{equation*}
J^{\mu, \theta}(\alpha) = \E\sqbra{g(X^{\T, \alpha}_T, X^{\T, \alpha}_{\tau^*(\mu)}, \tau^*(\mu)) + \int_{\T \wedge T}^T f(s, X^{\T,\alpha}_s, b^P_s, \tau^*(\mu), \qv{\rho, \theta_s}, \alpha_s)ds}.
\end{equation*}
If the player does not enter by time $T$, her total cost is $0$. This is also true by construction, see Remark \ref{Remark:C4} below.

\begin{remark}\label{Remark:C}
   We make a note that the following set of properties of the cost structure will be utilized in the proof.
    \begin{enumerate}[label={(C\arabic*)}]
        \item \label{Remark:C1} The running cost function $f: [0, T] \times \R \times \R \times [0, T] \times \R \times A \to \R$ is (jointly) Borel measurable and can be decomposed as 
    \begin{equation*}
        f(t, x, \mb, \upeta, \varrho, a) = f_a(t, x, a) +f_b(t, x, \mb)\ind{0 \leq t < \upeta} + f_c(t,x, \varrho) \ind{t \geq \upeta}.
    \end{equation*}
    For each $t$, $f_a(t, \cdot, \cdot)$, $f_b(t, \cdot, \cdot)$ and $f_c(t, \cdot, \cdot)$ are continuous. In addition, there exists $\ell_f > 0$ such that for all $(t, \mb, \varrho,  x) \in [0, T] \times \R \times \R \times \R$ with $p \leq m$:
    $$|f_a(t, x, a)| + |f_b(t, x, \mb)| + |f_c(t, x,\varrho)| \leq \ell_f\pa{1 + |x|^2 + |\mb|^2}.$$
    \item \label{Remark:C2} 
    The terminal cost function $g: \Omega \times \R \times \R \times [0,T] \to \R$ is almost surely continuous in $(x_1, x_2, \upeta)$. In addition, there exists $C >0$ such that $\E\sqbra{\sup_{\upeta \in [0, T], \alpha \in \A^*}|g(X^{\T, \alpha}_T, X^{\T, \alpha}_\upeta, \upeta)|^2} \leq C.$
    \item \label{Remark:C3} 
    $f$ is strictly convex in $(a, x)$; $g$ is convex in $x_1$ and $x_2$.
    \item \label{Remark:C4}
    $f(t, 0, \mb, \upeta, \varrho, 0) = 0$ for any $(t, \mb, \upeta, \varrho) \in [0, T] \times \R \times [0, T] \times \R$. $g(0, 0, \upeta) = 0$ for all $\upeta \in [0,T]$.
    \end{enumerate}
    Although we will focus on the specific case of the model with cost functions \eqref{eq:RunningCost} and \eqref{eq:TerminalCost}, most of our results remain true for arbitrary costs satisfying \ref{Remark:C1} - \ref{Remark:C4}.
\end{remark}

\subsubsection{Identical Threshold Case}
A special case is where everyone has the same threshold $p^* = P_0$ and thus enters all at the beginning. Then the bubble function $b$ does not depend on $M^p$. Suppose further that $b$ also does not depend on $P$. Then the game reduces to a fixed entry time case in \cite[Proposition A.10]{TangpiWang22}. 

\subsubsection{Common Noise and Admissible Setup}
Unlike idiosyncratic noise, the presence of common noise does not vanish even when the number of players goes to infinity. As a consequence, we need to consider ``random versions'' of $(\mu, \theta)$, which we denote as $(\upmu, \upvartheta)$, to represent the \emph{conditional} probability measures given the common noise. Specifically, the probability setup should also support random variable $\mP \ce (\upmu, \upvartheta): \Omega \to \M \times \Theta^A$. Therefore, for $\alpha \in \A^*$, the objective a representative agent minimizes is
\begin{equation}\label{eq:Objective}
J^{\upmu, \upvartheta}(\alpha) = \E\sqbra{g(X^{\T,\alpha}_T, X^{\T,\alpha}_{\tau^*}, \tau^*(\upmu)) + \int_{\T \wedge T}^T f(s, X^{\T, \alpha}_s, b^{P}_s, \tau^*(\upmu), \qv{\rho, \upvartheta_s}, \alpha_s)ds},
\end{equation}
where $X^{\T, \alpha}$ follows \eqref{eq:StateProcess}.

In our setup, there are two sources of common noise to the players: a Brownian motion $B$ from the price process $P$ and a jump process $D_t = \ind{\tau \leq t}$ for the exogenous burst. For any stochastic process $Z$ and random variable $\xi$, define their natural filtration $\mathbb{F}^{Z, \xi} \ce (\F^{Z, \xi}_t)_{t \in [0, T]}$ where $\F^{Z, \xi}_t$ is the $\P$-completion of $\sigma((\xi, Z_s)_{s \in [0, t]})$. Intuitively, $(\upmu, \upvartheta)$ are conditional laws given $(B, D)$, so if we view $\mP$ as a $\calP(\R) \times \calP(A)$-valued process, it should be $\mathbb{F}^{B, D}$-adapted. The natural filtration to work with is the completion of $\mathbb{F}^{K_0, \mathscr{P}, W, B, D}$. An equilibrium of this type is called a \emph{strong} solution, which is known to be very hard to obtain (see e.g. the monograph  \cite{CarmonaBookII}). Instead, we look for a weak equilibrium in the sense of \cite{Kurtz14, CarmonaCommonNoise} where we only require $(\upmu, \upvartheta)$ to be the conditional law of state and control processes given both the common noise $(B, D)$ and the law process $\mP$ itself. 

We collect all the components from this section in the next definition in a more general setting where we do not assume that the underlying probability space has a product structure.
\begin{definition}
    An \emph{admissible probability setup} is a filtered probability space $(\Omega, \F, \mathbb{G} = (\G_t)_{t \in [0, T]}, \P)$ satisfying the usual conditions that supports the following mutually independent random elements:
\begin{enumerate}
    \item A two-dimensional Brownian motion $(W, B)$.
    \item $\G_0$-measurable initial data $\calI\ce (K_0, \mathscr{P}) \in \R\times [P_0, \infty)$ with law $\nu_K \otimes \nu_p$.
    \item A $\mathbb{G}$-stopping time $\tau$ with law $\nu_\tau$, from which we can define the jump process $D_t \ce \ind{\tau \leq t}$.
\end{enumerate}
\end{definition}
If an admissible probability setup additionally supports $\mP = (\upmu, \upvartheta)$ taking values in $\M \times \Theta^A$, we can then define $\tau^*(\upmu)$, the price process $P$ and random entry time
$$\T \ce \inf\{t \in [0, T]: P_t - \mathscr{P} \geq 0\} \wedge (T + 1).$$
Observe that $\T$ may not be defined for \emph{every} threshold value $p^* \in [P_0, \infty)$ that $\mathscr{P}$ takes, making this setup slightly weaker. Similarly, we will also weaken the notion of price-admissibility and let $\A^*$ denote the set of processes $\Omega \times [0, T] \ni (\omega, t) \mapsto \alpha_t(\omega) \in A$ that is $\mathbb{G}$-progressive measurable such that $\P$-almost surely, $\alpha_t\ind{t \in [0, \T)} = 0$. In fact, Lemma \ref{Lemma:p*measurable} ensures that our $p^*$-by-$p^*$ construction is also a particular case under this new definition.

It is worth noting that if $\tau$ is also independent from $\mP$, then $\tau$ will be an $\mathbb{F}^{\calI, B, W, D, \mP}$-totally inaccessible stopping time (see Remark \ref{Remark:Inaccessible}). In particular, if $\mathbb{G}$ is just $\mathbb{F}^{\calI, B, W, D, \mP}$, this would be a desired feature for the exogenous burst time because the admissible controls can only react to it once $\tau$ occurs but cannot anticipate it.

\begin{definition}\label{Def:weakMFGsolution}
A weak MFG equilibrium with strong control is an admissible probability setup $(\Omega, \F, \mathbb{G}, \P)$ that supports a $\G_T$-measurable random variable $\mP = (\hat\upmu, \hat\upvartheta):\Omega \to \M \times \Theta^A$, paired with optimal control $\hat{\alpha} \in \A^*$ and corresponding state process $X^{\T, \hat{\alpha}}$ satisfying \eqref{eq:StateProcess} such that
\begin{enumerate}
\item The filtration $\mathbb{G} = \mathbb{F}^{\calI, B, W, D, \mP}$.
\item $\hat\alpha$ minimizes over $\A^*$ the objective $J^{\hat{\upmu}, \hat{\upvartheta}}$ defined in \eqref{eq:Objective}. 
\item $\hat\upmu$ is a version of the conditional law of $X^{\T,\hat{\alpha}}$ given $(B, D, \mP)$ under $\P$. That is,
$$\hat\upmu_t(\cdot) = \P \pa{X_t^{\T, \hat{\alpha}} \in \cdot | \F_t^{B, D, \mP}} \text{ for almost all } t \in [0, T].$$
\item $\hat\upvartheta$ is a version of the conditional law of $\hat\alpha$ given $(B, D, \mP)$ under $\P$. That is,
$$\hat\upvartheta_t(\cdot) = \P \pa{\hat\alpha_t \in \cdot | \F_t^{B, D, \mP}} \text{ for almost all } t \in [0, T].$$
\end{enumerate}
\end{definition}
\begin{theorem}\label{Theorem:MFGExistence}
    Under Assumptions \eqref{Assumption:B} and \eqref{Assumption:E}, there exists a weak MFG equilibrium with strong control. 
%     \textcolor{red}{Moreover, the optimal control $\hat{\alpha}$ can be decomposed as 
%     $$
%         \hat{\alpha}_t = \hat{\alpha}_t^+ \ind{t \leq \tau^*(\hat\upmu)} + \hat{\alpha}^-_t(\tau^*(\hat\upmu)) \ind{t > \tau^*(\hat\upmu)}\quad dt\otimes\P\text{--a.s.}
%     $$
%     where $\hat{\alpha}^+$ is $\mathbb{F}^{\calI, B, W, \mP}$-progressive and $\hat{\alpha}^-(\cdot)$ is $\mathcal{PM}(\mathbb{F}^{\calI, B, W, \mP}) \otimes \B([0, T])$ measurable.}
\end{theorem}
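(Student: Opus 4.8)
The plan is to follow the probabilistic (compactness) approach to mean field games with common noise of \cite{CarmonaCommonNoise} (see also \cite{CarmonaLacker15, CarmonaBookII}), adapted to accommodate the price-dependent random entry time $\T$, the unbounded path-dependent drift $b^P$ through $M^P$, and the progressive enlargement generated by $\tau$. The argument splits into the two stages already announced: Section \ref{Section:WeakControl} produces a \emph{weak} equilibrium --- with controls adapted to a possibly larger, ``compatible'' filtration --- and Section \ref{Section:StrongControl} then reduces the filtration of the optimal control. As a first move I would discretize both sources of common noise along a dyadic mesh $\pi^n$ of $[0,T]$: replace $B$ by the step process $B^n$ obtained by freezing $B$ at the grid points of $\pi^n$, and replace $\tau$ by its $\pi^n$-projection $\tau^n$, incorporating the induced jump process $D^n$ into $\mathbb{F}^{\calI,B,W}$ via progressive enlargement. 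For each fixed $n$ the $\sigma$-field generated by $(B^n,D^n)$ is generated by finitely many coordinates, so the conditional laws $(\upmu,\upvartheta)$ range over a finite product of copies of $\M\times\Theta^{A}$, and finding an equilibrium of the $n$-th game becomes a genuine fixed-point problem over a closed convex subset of a nice space.

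On each such finite family of common-noise ``scenarios'' I would solve the representative player's optimal control problem in the relaxed (weak) formulation on a canonical path space. For a fixed $(\upmu,\upvartheta)$ the price $P$ (via Proposition \ref{Prop:PriceSDEWellPosed}), the entry time $\T$, the burst time $\tau^*(\upmu)$ and the state process $X^{\T,\alpha}$ of \eqref{eq:StateProcess} are all determined; Assumptions \eqref{Assumption:B} and \eqref{Assumption:E} supply the moment bounds recorded in Remark \ref{Remark:C}, which, with compactness of $A$ and (lower semi)continuity of the cost, give existence of an optimal relaxed control together with compactness of the set of optimizers. The set-valued map carrying $(\upmu,\upvartheta)$ to the conditional laws associated with its best responses then has closed convex values and a closed graph, so the Kakutani--Fan--Glicksberg theorem yields a weak equilibrium of the $n$-th discretized game; strict convexity of $f$ in $(a,x)$ and convexity of $g$ (Remark \ref{Remark:C}\ref{Remark:C3}) allow replacing the relaxed optimizer by a genuine $A$-valued control.

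Next I would let the mesh of $\pi^n$ shrink and establish tightness of the laws of $(\calI,W,B,D^n,\upmu^n,\upvartheta^n,X^{\T,\hat\alpha^n})$ together with the relaxed lift of $\hat\alpha^n$, on the product of the relevant canonical spaces, with $\X^*$ carrying the metric $d_{\X^*}$ of \eqref{supmetric}; the uniform moment estimates control $X^{\T,\hat\alpha^n}$ and tightness of $\upmu^n$ in $\M$, while $B^n\to B$ and $D^n\to D$ hold by construction. Passing to a convergent subsequence via Skorokhod representation, I would then verify in the limit: (i) $(\hat\upmu,\hat\upvartheta)$ is a version of the conditional law of $(X^{\T,\hat\alpha},\hat\alpha)$ given $(B,D,\mP)$ --- here the bespoke metric on $\X^*$ is exactly what makes $X^{\T,\alpha}$ depend continuously on $\T$, so the entry mechanism survives the limit; (ii) the compatibility (immersion) property, and that $\tau$ remains an $\mathbb{F}^{\calI,B,W,D,\mP}$-totally inaccessible stopping time, using independence of $\tau$ from the common Brownian noise and boundedness of its intensity (Remark \ref{Remark:Inaccessible}); and (iii) optimality of $\hat\alpha$ for $J^{\hat\upmu,\hat\upvartheta}$, obtained by lifting an arbitrary competitor along the Skorokhod coupling and invoking continuity of $f$ and $g$, convergence of the interaction-through-controls term $\qv{\rho,\upvartheta^n_s}$ (since $\rho$ is continuous on the compact $A$), and the convergence $\tau^*(\upmu^n)\to\tau^*(\hat\upmu)$, which rests on the regularity of $\bar\tau$ in $\mu$ guaranteed by the assumptions on $\zeta$.

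Finally, the equilibrium obtained above is a priori only weak. Following Section \ref{Section:StrongControl}, I would reduce it using convexity of $f$ and $g$ in $(x,a)$: replace $\hat\alpha$ by its conditional expectation given $\mathbb{F}^{\calI,B,W,D,\mP}$ (a mimicking argument), which preserves the conditional marginals of the state process and, by Jensen's inequality, does not increase the cost, hence remains a best response; one then checks that the conditional-law property is preserved, producing a weak MFG equilibrium with strong control in the sense of Definition \ref{Def:weakMFGsolution}. The main obstacle is the combination of (i) and (iii): passing the common-noise-dependent random entry time $\T$ through the limit while simultaneously maintaining the fixed-point (conditional-law) relation and optimality, given that $\T$ is a discontinuous functional of the price path --- hence of $(\upmu,\upvartheta)$ --- the drift $b^P$ is unbounded and path-dependent through $M^P$, and the endogenous burst couples back through $\bar\mu$; reconciling all three requires both the tailor-made topology on $\X^*$ and delicate uniform-integrability estimates on the price and state processes.
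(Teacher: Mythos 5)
Your plan reproduces the high-level architecture of the paper's proof --- discretize the two sources of common noise $(B,D)$ along a dyadic mesh, solve a fixed-point problem for each discretization, pass to a weak limit, verify consistency/optimality/compatibility, and then reduce the filtration of the optimal control by convexity --- so the overall strategy is the correct one. But the implementation diverges from the paper at two technical junctions, and there is one ingredient the paper relies on heavily that your proposal omits.

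\emph{Fixed-point step.} You propose a set-valued best-response map and Kakutani--Fan--Glicksberg. The paper instead works in the weak (Girsanov) formulation with BSDEs: for a fixed $(\upmu,\upvartheta)$ it solves a BSDE with jumps in the progressively enlarged filtration $p^*$-by-$p^*$ (Proposition \ref{Prop:OptimalControl}), recovers the unique minimizer $\hat\alpha_t=\hat a(Z_t)$, and shows the resulting \emph{single-valued} map $\Phi^N$ is continuous (Lemma \ref{Lemma:ContFixedPoint_N}, using BSDE stability and Pinsker's inequality for total-variation convergence of $\P^{\hat\alpha^n}$), so Brouwer suffices. Strict convexity (Remark \ref{Remark:C3}) is what makes the best response single-valued; Kakutani would of course also apply, but you would still need Girsanov or an equivalent device to turn ``best response'' into ``conditional law of a state process on a common space,'' and you would need a closed-graph argument that duplicates the stability work anyway. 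The two routes are comparably hard; the BSDE route has the additional benefit of giving the $p^*$-by-$p^*$ construction and the joint measurability in $p^*$ (Lemma \ref{Lemma:p*measurable}) almost for free, which is how the paper handles the random entry time that you correctly identify as the main obstacle.

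\emph{Passage to the limit and compatibility.} You plan to carry the marginals $(\upmu^n,\upvartheta^n)$ and the state/control processes and use Skorokhod representation. The paper does not use Skorokhod; it works directly on canonical spaces with weak limit points, truncates the (unbounded) costs $f^k,g^k$, and uses uniform integrability from \eqref{eq:Xestimate} to pass optimality to the limit (Lemma \ref{Lemma:Optimality}). This is a stylistic difference. More substantively, the paper \emph{lifts} the environment: instead of carrying only $(\upmu^n,\upvartheta^n)$, it carries the full conditional joint law $\mM^N=\L^{\alpha^N,\cdot}(X^\T,W^N,\upgamma^N,\calI)$ (or its product-of-marginals variant $\widetilde{\mM}^N$) as an extra random element. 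This lift is not a convenience; it is the mechanism that makes the compatibility (immersion) proof in Lemma \ref{Lemma:Immersion} go through, because the argument conditions on $\mM^\infty$ and uses the consistency identity from Lemma \ref{Lemma:Consistency} to decouple $W^\infty$-increments from $(B,\mM^\infty,D)$. Your proposal lists ``the compatibility (immersion) property'' as something to verify in the limit but gives no mechanism, and with only the marginals $(\upmu^n,\upvartheta^n)$ in hand there is no obvious way to run the conditional-independence computation. You would need to add the lifted environment (or an equivalent enrichment of the canonical space) to close this step; as written, this is a genuine gap.

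\emph{Filtration reduction.} Your final step (take the optional projection/conditional expectation of $\hat\alpha$ onto $\mathbb{F}^{\calI,B,W,D,\mP}$ and invoke Jensen plus strict convexity) is exactly what the paper does in Section \ref{Section:StrongControl}, including the observation that the conditional-law consistency is preserved because the state equation is affine in the control. That part is correct and essentially identical.
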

\section{Existence of MFG Solutions with Weak Control}\label{Section:WeakControl}
\subsection{Weak Controls}
The term ``strong control'' in Definition \ref{Def:weakMFGsolution} refers to the fact that $\hat{\alpha}$ is an $A$ valued process that is $\mathbb{F}^{\calI, B, W, D, \mP}$-progressive. We shall prove Theorem \ref{Theorem:MFGExistence} by following the chain of arguments presented in \cite{CarmonaCommonNoise}. Specifically, we use a fixed point and compactness argument by discretizing the common noise $(B, D)$ and then taking weak limit to obtain an equilibrium. To ensure that the limit exists, we first work with \emph{relaxed controls} in a larger filtration.

\subsubsection{Relaxed Controls}
Since the space of uniformly bounded functions is not compact, a standard workaround when analyzing extended MFGs, especially in the presence of common noise, is to consider \emph{relaxed} controls. A relaxed control is a randomized strategy taking values in $\Gamma$ where 
\begin{equation*}
    \Gamma \ce \{\upgamma \in \calP([0, T] \times A) \text{ with time marginal being the Lebesgue measure}\}.
\end{equation*}
Any $\upgamma \in \Gamma$ can be characterized, with dt a.s. uniqueness, by the form $\upgamma(dt, da) = (\upgamma_t(da)dt)_{t \in [0, T]}$ where $t \mapsto \upgamma_t \in \calP(A)$ is a Borel measurable mapping. Therefore, we can view each $\upgamma \in \Gamma$ as a $\calP(\calP(A))$-valued process. For a given admissible probability setup, the set of admissible relaxed controls is defined as 
$$\bbGamma\ce \cbra{\upgamma \in \Gamma \text{ that is } \mathbb{G}\text{-progressive such that $\P$-almost surely } \upgamma_t\ind{t \in [0, \T)} = \delta_{0} \ \forall t \in [0,T]}.$$
A \emph{strict} control refers to the case where $\upgamma_t$ is $\P$-almost surely a Dirac measure almost everywhere. The state process corresponding to a relaxed control $\upgamma \in \bbGamma$ is 
\begin{equation}\label{eq:StateProcess_Relaxed}
    X^{\T, \upgamma}_t = \ind{t \geq \T}K_0/\mathscr{P} + 
    \intt\int_A a\upgamma(ds, da) + \sigma (W_{t \vee \T} - W_{\T}), \quad t \in [0, T].
\end{equation}
Define $\Theta$ as the subset of $\mathcal{P}([0, T] \times \mathcal{P}(A))$ whose first projection is Lebesgue measure $dt$ on $[0, T]$. Any $\theta \in \Theta$ can be characterized, with $dt$ a.s. uniqueness, by $\{\theta_t \in \mathcal{P}(\mathcal{P}(A))\}_{t \in [0, T]}$ such that $\theta(dt, dq) = \theta_t(dq)dt$. We naturally extend any bounded measurable function $F: \mathcal{P}(A)\to \R$ to $\underline{F}: \mathcal{P}(\mathcal{P}(A)) \to \R$ by 
\begin{equation*}
\underline{F}(\theta) \ce \int_{\mathcal{P}(A)}\theta(dq)F(q).   
\end{equation*}
In particular, $\underline{F}(\delta_q) = F(q)$ for $q \in \mathcal{P}(A)$. Recall from Remark \ref{Remark:C1} that we have separability between $a$ and $q$ in the cost $f$. Therefore, when evaluating $f$ (or rather its extension) on an element of $\calP(\calP(A))$, we can drop the underline from the notation to avoid further confusion. In particular, for a bounded measurable function $\rho: A \to \R$, sometimes we slightly abuse the notation by using $\qv{\rho, \theta_t}$ to mean $\qv{\rho, \int_{\calP(A)}\theta_t(dq)}$ if $\theta$ is in $\Theta$ instead of $\Theta^A$. Endow $\Theta$ with the stable topology, which is the weakest topology making the map $\theta \to \int\phi d\theta$ continuous, for each bounded measurable function $\phi: [0, T] \times \mathcal{P}(A) \to \R$ that is continuous in the measure variable for each $t$. Since $A$ is convex, compact and metrizable, so is $\Theta$. See \cite{JacodMemin81} for details. 

The version of objective function \ref{eq:Objective} for relaxed controls is 
\begin{equation}\label{eq:Objective_Relaxed}
J^{\upmu, \upvartheta}(\upgamma) = \E\sqbra{g(X^{\T,\upgamma}_T, X^{\T,\upgamma}_{\tau^*}, \tau^*(\upmu)) + \int_{\T \wedge T}^T \int_A f(s, X^{\T,\upgamma}_s, b^{P}_s, \tau^*(\upmu), \qv{\rho, \upvartheta_s}, a)\upgamma(da,ds)}.
\end{equation}
Notice from \eqref{eq:Objective_Relaxed} and \eqref{eq:StateProcess_Relaxed} that $A$-valued controls are naturally embedded in the space of relaxed controls in the form of strict controls. 

\subsubsection{Immersion Property and Lifted Environment}
In this section, we will also weaken the first requirement in Definition \ref{Def:weakMFGsolution} and work with a filtration $\mathbb{G}$ that is potentially larger than $\mathbb{F}^{\calI, B, W, D, \mP}$. Allowing more information into the system immediately requires extra care to ensure fairness in observing that additional information. A widely-used procedure is to check that $\mathbb{F}^{\calI, B, W, D, \mP}$ is \emph{immersed} in $\mathbb{G}$. This notion of fairness is also called the (H)-hypothesis, natural extension, or compatibility. It is a crucial property in the theory of filtration enlargement \cite{KharroubiLim14, PE2}, stochastic control \cite{Kurtz14, KarouiNguyenJeanblanc87}, the theory of conditional McKean-Vlasov SDEs \cite{Lacker2020SuperpositionAM} and of course mean field games \cite{CarmonaCommonNoise}.

\begin{definition}\label{Definition:Immersion}
    A filtration $\mathbb{H}$ is said to be \emph{immersed} in another filtration $\mathbb{F}$ defined on the same probability space if $\mathbb{H} \subset \mathbb{F}$ and every square integrable $\mathbb F$-martingale is a square integrable $\mathbb H$-martingale. An $\mathbb{F}$-adapted c\`{a}d-l\`{a}g process $\upmu = (\upmu_t)_{t \geq 0}$ with values in a Polish space is \emph{compatible} with $\mathbb{F}$ if its natural filtration $\mathbb{F}^\upmu$ is immersed in $\mathbb{F}$.
\end{definition}

The following proposition is a useful characterization of this property and explains how compatibility weakens the adaptedness to a conditional independence requirement, which is mainly a property of laws. See e.g. \cite[Proposition 1.3]{CarmonaBookII} for a proof.

\begin{proposition}\label{Prop:Immersion}
    On probability space $(\Omega, \F, \P)$, consider two filtrations $\mathbb{H} = (\H_t)_{t \in [0, T]} \subset \mathbb{F} = (\F_t)_{t \in [0, T]}$. The following statements are equivalent.
    \begin{enumerate}
        \item $\mathbb{H}$ is immersed in $\mathbb{F}$.
        \item $\H_T$ is conditionally independent from $\F_t$ given $\H_t$ for every $t \in [0, T]$.
        \item For any $\zeta \in \mathbb{L}^1(\F_t)$, $\E[\zeta | \H_T] = \E[\zeta |\H_t]$.
    \end{enumerate}
\end{proposition}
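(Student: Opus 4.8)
The plan is to prove the cycle $(1)\Rightarrow(2)\Rightarrow(3)\Rightarrow(1)$, organised around the auxiliary statement
\begin{equation*}
(\star)\qquad\text{for every }t\in[0,T]\text{ and every bounded }\H_T\text{-measurable }Y,\quad\E[Y\mid\F_t]=\E[Y\mid\H_t].
\end{equation*}
Since $\H_t\subseteq\F_t$ forces $\sigma(\F_t,\H_t)=\F_t$, statement $(\star)$ is, by the standard characterisation of conditional independence, exactly statement (2). Hence it suffices to establish $(1)\Rightarrow(\star)$, $(\star)\Rightarrow(3)$, and $(3)\Rightarrow(1)$. Throughout I would reduce any $\mathbb L^1$ random variable to a bounded one by truncation, using that conditional expectation is an $\mathbb L^1$-contraction, so the identities below only need to be checked against bounded test variables.

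For $(1)\Rightarrow(\star)$: fix $t$ and a bounded $\H_T$-measurable $Y$, and consider the closed martingale $N_s\ce\E[Y\mid\H_s]$, $s\in[0,T]$. Since $\E[N_s^2]\le\E[Y^2]<\infty$ for all $s$, $N$ is a square-integrable $(\mathbb H,\P)$-martingale, hence by the immersion property (1) it is a square-integrable $(\mathbb F,\P)$-martingale. Writing the $\mathbb F$-martingale identity between $t$ and $T$ and using $N_T=\E[Y\mid\H_T]=Y$ gives $\E[Y\mid\F_t]=\E[N_T\mid\F_t]=N_t=\E[Y\mid\H_t]$, which is $(\star)$.

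For $(\star)\Rightarrow(3)$: for bounded $\F_t$-measurable $\zeta$ and bounded $\H_T$-measurable $Y$,
\begin{equation*}
\E[\zeta Y]=\E\big[\zeta\,\E[Y\mid\F_t]\big]=\E\big[\zeta\,\E[Y\mid\H_t]\big]=\E\big[\E[\zeta\mid\H_t]\,\E[Y\mid\H_t]\big]=\E\big[Y\,\E[\zeta\mid\H_t]\big],
\end{equation*}
using $\zeta\in\F_t$ in the first step, $(\star)$ in the second, and the tower property in the last two; since also $\E[\zeta Y]=\E\big[Y\,\E[\zeta\mid\H_T]\big]$, the $\H_T$-measurable variables $\E[\zeta\mid\H_T]$ and $\E[\zeta\mid\H_t]$ integrate identically against every bounded $\H_T$-measurable $Y$, so they coincide a.s., which (after truncation) is (3). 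For $(3)\Rightarrow(1)$: let $M$ be a square-integrable $(\mathbb H,\P)$-martingale and $s\le t$; $M$ is $\mathbb F$-adapted because $\mathbb H\subseteq\mathbb F$, and still square-integrable. For bounded $\F_s$-measurable $Z$ and $M_t^{(n)}\ce(M_t\wedge n)\vee(-n)$, statement (3) (applied at time $s$ to $\zeta=Z$) and the $\mathbb H$-martingale property give
\begin{equation*}
\E\big[M_t^{(n)}Z\big]=\E\big[M_t^{(n)}\,\E[Z\mid\H_s]\big]=\E\big[\E[M_t^{(n)}\mid\H_s]\,\E[Z\mid\H_s]\big].
\end{equation*}
Letting $n\to\infty$ — dominated convergence on the left since $|M_t^{(n)}|\le|M_t|\in\mathbb L^1$, and $\mathbb L^1$-convergence $\E[M_t^{(n)}\mid\H_s]\to\E[M_t\mid\H_s]=M_s$ against the bounded factor $\E[Z\mid\H_s]$ on the right — yields $\E[M_tZ]=\E[M_sZ]$, first for bounded $Z$ and then for $Z=\mathbf 1_F$, $F\in\F_s$. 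Hence $\E[M_t\mid\F_s]=M_s$, i.e.\ $M$ is a square-integrable $(\mathbb F,\P)$-martingale, which is (1).

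I do not expect a serious obstacle here — one may even simply invoke \cite[Proposition 1.3]{CarmonaBookII}. The one genuine point of care, and the reason for passing through $(\star)$, is that the conditioning $\sigma$-algebras are not nested ($\H_t\subseteq\F_t$ and $\H_t\subseteq\H_T$, but neither of $\F_t$, $\H_T$ contains the other), so each step must be written as an equality of integrals tested against a suitable class (bounded $\H_T$- or $\F_t$-measurable variables) rather than obtained by a naive iterated conditioning. The remaining bookkeeping issue is the mismatch between the $\mathbb L^2$-martingale formulation of immersion and the $\mathbb L^1$ formulation of (3), handled by the truncation-and-limit passage in the last step.
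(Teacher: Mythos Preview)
Your proof is correct and follows the standard route; the paper itself does not supply a proof of this proposition but simply defers to \cite[Proposition 1.3]{CarmonaBookII}, which you also cite. One minor remark: in the step $(3)\Rightarrow(1)$ the truncation $M_t^{(n)}$ is unnecessary, since $M_t\in\mathbb L^2\subset\mathbb L^1$ and $Z$ bounded already make $M_tZ$ integrable, so the chain $\E[M_tZ]=\E[M_t\,\E[Z\mid\H_T]]=\E[M_t\,\E[Z\mid\H_s]]=\E[\E[M_t\mid\H_s]\,\E[Z\mid\H_s]]=\E[M_s\,\E[Z\mid\H_s]]=\E[M_sZ]$ goes through directly --- but the extra step does no harm.
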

Specifically, the third statement in Proposition \ref{Prop:Immersion} allows us to eventually recover a strong control in Section \ref{Section:StrongControl} from a larger filtration. To ensure that we carry enough information in the smaller filtration for the immersion property to eventually hold, we will in a \emph{lifted environment} \cite{CarmonaBookII}. Instead of $\mP = (\upmu, \upvartheta)$, we require the admissible probability setup to support a random \emph{joint} probability measure $\mM$ that represents the conditional law of $(X^\T, W, \upgamma, \calI)$ given common noise. Let $\mM^x$ and $\mM^\upgamma$ denote its first and third marginals, which serves the same purpose of $(\upmu, \upvartheta)$ in the objective \eqref{eq:Objective_Relaxed}. Now we have all the ingredients to define a solution with weak control.

\begin{definition}\label{Def:weakMFGsolution_Relaxed}
A weak MFG equilibrium with weak control is an admissible probability setup $(\Omega, \F, \mathbb{G}, \P)$ that supports a $\G_T$-measurable random variable $\mM: \Omega \to \calP(\X^*\times \X\times \Gamma\times \R^2)$, paired with optimal \emph{relaxed} control $\hat{\upgamma} \in \bbGamma$ and corresponding state process $X^{\T, \hat{\upgamma}}$ satisfying \eqref{eq:StateProcess_Relaxed} such that
\begin{enumerate}
\item The filtration $\mathbb{F}^{\calI, B, W, D, \mM}$ is immersed in $\mathbb{G}$.
\item $\hat\upgamma$ minimizes over $\bbGamma$ the relaxed objective $J^{\mM^x, \mM^{\upgamma}}$ defined in \eqref{eq:Objective_Relaxed}.
\item $\mM^x$ is a version of the conditional law of $X^{\T,\hat{\upgamma}}$ given $(B, D, \mM)$ under $\P$. 
\item $\mM^{\upgamma}$ is a version of the conditional law of $\hat{\upgamma}$ given $(B, D, \mM)$ under $\P$.
\end{enumerate}
\end{definition}

It is worth noting that both definitions are weak in the probabilistic sense, where the probability space is part of the solution. They are also both weak in the sense of control theory, where the equilibrium strategy is not necessarily measurable with respect to the Brownian motions, but potentially depends on additional randomness. 

The usual fixed point argument using compactness no longer applies to these conditional probability measures as their domain becomes too large. To combat the infinite dimensionality issue, \citet{CarmonaCommonNoise} discretizes time and space to reduce the common noise to a finite dimension process and then pass to the limit. We adapt the discretization scheme from \cite{CarmonaBookII}, also used in \cite{Campi21}. In this section, our goal is to prove the following intermediate result.
\begin{theorem}\label{Theorem:Existence_Weak}
    Under Assumptions \eqref{Assumption:B} and \eqref{Assumption:E}, there exists a weak MFG equilibrium with weak control.
\end{theorem}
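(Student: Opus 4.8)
The plan is to adapt the discretization-and-compactness program of Carmona and Delarue \cite{CarmonaCommonNoise, CarmonaBookII} (as carried out with absorption in \cite{Campi21}) to the present extended game, which carries two sources of common noise and a price-dependent random entry time. First I would discretize the common noise: fix a mesh on $[0,T]$, replace $B$ by a finitely-valued approximation $B^n$ obtained from a space-time discretization of its increments along the mesh, and replace $\tau$, hence $D$, by $\tau^n$ obtained by rounding $\tau$ up to the mesh. With the common noise now taking finitely many values, the (lifted) conditional-law object $\mM$ ranges over a finite product of copies of $\calP_1(\X^*\times\X\times\Gamma\times\R^2)$, and on a tight — hence compact — convex subset cut out by an a priori moment bound the relevant maps are well behaved. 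Working with the lifted environment $\mM$ rather than with $(\upmu,\upvartheta)$ directly is what keeps enough information to eventually reach the immersion requirement.

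For fixed $n$ I would solve the discretized MFG by a Kakutani--Fan--Glicksberg fixed-point argument on that subset. Given a candidate conditional law $\mM^n$, the price $P$ is well defined by Proposition \ref{Prop:PriceSDEWellPosed} together with \eqref{eq:PriceDynamics}, the entry time $\T$ and the burst time $\tau^*(\mM^{n,x})$ are determined, and the representative agent faces a relaxed control problem over $\bbGamma$ with cost \eqref{eq:Objective_Relaxed}. Convexity of the relaxed problem in $\upgamma$ (strict convexity of $f$ in $(a,x)$ and convexity of $g$, cf.\ \ref{Remark:C3}, together with linearity of the state in $\upgamma$), compactness of $\Gamma$ in the stable topology, and the quadratic growth \ref{Remark:C1}--\ref{Remark:C2} make the set of optimizers nonempty, convex, compact, and upper-hemicontinuous in $\mM^n$; composing with the map sending an optimal $\hat\upgamma$ to the conditional law of $(X^{\T,\hat\upgamma}, W, \hat\upgamma, \calI)$ given the discretized noise produces a set-valued self-map with convex compact values and closed graph. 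Its fixed point is a weak MFG equilibrium with weak control for the discretized noise, and the immersion requirement in Definition \ref{Def:weakMFGsolution_Relaxed} holds trivially because the conditioning $\sigma$-field is finite.

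Next I would let $n\to\infty$. Uniform-in-$n$ moment bounds on $X^{\T,\hat\upgamma^n}$, on $M^{P^n}$ and on $\gamma^n$ — obtained from \ref{Assumption:E1}--\ref{Assumption:E3}, \ref{Assumption:B1}--\ref{Assumption:B2} and Proposition \ref{Prop:PriceSDEWellPosed}, the boundedness of $A$, and the all-moments hypothesis on $K_0$ — give tightness of the laws of $(\calI, W, B, D, \hat\upgamma^n, X^{\T,\hat\upgamma^n}, \mM^n)$, with compactness of $\Gamma$ supplying tightness of the control component. Along a convergent subsequence, using a Skorokhod representation, I would identify a limit tuple and verify the four conditions of Definition \ref{Def:weakMFGsolution_Relaxed}: the immersion/compatibility property passes to the limit because it is a property of finite-dimensional laws stable under weak convergence; the consistency conditions that $\mM^x$ and $\mM^\upgamma$ be the conditional laws of $X^{\T,\hat\upgamma}$ and $\hat\upgamma$ given $(B,D,\mM)$ follow from stability of conditional expectations under compatibility (Proposition \ref{Prop:Immersion}); and optimality of $\hat\upgamma$ follows from lower semicontinuity of the limiting cost along the convergent sequence together with an approximation of any competitor relaxed control for the limit environment by admissible controls in the pre-limit problems. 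Along the way one must check that the hitting time $\T$ and the burst time $\tau^*$ converge: the former uses Lemma \ref{Lemma:p*measurable} and the a.s.\ continuity of the limiting price path off its single jump, the latter uses the strict monotonicity and continuity of $\zeta$ and the presence of initial players, as in \cite[Section 6.1]{TangpiWang22}.

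The main obstacle is the simultaneous passage to the limit in the optimality condition and along the data chain $\mM^n \mapsto P^n \mapsto (\T^n,\tau^{*,n}) \mapsto X^{\T^n,\upgamma}$: since the entry time is itself a functional of the common noise through the price, and the price depends on the very conditional laws being approximated, one must establish continuity of this whole chain in a topology compatible with the tightness estimates, and then show that no value is lost in the limit — a limit of optimizers need only be an optimizer once one rules out any gap, which here is afforded by the convexity of the cost in the control and by the construction of near-optimal pre-limit competitors for any limiting strategy (and by uniform integrability from the moment bounds to pass the competitors' costs to the limit). A secondary but genuine difficulty is keeping the discretized problems well posed despite the unbounded, path-dependent drift $b(t,M^P_t,P_t)$: one must ensure that the monotonicity hypotheses of Proposition \ref{Prop:PriceSDEWellPosed} survive the discretization of the common noise, so that $P^n$, $M^{P^n}$ and $\gamma^n$ remain well defined with the uniform moment bounds the tightness argument requires.
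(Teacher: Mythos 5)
Your outline is recognizably the Carmona--Delarue discretize-and-pass-to-the-limit program that the paper follows, and the major milestones (discretize common noise, fixed point at each level, tightness and a lifted environment, then verify consistency/optimality/compatibility) match. However there are two places where you depart from the paper's route, and one of them creates exactly the difficulty you flag as an open obstacle — a difficulty the paper deliberately engineers around.

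First, the fixed-point map. You propose Kakutani--Fan--Glicksberg on a set-valued best-response map together with an upper-hemicontinuity argument. The paper instead works under the \emph{weak formulation} (Girsanov change of measure on the canonical space) and characterizes the optimal control by a BSDE with random entry time in the progressively enlarged filtration (Proposition~\ref{Prop:OptimalControl} and Corollary~\ref{Cor:BSDEProductSpace}). Because $\kappa$ is strictly convex, the Hamiltonian has a \emph{unique} minimizer $\hat a(z)$, so the map $(\upmu,\upvartheta)\mapsto\hat\alpha$ is single-valued. Continuity of the fixed-point map $\Phi^N$ then comes from BSDE stability (Lemma~\ref{Lemma:ContFixedPoint_N}), and Brouwer's theorem suffices (Proposition~\ref{Prop:FixedPoint}). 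Both routes should close, but the paper's buys explicit uniqueness and continuity for free from the control-theoretic structure instead of verifying closed-graph properties of an optimizer correspondence; it also sets up the stability estimates that are reused when taking $N\to\infty$.

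Second — and this is the substantive gap — you replace $B$ by a finitely valued $B^n$ \emph{in the dynamics}, so that the price becomes $P^n$ driven by the discretized noise, and you correctly observe that this raises the question of whether Proposition~\ref{Prop:PriceSDEWellPosed} and the moment bounds survive the discretization. The paper simply never discretizes the noise in the state or price equations. The process $V^N$ in \eqref{eq:V_N} (and $D^N$ in \eqref{eq:D_N}) is used \emph{only} to generate a finite conditioning $\sigma$-algebra $\calV_N$; the price SDE \eqref{eq:PriceDynamics+}, \eqref{eq:MFGPriceDynamics} is always driven by the true Brownian motion $B$, and the representative player's state SDE by the true $W$. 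The discretization lives exclusively in the piecewise-constant random-measure inputs $(\upmu^N,\upvartheta^N)$ of \eqref{eq:conditional_measures_N}, which take finitely many values indexed by the atoms $A_k$. This both eliminates the well-posedness concern you raise (there is no $P^n$) and keeps the law of $P$, $M^P$, $\gamma$ fixed across $N$, which the uniform integrability arguments in Lemma~\ref{Lemma:Optimality} quietly exploit (e.g.\ $\sup_N \E^{\P^N}[\cdot]$ collapses to a single expectation because $P$ has the same law under each $\P^N$). If you insist on discretizing $B$ inside the dynamics you would need additional, nontrivial arguments to control $P^n\to P$ and $\gamma^n\to\gamma$; the paper's route sidesteps this entirely.

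A smaller point: you suggest consistency is deduced from compatibility via Proposition~\ref{Prop:Immersion}. In the paper the two are verified separately: consistency (Lemma~\ref{Lemma:Consistency}) is obtained by passing the exact finite-dimensional conditioning identity \eqref{eq:LiftedEnvironment} to the limit using \eqref{eq:V^NProperty} and \eqref{eq:D^NProperty}; compatibility (Lemma~\ref{Lemma:Immersion}) is a separate conditional-independence computation which \emph{uses} consistency as an input, not the other way around. Your optimality argument (limit of costs along the sequence plus comparison against fixed competitors) is essentially Lemma~\ref{Lemma:Optimality} together with Remark~\ref{Remark:Optimality}, with the technical point that the paper truncates $f,g$ to get the two-sided convergence $\lim_N J^N(\upgamma^N)=J^\infty(\upgamma^\infty)$ rather than only lower semicontinuity.
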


\subsection{Weak Formulation and Enlargement of Filtration}
Since the probability space is part of the solution, it is convenient to work on the canonical space with the product structure in Section \ref{Subsub:EntryTime}. We will also work under the weak formulation as in \cite{CarmonaLacker15}. Define 
$$\Omega_1 \ce \R \times \X, \quad \Omega_0 = \X \times \R_+, \quad \Omega \ce \Omega_1 \times \Omega_0, \quad \Omega_c \ce \Omega \times [P_0, \infty)$$
and let $(K_0, W, B, \tau, \mathscr{P})$ be the corresponding identity maps. Let $\F$ be a $\sigma$-algebra carrying the above random variables. Define the corresponding probability measures $$\Q_1 \ce \nu_K\otimes \W, \quad \Q_0 \ce \W \otimes \nu_\tau , \quad \Q \ce \Q_1 \otimes \Q_0,\quad \P \ce \Q \otimes \nu_p.$$
Define entry time $\T$ in a $p^*$-by-$p^*$ way on $\Omega_c$ as \eqref{eq:EntryTime}. Lemma \ref{Lemma:p*measurable} ensures that we have an admissible probability setup. Let $X^\T$ denote the uncontrolled state variable on the product space $[0,T]\times \Omega_c$:
    \begin{equation}\label{eq:DriftlessState}
        X^{\T}_t \ce K_0/\mathscr{P} + \sigma(W_t - W_{\T}) \text{ for } t \geq \T \quad \text{and} \quad X^{\T}_t \ce 0 \text{ for } t \in [0, \T).
    \end{equation}
Given $\alpha \in \A^*$ define
\begin{equation}\label{eq:Girsanov}
    \frac{d\P^{\alpha}}{d\P} \ce \mathcal{E}\Big(\int_0^\cdot \sigma^{-1}\alpha_sd W_s \Big)_T, \quad W^{\alpha}_t \ce W_t - \int_0^t \sigma^{-1}\alpha_sds.
 \end{equation} 
 By Girsanov's theorem, and square integrability of $\alpha$, $W^\alpha$ is a Brownian motion under $\P^\alpha$ and $X^\T$ follows the state SDE \eqref{eq:StateProcess} under $\P^\alpha$. Given $(\upmu, \upvartheta)$, the cost under the weak formulation is
\begin{equation}\label{eq:MFGobjectiveWeak}
    J^{\upmu, \upvartheta}_{weak}(\alpha) \ce \E^{\P^{\alpha}}\sqbra{g(X^{\T}_T,X^{\T}_{\tau^*(\upmu)},\tau^*(\upmu)) + \int_{\T\wedge T}^Tf(s, X^{\T}_s, b^P_s, \tau^*(\upmu), \qv{\rho, \upvartheta_s},\alpha_s)ds}.\\
\end{equation}
If we fix a price threshold $p^*$, then $\Q^{\alpha(p^*)}$ is defined on $\Omega$ by $\frac{d\Q^{\alpha(p^*)}}{d\Q}$ in a similar way as $\P^{\alpha}$.

\subsubsection{Progressive Enlargement of Filtration}
We now recall some facts regarding filtration enlargement. Let $\mathbb{F} = (\F_t)_{t \in [0, T]}$ be a filtration supporting $(W, B, \calI)$ and is independent from $\tau$. Let $(\mathcal D_t)_{t \in [0,T]}$ denote the natural filtration of the exogenous burst time jump process $(D_t)_{t\in [0, T]}$. Define the progressively enlarged filtration 
\begin{equation*}
    \G_t \ce \F_t \vee \mathcal D_t, \qquad \mathbb G = (\G_t)_{t \in [0, T]}.
\end{equation*}
Note that $\mathbb{G}$ is the smallest filtration which contains $\mathbb{F}$ and such that $\tau$ is a $\mathbb{G}$-stopping time. Since $\tau$ is independent from $\mathbb{F}$, Proposition \ref{Prop:Immersion} implies that $\mathbb{F}$ is immersed in $\mathbb{G}$. In particular, $W$ and $B$ remain ($\mathbb G, \P$)--Wiener processes. 

For any $\mathbb{G}$--predictable process $h$, Assumption \ref{Assumption:E2} implies there exists a unique $\mathbb{F}$-predictable process $f$ such that $h_t\ind{t \leq \tau} = f_t\ind{t \leq \tau}$ (see \cite[Page 186 (a)]{bookDellacherie92}). Since $D$ is a $\mathbb{G}$--submartingale, by Doob--Meyer decomposition we can find a unique, $\mathbb{F}$--predictable, increasing compensator process $K$ with $K_0= 0$ and such that:
\begin{equation}\label{eq:compensatorMG}
M^{\tau}_t = D_t - \intt (1- D_{s-})dK_s
\end{equation}
is a $(\P, \mathbb{G})$--martingale. Under Assumption \ref{Assumption:E2}, we have $dK_t = k_tdt$.
\begin{remark}\label{Remark:Inaccessible}
The random time $\tau$ is a $\mathbb{G}$-inaccessible stopping time if either of the two following conditions is satisfied (see e.g. \cite{Azema93, BlanchetJeanblanc04}):
\begin{enumerate}
\item Every $\mathbb{F}$-martingale is a.s. continuous;
\item $\tau$ avoids all $\mathbb{F}$-stopping times. That is, $\P(\tau = L) = 0$ for any $\mathbb{F}$-stopping time $L$. 
\end{enumerate}
For example, if $\mathbb{F}$ is just the $\P$-completed Brownian filtration with the initial enlargement of $K$ and $\mathscr{P}$, by martingale representation theorem (1) would be satisfied. Under the non-atomic condition in Assumption \ref{Assumption:E2}, if there is independence between $\tau$ and $\mathbb{F}$, then (2) holds.  
\end{remark}

\subsection{Proof of Theorem \ref{Theorem:Existence_Weak}}\label{Subsection:BSDE_RandomEntry}
In this section, we prove the existence of equilibrium with weak controls using backward stochastic differential equations (BSDE). 
\subsubsection{Backward SDEs with Random Entry Times}
We begin by introducing a few notation of spaces and norms. For a filtration $\mathbb H$ and probability measure $\widetilde{\Q}$ on $\Omega$, define the following spaces of processes on $[s, t] \subseteq [0, T]$:
\begin{itemize}
    \item Let $\mathcal{S}^2_{\mathbb H, \widetilde{\Q}}[s,t]$ denote the space of $\R$-valued $\mathbb H$-progressively measurable, c\`{a}dl\`{a}g processes $Y$ on $\Omega\times [s, t]$ satisfying $$||Y||_{\mathcal{S}_{\mathbb{H},\widetilde{\Q}}^2[s, t]} \ce \E^{\widetilde{\Q}}\sqbra{\sup_{u \in [s, t]}\abs{Y_u}^2}^{\frac{1}{2}} < \infty.$$
    \item Let $\mathcal{H}^2_{\mathbb{H}, \widetilde{\Q}}[s,t]$ denote $\R$-valued $\mathbb{H}$-predictable processes $Z$ on $\Omega\times [s, t]$ satisfying $$||Z||_{\mathcal{H}_{\mathbb{H},\widetilde{\Q}}^2[s, t]} \ce \E^{\widetilde{\Q}}\sqbra{\int_s^t |Z_u|^2du}^{\frac{1}{2}} < \infty.$$ 
    \item Let $\mathcal{H}_{\mathbb{H}, \widetilde{\Q}, D}^2[s,t]$ denote $\R$-valued $\mathbb{H}$-predictable processes $U$ on $\Omega\times [s, t]$ satisfying $$||U||_{\mathcal{H}^2_{\mathbb{H},\widetilde{\Q}, D}[s, t]} \ce \E^{\widetilde{\Q}}\sqbra{\int_s^t |U_u|^2dD_u}^{\frac{1}{2}} < \infty.$$ 
\end{itemize}
We drop $\widetilde{\Q}$ from notation when $\Q$ is the probability measure. Respectively, for a probability measure $\widetilde{\Q}_c$ on $\Omega_c$, define $\mathcal{S}^2_{\mathbb H, c, \widetilde{\Q}_c}[s,t], \mathcal{H}^2_{\mathbb{H}, c, \widetilde{\Q}_c}[s,t]$ and $\mathcal{H}_{\mathbb{H}, c, \widetilde{\Q}_c, D}^2[s,t]$ in the same way for processes on $\Omega_c \times [s, t]$. In particular, when $\widetilde{\Q}_c = \P$, $\mathcal{H}^2_{\mathbb{H}, c}[s,t]$ denotes $\R$-valued $\mathbb{H}$-predictable processes $Z$ on $\Omega_c \times [s, t]$ satisfying $$||Z||_{\mathcal{H}_{\mathbb{H}, c}^2[s, t]} \ce \E^{\P}\sqbra{\int_s^t |Z_u|^2du}^{\frac{1}{2}} = \E^{\Q}\sqbra{\int_{[P_0, \infty)]}\int_s^t |Z_u(p^*)|^2du\nu_p(dp^*)}^{\frac{1}{2}} < \infty.$$ 
We drop $[s, t]$ from notation when considering the whole interval $[0, T]$. 

Since we take the weak formulation to MFGs, we can rewrite the objective function \eqref{eq:MFGobjectiveWeak} using the solution to a BSDE. Define the Hamiltonian by
    \begin{equation*}
    H(t, x, \mb, \upeta, \varrho, z, a) = f(t, x, \mb, \upeta, \varrho, a) + \sigma^{-1}az.
    \end{equation*}
By Remark \ref{Remark:C} and Assumption \ref{Assumption:E1}, for each $(t, x, m, p, \upeta, \varrho, z)$, there exists a unique element in $A$ that minimizes $H(t, x, m, \upeta, \varrho, z, \cdot)$. For our model, the minimizer is a function of $z$ only, which we denote as $\hat{a}(z)$. Let $h$ denote the minimized Hamiltonian, that is 
\begin{equation}\label{eq:MinHamiltonian}
h(t, x, \mb,\upeta, \varrho, z) \ce H(t, x, \mb, \upeta, \varrho, z, \hat{a}(z)) = \kappa(\hat{a}(z)) + \phi x^2 - x\big(\mb\ind{t < \upeta} + \varrho\ind{t \geq \upeta}\big) + \sigma^{-1}\hat{a}(z)z.
\end{equation}
\begin{remark}\label{Remark:S}
    We point out some properties of $\hat{a}$ and $h$ that will be utilized later. 
    \begin{enumerate}[label={(S\arabic*)}]
        \item \label{Remark:S1} For a general $f$ and $g$ satisfying the properties in Remark \ref{Remark:C}, $\hat{a}$ is a jointly measurable function of $(t, x, z)$ and continuous in $z$ by Berge's maximum theorem. In our case, the unique minimizer $\hat{a}(\cdot)$ only depends on $z$. 
        \item \label{Remark:S2} The minimized Hamiltonian $h$ is Lipschitz in $z$, and it is jointly continuous in $(x, z, \varrho)$ for fixed $(t, m, p, \upeta)$.
    \end{enumerate}
\end{remark}

Recall the definition of $M^\tau$ in (\ref{eq:compensatorMG}). For a given $p^*$, consider a generic type of BSDEs on the enlarged filtration $\mathbb{G}$ solved on $[\T(p^*), T]$
\begin{equation}\label{eq:BSDEp*}
    \begin{split}
    Y_t & = g(X^{\T(p^*)}_T, X^{\T(p^*)}_{\tau^*}, \tau^*(\upmu)) + \int_t^T h(s,X^{\T(p^*)}_s, b^P_s, \tau^*(\upmu), \qv{\rho, \upvartheta_s}, Z_s)ds\\
    & - \int_t^T Z_sdW_s - \int_t^T \mathfrak{Z}_s dB_s- \int_t^T U_sdM^\tau_s - \int_t^T dM_s, \quad t \in [\T(p^*), T].
    \end{split}
\end{equation}
where $M$ is a martingale orthogonal to $(W, B, M^\tau)$. A solution to the BSDE \eqref{eq:BSDEp*} is a process $(Y, Z, \mathfrak{Z}, U, M) \in \mathcal{S}^2_{\mathbb G}[\T(p^*), T] \times \mathcal{G}^2_{\mathbb{G}}[\T(p^*), T] \times \mathcal{H}^2_{\mathbb{G}} [\T(p^*), T]\times \mathcal{H}^2_{\mathbb{G}, D}[\T(p^*), T] \times \mathcal{S}^2_{\mathbb G}[\T(p^*), T]$ on the probability space $(\Omega, \mathcal{F}, \mathbb{G}, \P)$. If the pre-enlarged filtration $\mathbb{F}$ is generated by the Brownian motions, then $M \equiv 0$. Note that the BSDE above is solved on a random interval even after conditioning on a $p^*$. The following proposition addresses the solvability of this BSDE. To differentiate the two types of admissibility, we denote a time-admissible control in $\A$ by $\alpha$ and price-admissible control in $\A^*$ by $\alpha^{\T}$.

\begin{proposition}\label{Prop:OptimalControl}
    Suppose that $\mathbb{G} = \mathbb{F}^{\calI, W, B, D}$ and fix a $\mathbb{G}$-progressive $\mathfrak{P} = (\upmu, \upvartheta)$. Given $p^*$, for each $t^* \in [0, T]$, there exists a unique solution $(Y^{t^*}, Z^{t^*}, \mathfrak{Z}^{t^*}, U^{t^*})$ to the following BSDE 
    \begin{equation}\label{eq:BSDE_Brownian}
    \begin{split}
    Y_t & = g(X^{t^*}_T, X^{t^*}_{\tau^*}, \tau^*(\upmu)) + \int_t^T h(s,X^{t^*}_s, b^P_s, \tau^*(\upmu), \qv{\rho, \upvartheta_s}, Z_s)ds\\
    & - \int_t^T Z_sdW_s - \int_t^T \mathfrak{Z}_s dB_s- \int_t^T U_sdM^\tau_s, \quad t \in [t^*, T].
    \end{split}
    \end{equation}
    where $X^{t^*}$ follows $$X^{t^*}_t = \ind{t \geq t^*}\pa{K_0/p^* + \sigma(W_t - W_{t^*})}.$$
    If we define the process $\hat\alpha^{t^*}_t = \ind{t \geq t^*}\hat{a}(Z^{t^*}_t) \in \A^{t^*}$ for each $t^*$, then $\hat\alpha$ is time admissible and induces a price-admissible control $\hat\alpha^{\T(\cdot)} \in \A^*$. Moreover, $\hat\alpha^{\T}$ minimizes \eqref{eq:MFGobjectiveWeak} over $\A^*$.
\end{proposition}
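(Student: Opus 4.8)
The plan is to read Proposition \ref{Prop:OptimalControl} as a stochastic–maximum–principle / verification statement and to split the argument into three parts: (a) solvability of the BSDE \eqref{eq:BSDE_Brownian} for a fixed deterministic entry time $t^*$, (b) measurability of the induced feedback control in the parameter $t^*$ (equivalently in $p^*$), and (c) the optimality inequality for $\hat\alpha^{\T}$ via a Girsanov change of measure together with the pointwise minimality of the Hamiltonian. For part (a) I would fix $t^*\in[0,T]$ and solve \eqref{eq:BSDE_Brownian} on $[t^*,T]$. Since $\mathbb G=\mathbb F^{\calI,W,B,D}$ and the pre-enlarged filtration $\mathbb F^{\calI,W,B}$ is a Brownian filtration (the initial enlargement by the $\G_0$-measurable datum $\calI$ does not affect martingale representation), the discussion around \eqref{eq:compensatorMG} shows that every square-integrable $\mathbb G$-martingale decomposes as $\int Z\,dW+\int\mathfrak Z\,dB+\int U\,dM^\tau$, so that the orthogonal term $M$ in \eqref{eq:BSDEp*} vanishes. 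The driver $h$ is Lipschitz in $z$ uniformly in the remaining variables by \ref{Remark:S2}, its free term $s\mapsto h\big(s,X^{t^*}_s,b^P_s,\tau^*(\upmu),\qv{\rho,\upvartheta_s},0\big)$ lies in $\mathcal H^2_{\mathbb G}$ (by the growth bound \ref{Remark:C1} it is dominated by $1+|X^{t^*}_s|^2+|b^P_s|^2$, and $X^{t^*}$, $b^P$ have moments of all orders by \ref{Assumption:E3}, Assumption \eqref{Assumption:B} and Proposition \ref{Prop:PriceSDEWellPosed}, exactly as in the estimate displayed just before Remark \ref{Remark:C}), and the terminal value lies in $L^2$ by \ref{Remark:C2}. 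The classical existence–uniqueness theory for BSDEs driven by $(W,B)$ and the compensated jump martingale $M^\tau$ with a Lipschitz driver (e.g. \cite{KharroubiLim14}) then yields a unique $(Y^{t^*},Z^{t^*},\mathfrak Z^{t^*},U^{t^*})$ in the stated spaces; since the data in fact have all moments, the same argument gives $\mathcal S^p_{\mathbb G}$/$\mathcal H^p_{\mathbb G}$ bounds for every $p\ge2$, which I will need in part (c).

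Next I would handle admissibility and measurability. Setting $\hat\alpha^{t^*}_t=\ind{t\ge t^*}\hat a(Z^{t^*}_t)$, the map $\hat a$ is $A$-valued and continuous by \ref{Remark:S1}, so $\hat\alpha^{t^*}$ is $\mathbb G$-progressive, $A$-valued, and vanishes on $[0,t^*)$, i.e. $\hat\alpha^{t^*}\in\A(t^*)$. To upgrade this to a time-admissible control I would either establish a stability estimate $\|Z^{t^*}-Z^{t'}\|_{\mathcal H^2_{\mathbb G}}\to0$ as $t'\to t^*$ (the data depend on $t^*$ only through $X^{t^*}$, and $t^*\mapsto X^{t^*}$ is continuous in $\mathcal H^2$, so the a priori BSDE estimate propagates this), or set up the corresponding BSDE once and for all on the product space $\Omega_c$ and appeal to uniqueness; either way $(\omega,t,t^*)\mapsto\hat\alpha^{t^*}_t(\omega)$ is $\mathcal{PM}(\mathbb G)\otimes\B([0,T])$-measurable, hence $\hat\alpha\in\A$, and then, using the joint measurability of $\T$ from Lemma \ref{Lemma:p*measurable}, $\hat\alpha^{\T(\cdot)}\in\A^*$.

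For optimality I would pass to the random entry time. Using the measurable family of part (b) together with the flow property and uniqueness of BSDEs with random horizon, the processes $\bar Y_t\ce\ind{t\ge\T}Y^{\T}_t$, $\bar Z_t\ce\ind{t\ge\T}Z^{\T}_t$, and similarly $\bar{\mathfrak Z},\bar U$, solve the random-horizon version of \eqref{eq:BSDE_Brownian} on $[\T,T]$; on $\{\T=T+1\}$ they vanish and $\bar Y_{\T\wedge T}=g(0,0,\tau^*)=0$ by \ref{Remark:C4}, consistent with a zero cost. Fix any $\alpha^{\T}\in\A^*$. Writing $dW_s=dW^\alpha_s+\sigma^{-1}\alpha_s\,ds$ under $\P^\alpha$ and using that $h(s,x,\mb,\upeta,\varrho,z)=H(s,x,\mb,\upeta,\varrho,z,\hat a(z))\le H(s,x,\mb,\upeta,\varrho,z,a)=f(s,x,\mb,\upeta,\varrho,a)+\sigma^{-1}az$ for every $a\in A$, with equality exactly at $a=\hat a(z)$, one gets
\begin{equation*}
\bar Y_{\T\wedge T}\ \le\ g\big(X^{\T}_T,X^{\T}_{\tau^*},\tau^*(\upmu)\big)+\int_{\T\wedge T}^T f\big(s,X^{\T}_s,b^P_s,\tau^*(\upmu),\qv{\rho,\upvartheta_s},\alpha_s\big)\,ds\ -\ N^\alpha_T,
\end{equation*}
where $N^\alpha_t\ce\int_{\T\wedge T}^t\big(\bar Z_s\,dW^\alpha_s+\bar{\mathfrak Z}_s\,dB_s+\bar U_s\,dM^\tau_s\big)$ is a genuine $\P^\alpha$-martingale: $\alpha$ is bounded, so $d\P^\alpha/d\P\in\bigcap_{r\ge1}L^r(\P)$, which together with the $\mathcal H^p(\P)$-bounds of part (a) places $\bar Z,\bar{\mathfrak Z},\bar U$ in $\mathcal H^2(\P^\alpha)$ (alternatively localise and invoke uniform integrability). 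Taking $\E^{\P^\alpha}$ yields $\E^{\P^\alpha}[\bar Y_{\T\wedge T}]\le J^{\upmu,\upvartheta}_{weak}(\alpha^{\T})$ (recall under $\P^\alpha$ the state $X^{\T}$ follows \eqref{eq:StateProcess}, so the right side is exactly the integrand of \eqref{eq:MFGobjectiveWeak}), with equality precisely when $\alpha_s=\hat a(\bar Z_s)$, i.e. for $\hat\alpha^{\T}$. Finally, since $\alpha_s=0$ on $[0,\T)$, the stochastic exponential $d\P^\alpha/d\P$ stopped at the bounded stopping time $\T\wedge T$ equals $1$, so $\E^{\P^\alpha}[\bar Y_{\T\wedge T}]=\E^{\P}[\bar Y_{\T\wedge T}]$, a quantity independent of $\alpha$; hence $J^{\upmu,\upvartheta}_{weak}(\alpha^{\T})\ge\E^{\P}[\bar Y_{\T\wedge T}]=J^{\upmu,\upvartheta}_{weak}(\hat\alpha^{\T})$ for every $\alpha^{\T}\in\A^*$.

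The hard part is part (c): assembling the family of deterministic-$t^*$ BSDEs into a single object carried by the random entry time $\T$ and justifying that $\bar Y_\cdot=Y^{\T}_\cdot$ genuinely solves the random-horizon BSDE, and then propagating enough integrability through the Girsanov change of measure on this random interval so that the martingale term vanishes in expectation; by contrast, the BSDE solvability on the enlarged filtration in (a) is standard once the $\mathbb G$-martingale representation is in hand, and the parameter measurability in (b) is a routine stability argument.
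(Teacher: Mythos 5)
Your overall plan is sound, and parts (a) and (b) track the paper closely. For (a) you correctly identify that the orthogonal term vanishes because $\mathbb F^{\calI,W,B}$ has the representation property, that $h$ is Lipschitz in $z$, and that the data are $L^2$; the paper invokes \cite[Theorem 53.1]{KarouiMazliakBSDE} after absorbing the compensator into the driver, which is the same idea. For (b) both of your routes are viable; the paper does the $t^*$-by-$t^*$ stability argument and then Lemma~\ref{Lemma:Caratheodory} (the ``Carath\'eodory'' lemma), and it actually only needs \emph{left}-continuity of $t^*\mapsto Z^{t^*}$ because the subsequent composition with $p^*\mapsto\T(p^*)$ is only left-continuous by Lemma~\ref{Lemma:p*measurable}; your product-space uniqueness route is the content of Corollary~\ref{Cor:BSDEProductSpace} and is equally fine. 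The only substantive divergence is part (c), and your alternative is legitimate but carries a burden the paper avoids.

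For (c), the paper does not assemble a random-horizon BSDE indexed by $\T$. It instead disintegrates $J^{\upmu,\upvartheta}_{weak}(\alpha^\T)=\E^{\nu_p}[J^{\upmu,\upvartheta}_{weak}(\alpha^\T\mid p^*)]$, fixes $p^*$, solves a companion BSDE on $[0,T]$ with generator $H(\cdot,\beta^{\T(p^*)}_s)$ for a competitor $\beta$, and concludes from the \emph{comparison principle} for BSDEs with the compensated jump martingale $M^\tau$ (\cite[Proposition 4.3]{TangpiWang22}) together with $\E^{\Q^{\hat\alpha}}[Y_0]=\E^{\Q}[Y_0]$. You instead run a direct verification: Girsanov under $\P^\alpha$, pointwise minimality $h\le H(\cdot,a)$ of the Hamiltonian, and the observation that the martingale part has zero $\P^\alpha$-expectation, with the final $\E^{\P^\alpha}[\bar Y_{\T\wedge T}]=\E^{\P}[\bar Y_{\T\wedge T}]$ coming from optional stopping of the stochastic exponential, since $\alpha$ vanishes before $\T$. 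This is a standard and correct alternative, and it has the virtue of being self-contained. Two things to tighten: (i) you need to actually build the random-horizon solution $(\bar Y,\bar Z,\bar{\mathfrak Z},\bar U)$ by gluing the $t^*$-by-$t^*$ family along $\T$, which you flag as ``the hard part'' --- this is precisely where the paper's device of conditioning on $p^*$ lets it side-step the gluing entirely, since for fixed $p^*$ the stopping time $\T(p^*)$ is already a bona fide $\mathbb G$-stopping time and the standard $[0,T]$-horizon theory applies verbatim; (ii) your claim of $\mathcal S^p_{\mathbb G}/\mathcal H^p_{\mathbb G}$ bounds ``for every $p\ge 2$'' is plausible but is a nontrivial statement for BSDEs with a jump component driven by $M^\tau$; you would need an $L^p$-theory citation in that generality (the paper never needs it because comparison works at the $L^2$ level, once the auxiliary change-of-measure martingale condition of \cite{Cohen10} is verified, which relies only on Lipschitzness of $H$ in $z$ and boundedness of $A$). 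If you adopt the paper's conditioning on $p^*$ and switch to the comparison-principle argument, the $L^p$ bounds and the random-horizon assembly both become unnecessary.
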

\begin{proof}
    Using Assumption \ref{Assumption:E2} and \eqref{eq:compensatorMG}, we can rewrite \eqref{eq:BSDE_Brownian} as
    \begin{equation*}
    \begin{split}
    Y_t & = g(X^{t^*}_T, X^{t^*}_{\tau^*}, \tau^*(\upmu)) + \int_t^T h(s,X^{t^*}_s, b^P_s, \tau^*(\upmu), \qv{\rho, \upvartheta_s}, Z_s) + U_sk_s\ind{0 \leq s < \tau}ds\\
    & - \int_t^T Z_sdW_s - \int_t^T \mathfrak{Z}_s dB_s- \int_t^T U_sdD_s, \quad t \in [t^*, T].
    \end{split}
    \end{equation*}
    
    Well-posedness follows from \cite[Theorem 53.1]{KarouiMazliakBSDE}. We need to show that $\hat{\alpha}$ is jointly measurable when composing the $t^*$-by-$t^*$ solutions. We first show that $t^* \mapsto \hat{\alpha}^{t^*}$ is $\P$-almost surely left-continuous in $\mathcal{H}^2_{\mathbb{G}}$. 

    Suppose we have a sequence $t_n^* \uparrow t^* \in [0, T]$, and let $\alpha^{t^*_n}$ and $\alpha^{t^*}$ be the corresponding control processes. Then we have
    \begin{align*}
        \norm{\alpha^{t^*_n} - \alpha^{t^*}}_{\mathcal{H}_{\mathbb{G}}^2} & = \E\sqbra{\intT|\alpha^{t^*_n}_t - \alpha^{t^*}_t|^2dt} = \E\sqbra{\int_{t^*_n}^{t^*}|\alpha^{t^*_n}_t|^2dt} + \E\sqbra{\int_{t^*}^{T}|\alpha^{t^*_n}_t - \alpha^{t^*}_t|^2dt}\\
        & = \E\sqbra{\int_{t^*_n}^{t^*}|\alpha^{t^*_n}_t|^2dt} + \E\sqbra{\int_{t^*}^{T}\abs{\hat{a}(Z^{t^*_n}_t) - \hat{a}(Z^{t^*}_t)}^2dt}.
    \end{align*}
    The first term goes to $0$ by dominated convergence theorem since $A$ is assumed to be bounded. To show the convergence of the second term, by continuity of $\hat{a}$ it suffices to show $Z^{t^*_n} \nto Z^{t^*}$ in $\mathcal{H}^2_{\mathbb{G}}[t^*, T]$. By the stability of BSDE solutions (e.g. \cite[Proposition 54.2]{KarouiMazliakBSDE}), we have
    \begin{align*}
        &\norm{Z^{t^*_n} - Z^{t^*}}^2_{\mathcal{H}_\mathbb{G}^2[t^*, T]} = \E\sqbra{\int_{t^*}^{T}\abs{Z^{t^*_n}_t -  Z^{t^*}_t}^2dt} \leq C\E\sqbra{\abs{g(X^{t^*_n}_T, X^{t^*_n}_{\tau^*}, \tau^*) - g(X^{t^*}_T, X^{t^*}_{\tau^*}, \tau^*)}^2}\\
        & \qquad + C\E\sqbra{\int_{t^*}^T\abs{h(s,X^{t^*_n}_s, b^P_s, \tau^*, \qv{\rho, \upvartheta_s}, Z^{t^*}_s) - h(s,X^{t^*}_s, b^P_s, \tau^*, \qv{\rho, \upvartheta_s}, Z^{t^*}_s)}^2ds}.
    \end{align*}
    It is easy to check that for any $t$, we have $X^{t^*_n}_t \nto X_t^{t^*}$, and by \ref{Remark:S2} left-continuity is proved. The second condition in Lemma \ref{Lemma:Caratheodory} is satisfied, and invoking the lemma yields time-admissibility.

    Therefore, for each $p^* \in [P_0, \infty)$, we can solve \eqref{eq:BSDEp*} with random entry time on $[\T(p^*), T]$ and obtain $\hat{\alpha}^{\T(p^*)}$. Note that altering $p^*$ only affects the initial inventory $K_0/p^*$ and entry time $\T(p^*)$. Since $P_0 > 0$, and also by Lemma \ref{Lemma:p*measurable}, the function $p^* \mapsto \hat\alpha^{\T(p^*)}_t(\omega)$ is left-continuous for $\Q\times\lambda$-almost every $(\omega, t)$. Therefore, Lemma \ref{Lemma:Caratheodory} again implies joint measurability and therefore price-admissibility of $\hat{\alpha}^{\T(\cdot)}$.

    Observe that \eqref{eq:MFGobjectiveWeak} can be rewritten as $J^{\upmu, \upvartheta}_{weak}(\alpha^\T) = \E^{\nu_p}[J^{\upmu, \upvartheta}_{weak}(\alpha^\T | p^*)]$ for each price-admissible $\alpha^\T$, where $J^{\upmu, \upvartheta}_{weak}(\cdot | p^*)$ is the conditional objective given $\mathscr{P} = p^*$. That is
    \begin{equation*}
        J^{\upmu, \upvartheta}_{weak}(\alpha^\T | p^*) \ce \E^{\Q^{\hat{\alpha}^{\T(p^*)}}}\sqbra{g(X^{\T(p^*)}_T, X^{\T(p^*)}_{\tau^*}, \tau^*(\upmu)) + \int_{\T(p^*) \wedge T}^Tf(s,X^{\T(p^*)}_s, b^P_s, \tau^*(\upmu), \qv{\rho, \upvartheta_s}, \alpha^{\T(p^*)}_s)ds}.
    \end{equation*}
    Then optimality of $\hat{\alpha}^{\T}$ follows if we show conditional optimality of $\hat{\alpha}^{\T(p^*)}$ for each $p^*$, which we fix from this point on. Take any price-admissible strategy $\beta^{\T}$. We can uniquely solve the following BSDE
    \begin{equation}\label{label:ComparisonBSDE}
    \begin{split}
    Y^{\beta, p^*}_t & = g(X^{\T(p^*)}_T, X^{\T(p^*)}_{\tau^*}, \tau^*(\upmu)) + \int_t^T H(s,X^{\T(p^*)}_s, b^P_s, \tau^*(\upmu), \qv{\rho, \upvartheta_s}, Z^{\beta, p^*}_s, \beta^{\T(p^*)}_s)ds\\
    & - \int_t^T Z^{\beta, p^*}_sdW_s - \int_t^T \mathfrak{Z}^{\beta, p^*}_s dB_s- \int_t^T U^{\beta, p^*}_sdM^\tau_s , \quad t \in [0, T].
    \end{split}
\end{equation}
    We can also solve \eqref{label:ComparisonBSDE} on $[0, T]$ with $\hat{\alpha}$ as input. Then by \eqref{eq:MinHamiltonian} and uniqueness, the solution coincides with the solution of \eqref{eq:BSDEp*} on the interval $[\T(p^*), T]$. Comparison principle of \eqref{label:ComparisonBSDE}(\cite[Proposition 4.3]{TangpiWang22}) implies $Y^{\hat{\alpha}^{\T}, p^*}_0 \leq  Y^{\beta^\T, p^*}_0$ $\Q$-almost surely. Optional stopping theorem and Remark \ref{Remark:C4} imply
    \begin{equation*}
        J^{\upmu, \upvartheta}_{weak}(\hat{\alpha}^{\T}|p^*) = \E^{\Q^{\hat{\alpha}^{\T(p^*)}}}\sqbra{Y^{\hat\alpha, p^*}_0} = \E^{\Q}\sqbra{Y^{\hat\alpha, p^*}_0} \leq \E^{\Q}\sqbra{Y^{\beta, p^*}_0} =\E^{\Q^{\beta^{\T(p^*)}}}\sqbra{Y^{\beta, p^*}_0} = J^{\upmu, \upvartheta}_{weak}(\beta^\T|p^*).
    \end{equation*} 
\end{proof}
For  the remainder of this section up until Remark \ref{Remark:Candidate}, we take $\mathbb{G} = \mathbb{F}^{\calI, W, B, D}$, so Proposition \ref{Prop:OptimalControl} applies. This result implies that for a given $(\upmu, \upvartheta)$ we can find $\hat{\alpha}$ by solving the problem $p^*$-by-$p^*$, and it is well defined for every $p^*$, not just $\nu_p$ almost every $p^*$. On the other hand, by uniqueness of the optimizer, we can also obtain $\hat{\alpha}^\T$ by solving the BSDE on the whole product space.

\begin{corollary}\label{Cor:BSDEProductSpace}
Given $(\upmu, \upvartheta): \Omega_0 \to (\M, \Theta)$ that is $\mathbb{F}^{B, D, \calI}$-progressive, there exists a unique solution $(Y, Z, \mathfrak{Z}, U) \in \mathcal{S}^2_{\mathbb{G}, c} \times \mathcal{G}^2_{\mathbb{G}, c} \times \mathcal{H}^2_{\mathbb{G},c} \times \mathcal{H}^2_{\mathbb{G},c,D}$ to the following BSDE on the product space in the $\mathbb{G}$ filtration
    \begin{equation}\label{eq:BSDE_Main}
    \begin{split}
    Y_t & = g(X^{\T}_T, X^{\T}_{\tau^*(\upmu)}, \tau^*(\upmu)) + \int_{t}^T h(s,X^{\T}_s, b^P_s, \tau^*(\upmu), \qv{\rho, \upvartheta_s}, Z_s)ds\\
    & - \int_{t}^T Z_sdW_s - \int_{t}^T \mathfrak{Z}_s dB_s- \int_{t}^T U_sdM^\tau_s, \quad t \in [\T \wedge T, T].
    \end{split}
    \end{equation}
Additionally, the process defined by $\hat\alpha_t \ce \ind{t \geq \T}\hat{a}(Z_t) \in \A^*$ is $\P \otimes dt$ almost surely identical to the one constructed in Proposition \ref{Prop:OptimalControl} and minimizes \eqref{eq:MFGobjectiveWeak}.
\end{corollary}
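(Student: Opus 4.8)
The plan is to recognize that \eqref{eq:BSDE_Main} is nothing but the single, random-horizon BSDE of Proposition~\ref{Prop:OptimalControl} posed once and for all on the product space $(\Omega_c, \mathcal{F}_c, \mathbb{G}, \P)$ with $\P = \Q \otimes \nu_p$, the threshold $\mathscr{P}$ now being a $\G_0$-measurable coordinate and $\T$ a bounded $\mathbb{G}$-stopping time on $\Omega_c$. First I would verify well-posedness: by \ref{Remark:C4} the state $X^{\T}$ and all the data of the equation vanish on $[0, \T)$, so solving on $[\T \wedge T, T]$ and extending $Y$ by $Y_{\T \wedge T}$ on $[0, \T)$ is harmless; the driver $h$ is Lipschitz in $z$ uniformly in the remaining variables by \ref{Remark:S2}; the terminal condition $g(X^{\T}_T, X^{\T}_{\tau^*(\upmu)}, \tau^*(\upmu))$ is square integrable under $\P$ by \ref{Remark:C2}; and the $L^2$ bounds on $b^{P}$, on $\qv{\rho, \upvartheta_\cdot}$ (which is bounded because $F_p(M^P_\cdot) \ge \nu_p(\{P_0\}) > 0$ and $\rho$ is bounded on the compact set $A$) and on $X^{\T}$ needed to control the driver are exactly those established just before Remark~\ref{Remark:C}. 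Rewriting the jump integral through the compensator \eqref{eq:compensatorMG} as in the proof of Proposition~\ref{Prop:OptimalControl}, the same existence/uniqueness result \cite[Theorem 53.1]{KarouiMazliakBSDE} applies verbatim on this enlarged product filtration and delivers a unique $(Y, Z, \mathfrak{Z}, U) \in \mathcal{S}^2_{\mathbb{G}, c} \times \mathcal{G}^2_{\mathbb{G}, c} \times \mathcal{H}^2_{\mathbb{G},c} \times \mathcal{H}^2_{\mathbb{G},c,D}$; no orthogonal martingale term appears because $\mathbb{G} = \mathbb{F}^{\calI, W, B, D}$ enjoys the martingale representation property.

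Next I would identify this solution with the family constructed $p^*$-by-$p^*$ in Proposition~\ref{Prop:OptimalControl}. Disintegrating $\P$ along $\nu_p$, for $\nu_p$-almost every $p^*$ the section $(Y(\cdot, p^*), Z(\cdot, p^*), \mathfrak{Z}(\cdot, p^*), U(\cdot, p^*))$ is, on $(\Omega, \F, \mathbb{G}, \Q)$, a solution of \eqref{eq:BSDEp*} with entry time $\T(p^*)$ and initial inventory $K_0/p^*$; the measurability that makes this sectioning legitimate and produces a genuinely $\mathbb{G}$-progressive process on $\Omega$ for each fixed $p^*$ is the Carath\'eodory-type statement of Lemma~\ref{Lemma:Caratheodory} already invoked upstream, together with the left-continuity of $p^* \mapsto \hat\alpha^{\T(p^*)}$ proved there. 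By the uniqueness half of Proposition~\ref{Prop:OptimalControl}, the section coincides $\Q$-almost surely with the $p^*$-by-$p^*$ solution, so $\hat\alpha_t = \ind{t \geq \T}\hat{a}(Z_t)$ agrees with the price-admissible control $\hat\alpha^{\T(\cdot)}$ of Proposition~\ref{Prop:OptimalControl} for $\P \otimes dt$-almost every $(\omega, p^*, t)$; in particular $\hat\alpha \in \A^*$.

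Optimality of $\hat\alpha$ for \eqref{eq:MFGobjectiveWeak} then follows at once from the corresponding statement in Proposition~\ref{Prop:OptimalControl}, since $J^{\upmu, \upvartheta}_{weak}$ depends on the control only through its $\P \otimes dt$-equivalence class and the two controls are indistinguishable. Alternatively one could re-run the comparison argument directly on $\Omega_c$: for $\beta^\T \in \A^*$ solve the analogue of \eqref{label:ComparisonBSDE} with driver $H(\cdot, \beta^\T)$, use $h \le H(\cdot, \beta^\T)$ and the comparison principle \cite[Proposition 4.3]{TangpiWang22} to obtain $Y^{\hat\alpha}_0 \le Y^{\beta}_0$ $\P$-a.s., then conclude by optional stopping and \ref{Remark:C4}. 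The only point requiring genuine care is the bookkeeping in the second step --- checking that restricting the $\mathbb{G}$-progressive product-space solution to a fixed $p^*$ yields a process that is again $\mathbb{G}$-progressive on $\Omega$ and solves the correct random-horizon equation --- but this is dispatched by the same measurable-selection lemma used in Proposition~\ref{Prop:OptimalControl}, so the corollary introduces no essentially new difficulty beyond it.
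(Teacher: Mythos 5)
Your proposal is correct and follows essentially the same route the paper takes: the paper offers the corollary without a detailed proof, relying precisely on the observation that the BSDE can be posed once on the product space $\Omega_c$ (where $\mathscr{P}$ is a $\G_0$-measurable coordinate and $\T$ a bounded $\mathbb{G}$-stopping time), that \cite[Theorem 53.1]{KarouiMazliakBSDE} applies after compensating the jump as in Proposition~\ref{Prop:OptimalControl}, and that the $\nu_p$-disintegration of the resulting solution matches the $p^*$-by-$p^*$ construction by uniqueness, from which optimality is inherited. Your extra care on the measurability of the sections (the Carath\'eodory-type step) and the explicit check that the orthogonal term $M$ vanishes because $\mathbb{G}=\mathbb{F}^{\calI,W,B,D}$ carries martingale representation are exactly the points the paper leaves implicit, so your write-up is a faithful fleshing-out of the same argument rather than a different one.
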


For any price-admissible $\alpha \in \A^*$, let $\widetilde{X}^{\alpha}$ denote the solution of (\ref{eq:StateProcess}) defined on $\Omega_c \times [0, T]$. Then by Girsanov's theorem 
\begin{equation}\label{eq:GirsanovLaw}
\P \circ (\widetilde{X}^{\alpha}, \calI, \T, W, B)^{-1} = \P^{\alpha} \circ (X^\T, \calI, \T, W^\alpha, B)^{-1}.
\end{equation}
Under Assumption \ref{Assumption:E1}, for any $p > 0$ we have
    \begin{equation}\label{eq:Xestimate}
    \sup_{\alpha \in \A^*}\E^{\P^{\alpha}}\sqbra{\sup_{p^* \in [P_0, \infty)}\norm{X^{\T(p^*)}}_{\X^*}^p} = \sup_{\alpha \in \A^*}\E\sqbra{\sup_{p^* \in [P_0, \infty)}\norm{\widetilde{X}^{\alpha^{\T(p^*)}}}_{\X^*}^p} < \infty.
    \end{equation}

\subsubsection{Fixed Point from Discretization}
Instead of conditional measure flows given common noise $(B, D)$, we look at a piecewise constant approximation process. Suppose for $N \in \N$, a partition $\{0 = t_0 < t_1 <\cdots < t_{N-1} < t_N = T\}$ on $[0, T]$ and a finite set (some grid on space) $\Lambda_N \subset \R$ are given. Define the $\Lambda_N$-valued finite process $V^N$ on $\X$ by 
\begin{equation}\label{eq:V_N}
    V^N_t(\beta) \ce \sum_{i=1}^N v_{i-1}(\beta)\ind{t \in [t_{i-1}, t_i)} + v_{N-1}(\beta) \ind{t = T},
\end{equation}
where each $v_i: \X \to \Lambda_N$ is a $\F^B_{t_i}$ measurable random variable taking values in the finite set, for $i = 0, \dots, N-1$. These knots are meant to approximate the Brownian common noise $B_{t_i}(\beta)$. We also have an additional source of common noise: the external burst time $\tau$, which requires us to discretize the jump process $D$. 

Let $\X_D$ denote the space of processes on $[0, T]$ of the form $D_t(\upeta) = \ind{\upeta \leq t}$ for some $\upeta \in [0, T]$. Equip $\X_D$ with the natural metric $d(D(\upeta), D(\upeta')) = |\upeta - \upeta'|$. For $N \in \N$, define the $\X_D$ valued process on $\R_+$ by:
\begin{equation}\label{eq:D_N}
    D^N_t(\upeta) \ce \sum_{i=1}^N \ind{\upeta \leq t_{i-1}}\ind{t \in [t_{i-1}, t_i)} +  \ind{\upeta \leq t_{N-1}}\ind{t = T}.
\end{equation}
Then it is obvious that for all $\varepsilon > 0$:
\begin{equation}\label{eq:D^NProperty}
    \lim_{N \to \infty}\P^N\pa{d(D^N, D) \leq \varepsilon}  = 1.
\end{equation} 

Let $\mathcal{V}_N \ce \{A_1, \cdots, A_{|\mathcal{V}_N|}\}$ denote the (finite) $\sigma$-algebra generated by $(V^N, D^N)$, and choose $v_i$'s such that $\P(A_k) > 0$ for every $k$. We now define the input domain for conditional laws of the state and control. At this stage, we can work with $A$-valued controls. However, in anticipation of taking the limit in the space of relaxed controls, we switch to strict controls now. For $\alpha \in \A^*$, call $\upgamma(\alpha) \in \bbGamma$ its corresponding strict control. Define
\begin{align}
    \label{Definition:Mn_discretization}
    \M_N & \ce \cbra{\pa{\L^{\alpha, 1}(X^\T), \dots, \L^{\alpha,|\mathcal{V}_N|}(X^\T)}: \text{ for some }\alpha \in \A^*} \subset \calP_1(\X^*)^{|\mathcal{V}_N|}\\
    \label{Definition:Thetan_discretization}
    \Theta_N & \ce \cbra{\pa{\L^{\alpha, 1}(\upgamma(\alpha)), \dots, \L^{\alpha,|\mathcal{V}_N|}(\upgamma(\alpha))}: \text{ for some }\alpha \in \A^*} \subset \Theta^{|\mathcal{V}_N|}
\end{align}
% \begin{equation}
%     \label{Definition:Mn_discretization}
%     \M_N \ce \cbra{\sqbra{\L^{\alpha, 1}\pa{(X^\T, W^\alpha)} \otimes \L^{\alpha, 1}\pa{\upgamma(\alpha)}, \dots, \L^{\alpha, |\mathcal{V}_N|}\pa{(X^\T, W^\alpha)} \otimes \L^{\alpha, |\mathcal{V}_N|}\pa{\upgamma(\alpha)}}: \alpha \in \A^*}
% \end{equation}
    % \label{Definition:Thetan_discretization}
    % \Theta_N & \ce \cbra{\sqbra{\L^\alpha\pa{\delta_{\alpha_t}dt|A_{1}}, \dots, \L^\alpha\pa{\delta_{\alpha_t}dt|A_{|\mathcal{V}_N|}}}: \text{ for some }\alpha \in \A^*}
% \begin{equation}
%     \label{Definition:Mn_discretization}
%     \M_N \ce \cbra{\sqbra{\L^{\alpha, 1}_{(X^\T, W^\alpha)\otimes \upgamma(\alpha)}, \dots, \L^{\alpha, |\calV_N|}_{(X^\T, W^\alpha)\otimes \upgamma(\alpha)}}: \alpha \in \A^*}
% \end{equation}
where for each $\alpha \in \A^*, k\in 1, 2, \dots, |\mathcal{V}_N|$, $\L^{\alpha, k}$ denotes the conditional law under $\P^\alpha$ given $A_k$. Equip each coordinate of $\M_N$ with the Wasserstein metric. Denote by $\vmm = (\mm_1, \dots, \mm_{|\mathcal{V}_N|})$ an element in $\M_N$. 
% (\mathfrak{m}^N_1, \dots, \mathfrak{m}^N_{|\mathcal{V}_N|})
\begin{lemma}\label{Lemma:mu_bar_cont}
 The process $t \mapsto \E^{\P^\alpha}[X^\T_t | A_k]$ is $\Q_0$-almost surely continuous for each $k$ for all $\alpha \in \A^*$. Consequentially, the mapping $\M_N \ni \vmm \mapsto \bar\tau\pa{\upmu^N}$ is $\Q_0$-almost surely continuous on the closure of $\M_N$, where for $\omega_0 = (\beta, \upeta) \in \Omega_0 = \X \times \R_+$, $\upmu^N((\beta, \upeta)) \ce \sum_{k=1}^{|\mathcal{V}_N|} \mm_{k}\ind{(\beta, D(\upeta)) \in A_k}$.
\end{lemma}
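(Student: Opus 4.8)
The plan is to first establish the continuity of $t \mapsto \E^{\P^\alpha}[X^\T_t \mid A_k]$ and then deduce the continuity of $\vmm \mapsto \bar\tau(\upmu^N)$ from it. For the first part, I would fix $k$ and note that $A_k \in \mathcal{V}_N \subset \F^B_T \vee \mathcal{D}^N_T$, which is independent of $(K_0,\mathscr{P},W)$ but \emph{not} of the path of $B$ or $\tau$; however, conditioning on $A_k$ does not affect the continuity of the trajectory $t \mapsto X^\T_t$, since from \eqref{eq:StateProcess} (or \eqref{eq:DriftlessState} under $\P$, via Girsanov) the process $X^\T$ has $\P^\alpha$-a.s. continuous paths: the only discontinuity is the jump of size $K_0/\mathscr{P}$ at the entry time $\T$, but $\T = \inf\{t : P_t - \mathscr{P} \geq 0\}$ and by continuity of $t \mapsto P_t$ away from $\tau^*$ one has $P_\T = \mathscr{P}$ on $\{\T \le T\}$ provided the jump of $P$ at $\tau^*$ is non-positive — wait, the subtlety is precisely that $X^\T$ is \emph{not} continuous at $\T$ since it jumps to $K_0/\mathscr{P}$. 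So the right statement to use is that $t \mapsto \E^{\P^\alpha}[X^\T_t \mid A_k]$ is continuous because conditioning smooths out the jump: the entry time $\T$ has a conditionally atomless law. Concretely, I would write $X^\T_t = \ind{t \geq \T} K_0/\mathscr{P} + \int_0^t \alpha_s \, ds + \sigma(W_{t\vee\T} - W_\T)$ and take conditional expectation; the integral and Brownian parts give continuous-in-$t$ conditional expectations by dominated convergence (using the uniform moment bound \eqref{eq:Xestimate}), and for the jump term I would show $t \mapsto \E^{\P^\alpha}[\ind{t \geq \T} K_0/\mathscr{P} \mid A_k]$ is continuous by arguing that $\P^\alpha(\T = t \mid A_k) = 0$ for every fixed $t$ — this follows because, conditionally on $A_k$ (which fixes only finitely many increments of $B$ and the discretized jump), the law of $P_t$, and hence of $M^P_t$, remains absolutely continuous in a neighborhood, so $\P^\alpha(M^P_t = \mathscr{P} \mid A_k) = 0$; combined with the independence of $\mathscr{P}$ from the $B$-driven price dynamics and the absolute continuity of $\nu_p$-related quantities, this gives no atom. (Here I am using Assumption \eqref{Assumption:E2} and the structure of $P^+$ from \eqref{eq:PriceDynamics+} with non-degenerate diffusion $\sigma_0 > 0$.)

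For the second part, recall $\bar\tau(\upmu) = \inf\{t \in [0,T] : \inf_{s \le t} \bar\upmu_s \le \zeta_t\} \wedge T$ where $\bar\upmu_s = \qv{x,\upmu_s}/F_p(M^P_s)$, and $\zeta$ is continuous and \emph{strictly} increasing with $\zeta_0 \in (0, \E[K_0]/P_0)$. Given $\vmm \in \overline{\M_N}$, the associated $\upmu^N = \sum_k \mm_k \ind{(\beta, D(\upeta)) \in A_k}$ and the running-average process $\bar\upmu^N_s$ is $\Q_0$-a.s. continuous in $s$ by the first part (passing continuity through the finite sum and dividing by $F_p(M^P_s)$, which is continuous and bounded away from $0$ since $\nu_p(\{P_0\}) > 0$ guarantees $F_p(M^P_s) \ge \nu_p(\{P_0\}) > 0$). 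Now if $\vmm_n \to \vmm$ in $\M_N$ (coordinatewise in $\W_1$), then $\qv{x, (\mm_n)_{k,s}} \to \qv{x, \mm_{k,s}}$ uniformly in $s$ — here I would use that $\W_1$-convergence of the path-space measures $\mm_{n,k} \to \mm_k$ on $\X^*$ controls $\sup_s |\qv{x, (\mm_n)_{k,s}} - \qv{x,\mm_{k,s}}| \le \W_1^{\X^*}(\mm_{n,k}, \mm_k)$ by the Lipschitz (in fact $1$-Lipschitz evaluation) structure of $\bx \mapsto \bar\bx_s$ with respect to $\|\cdot\|_{\X^*}$ — so $\bar\upmu^N_{n,s} \to \bar\upmu^N_s$ uniformly on $[0,T]$, $\Q_0$-a.s. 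Then continuity of $\bar\tau$ follows from a standard hitting-time stability argument: the map sending a continuous function $\psi$ to $\inf\{t : \inf_{s\le t}\psi_s \le \zeta_t\}\wedge T$ is continuous at $\psi$ whenever the crossing is ``transversal'' — and strict monotonicity of $\zeta$ together with $\zeta_0 < \E[K_0]/P_0 = \bar\upmu^N_0$ (which holds at the limit because the initial players are built in) ensures that at the first crossing time $\bar\tau$ the function $\inf_{s \le \cdot}\bar\upmu^N_s - \zeta_\cdot$ strictly decreases through zero, ruling out the degenerate tangency case. This is exactly the regularity point invoked in \cite[Sections 2.1.5 and 6.1]{TangpiWang22}, and I would cite it for the routine $\epsilon$–$\delta$ details.

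The main obstacle I anticipate is the first step — proving that $t \mapsto \E^{\P^\alpha}[X^\T_t \mid A_k]$ is continuous despite $X^\T$ having a genuine jump at the entry time $\T$. Adaptedness alone does not help; one really needs that, after conditioning on the finite $\sigma$-algebra $\mathcal{V}_N$, the entry time $\T$ still has no atoms. The cleanest route is probably: conditionally on $A_k$, the price path $P$ on any subinterval not containing a grid point still has the full non-degenerate diffusion $\sigma_0 dB$, and $\mathscr{P} \sim \nu_p$ independently; so $\P^\alpha(\T = t \mid A_k) \le \P^\alpha(M^P_t = \mathscr{P} \mid A_k) = \E^{\P^\alpha}[\,\P^\alpha(M^P_t = p^* \mid A_k)|_{p^* = \mathscr{P}} \mid A_k] = 0$ since the running maximum of a diffusion with positive volatility has an atomless law (and this persists under the finite conditioning and under the burst-time change of dynamics because $\tau^*$ itself has an atomless conditional law by Assumption \eqref{Assumption:E2}). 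One then invokes dominated convergence with the uniform bound from \eqref{eq:Xestimate} to pass $t_n \to t$ through $\E^{\P^\alpha}[\ind{t_n \ge \T}K_0/\mathscr{P}\mid A_k]$. The rest — pushing continuity through finite sums, dividing by $F_p(M^P)$, and the hitting-time stability — is routine given \cite{TangpiWang22}.
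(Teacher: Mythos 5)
Your proposal follows the same overall strategy as the paper: reduce continuity of $t\mapsto\E^{\P^\alpha}[X^\T_t\mid A_k]$ to the null-set property $\P^\alpha(\T=t\mid A_k)=0$ via dominated convergence (the only possible discontinuity of $X^\T$ being the jump at $\T$), and then pass to continuity of $\bar\tau$ by citing the hitting-time stability argument from \cite[Theorem 6.1]{TangpiWang22}. The one place you deviate is in establishing $\P^\alpha(\T=t\mid A_k)=0$: you argue directly about the \emph{conditional} law of $M^P_t$ given $A_k$ being atomless (invoking conditional Brownian bridge structure, non-degenerate diffusion, etc.), whereas the paper sidesteps this entirely with a one-line observation -- $\P^\alpha(\T=t)=\P(\T=t)=0$ unconditionally (Girsanov does not affect the law of $(B,D,\mathscr{P})$, $M^P$ is continuous since $P$ only jumps downward, and $\nu_p$ gives the atom at $P_0$ no mass for positive $t$), and since $\P^\alpha(A_k)>0$, conditioning on a positive-probability event cannot create an atom. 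Your conditional route reaches the same conclusion but your parenthetical ``this persists under the finite conditioning'' is doing real work without justification: it is not a priori clear that conditioning Brownian increments on a grid-projection event preserves atomlessness of $M^P_t$ until you note (as the paper does) that it reduces immediately to the unconditional statement. So the approach is essentially the same; you would do well to replace the conditional-diffusion handwave by the paper's cleaner positive-probability observation.
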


\begin{proof}
     Take a sequence $t_n \nto t_\infty$ in $[0, T]$. Note that the event that $X^\T_{t_n}$ does not converge to $X^\T_{t}$ is, up-to a $\P^\alpha$-null set, contained in $\{\T = t_\infty\}$. Then by dominated convergence theorem, it suffices to show that for all $t$, $\P^\alpha(\{\T = t_\infty\}|A_k) = 0$ for each $A_k$, which is implied by $\P^{\alpha}(\T = t_\infty) = \P(\T = t_\infty) = 0$. This follows from price dynamics \eqref{eq:PriceDynamics+} and the fact that the only jump of $P$ is negative by Assumptions \eqref{Assumption:B}. Then continuity of mean processes for each $k$ implies continuity of the endogenous burst mapping (see the proof of \cite[Theorem 6.1]{TangpiWang22}). Taking closure in Wasserstein space preserves the continuity of the mean processes.
\end{proof}

% \begin{lemma}\label{Lemma:mu_bar_cont}
%     Under Assumption \ref{Assumption:E2}, the process $t \mapsto \E^{\nu_p}[\widetilde{X}^{\alpha}_t]$ is $\P$-almost surely continuous.
% \end{lemma}
% \begin{proof}
%     Take a sequence $t_n \nto t_\infty$ in $[0, T]$. For $n \in \N \cup \{\infty\}$ define functions $e_n$ by $p^* \mapsto e_n(p^*) = \widetilde{X}_{t_n}^{\alpha^{\T(p^*)}}$. Note that $e_n(p^*)$ fails to converge to $e_\infty(p^*)$ $\Q$-a.s. only if $\T(p^*) = t$. By strict monotonicity of $\T(\cdot)$, for $\Q$-almost every $\omega \in \Omega$ the set $\{p^* \in [P_0, \infty): \T(p^*, \omega) = t\}$ is at most a singleton. Therefore, Assumption \ref{Assumption:E2} implies $e_n \nto e_\infty$ for $\nu_p$ almost every $p^* \in (P_0, \infty)$. In the case where $p^* = P_0$, convergence also holds because entry time is $0$. Then the lemma follows from dominated convergence theorem.

%     \todo{Without Assumption E2, the claim is false BUT if we also average over $\omega$, this might still hold as long as the distribution of $\T(p^*)$ is atomless for each $p^*$. This should be true right?}
% \end{proof}

\begin{lemma}\label{Lemma:Mcompact}
    The set $\M_N$ is Wasserstein pre-compact and convex in $\calP(\X^*)^{|\mathcal{V}_N|}$, and $\Theta_N$ is Wasserstein compact and convex in $\Theta^{|\mathcal{V}_N|}$.
\end{lemma}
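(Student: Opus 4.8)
The four assertions split naturally: convexity is a ``mixing'' statement proved by randomization, whereas (pre-)compactness rests on the a priori bound \eqref{eq:Xestimate} together with the fact that changing $\alpha$ only changes the drift of $X^\T$.

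\emph{Convexity of $\M_N$ and $\Theta_N$.} Fix $\alpha,\beta\in\A^*$ and $\lambda\in(0,1)$. After enlarging the canonical space by an independent $\G_0$-measurable Bernoulli variable $\chi$ with $\P(\chi=1)=\lambda=1-\P(\chi=0)$ --- which alters none of the conditional laws under consideration --- set $\gamma\ce\chi\alpha+(1-\chi)\beta$. Then $\gamma\in\A^*$: it is $A$-valued, $\mathbb{G}$-progressive, and $\gamma_t=0$ on $[0,\T)$ since $\alpha$ and $\beta$ vanish there. Because $\chi$ is $\G_0$-measurable and time-independent, the density $d\P^{\gamma}/d\P$ in \eqref{eq:Girsanov} restricts on $\{\chi=1\}$ to $d\P^{\alpha}/d\P$ and on $\{\chi=0\}$ to $d\P^{\beta}/d\P$. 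Each cell $A_k$ of $\mathcal{V}_N$ lies in the common-noise $\sigma$-field $\sigma(V^N,D^N)$, which is independent of $(\calI,W)$ under $\P$ and on which every Girsanov density of that form has conditional expectation $1$; hence $\P^{\gamma}(A_k)=\P^{\alpha}(A_k)=\P(A_k)>0$, and a short computation using the independence of $\chi$ gives, for each $k$,
\[
\L^{\gamma,k}(X^\T)=\lambda\,\L^{\alpha,k}(X^\T)+(1-\lambda)\,\L^{\beta,k}(X^\T)
\quad\text{and}\quad
\L^{\gamma,k}(\upgamma(\gamma))=\lambda\,\L^{\alpha,k}(\upgamma(\alpha))+(1-\lambda)\,\L^{\beta,k}(\upgamma(\beta)),
\]
the second identity because $\upgamma(\gamma)$ equals $\upgamma(\alpha)$ on $\{\chi=1\}$ and $\upgamma(\beta)$ on $\{\chi=0\}$. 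Thus $\M_N$ and $\Theta_N$ are convex.

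\emph{Pre-compactness of $\M_N$.} It suffices to show each coordinate family $\{\L^{\alpha,k}(X^\T):\alpha\in\A^*\}$ is relatively compact in $\calP_1(\X^*)$, i.e.\ tight together with uniform integrability of $\bx\mapsto\|\bx\|_{\X^*}$. The latter is immediate from \eqref{eq:Xestimate} with $p=2$: since $\P^{\alpha}(A_k)=\P(A_k)=:c_k>0$ does not depend on $\alpha$, one has $\sup_{\alpha}\E^{\P^{\alpha}}[\|X^\T\|_{\X^*}^2\mid A_k]\le c_k^{-1}\sup_{\alpha}\E^{\P^{\alpha}}[\|X^\T\|_{\X^*}^2]<\infty$. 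For tightness, by \eqref{supmetric} the map $\bx^{t^*}\mapsto(t^*,\bar\bx^{t^*})$ is an isometry of $(\X^*,d_{\X^*})$ into the Polish space $[0,T]\times C([0,T],\R)$, so since the entry-time component is bounded it is enough to have uniform tightness of the $C([0,T])$-laws of $\bar X^\T$. Under $\P^{\alpha}$, $W^{\alpha}$ is a Brownian motion and $\bar X^\T_t=K_0/\mathscr{P}+\int_{\T}^{t\vee\T}\alpha_s\,ds+\sigma(W^{\alpha}_{t\vee\T}-W^{\alpha}_{\T})$, whence $|\bar X^\T_t-\bar X^\T_s|\le(\sup_{a\in A}|a|)\,|t-s|+\sigma\,|W^{\alpha}_{t\vee\T}-W^{\alpha}_{s\vee\T}|$ and $\bar X^\T_0=K_0/\mathscr{P}$; since the law of $W^{\alpha}$ is $\W$ for every $\alpha$ and $\mathscr{P}\ge P_0$, the modulus of continuity of $\bar X^\T$ and $|\bar X^\T_0|$ are tight uniformly in $\alpha$, and dividing the relevant probabilities by $c_k$ transfers this to the conditional laws. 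Hence each coordinate family is relatively compact, $\M_N$ is pre-compact, and with the convexity above the assertions for $\M_N$ follow.

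\emph{Compactness of $\Theta_N$.} Since $A$ is convex, compact and metrizable, $\Theta$ is compact in the stable topology, so $\Theta^{|\mathcal{V}_N|}$ is compact and $\Theta_N$ is automatically relatively compact; the content is closedness. Given $\alpha^n\in\A^*$ with $\bigl(\L^{\alpha^n,k}(\upgamma(\alpha^n))\bigr)_k$ converging in $\Theta^{|\mathcal{V}_N|}$, one lifts the joint $\P^{\alpha^n}$-laws of the data $(\calI,W^{\alpha^n},B,\tau)$, the state $X^\T$, and the relaxed-control coordinate $\upgamma(\alpha^n)$ to the canonical space $\Omega_c\times\Gamma$: the $\Omega_c$-marginals are tight (the data have $\alpha$-independent laws and the estimates above control $X^\T$) and $\Gamma$ is compact, so a subsequence converges weakly to some $\mathbb{Q}$. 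A martingale-problem verification \`a la \cite{CarmonaLacker15, CarmonaCommonNoise} then shows that under $\mathbb{Q}$ the data have the prescribed laws, the limiting relaxed control $\hat\upgamma$ is admissible --- in particular $\hat\upgamma_t=\delta_0$ on $[0,\T)$ (a closed constraint) and $\mathbb{F}^{\calI,B,W,D}$-compatibility survives --- and the conditional laws of $\hat\upgamma$ given the cells of $\mathcal{V}_N$ are exactly the limit tuple, so the latter lies in $\Theta_N$ (a chattering/de-randomization argument using the idiosyncratic noise $W$ recovers a strict representative if one insists on $A$-valued controls). Together with convexity this gives the claim for $\Theta_N$.

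\emph{Main obstacle.} Convexity and the tightness/uniform-integrability estimates for $\M_N$ are routine. The crux is the closedness of $\Theta_N$: one must run the weak-limit construction on the extended canonical space so that both the $\mathbb{F}^{\calI,B,W,D}$-compatibility (immersion) of the relaxed-control coordinate and the random entry-time constraint $\hat\upgamma_t=\delta_0$ on $[0,\T)$ pass to the limit --- delicate because $\T$ is itself a discontinuous functional of the price process, and hence of the (converging) measure arguments, rather than a fixed time.
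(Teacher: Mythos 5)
Your tightness/uniform-integrability argument for the pre-compactness of $\M_N$ is essentially the same as the paper's (which cites \cite[Lemma 5.9]{TangpiWang22} together with the bound $\E[\xi\mid A_k]\le \E[\xi]/\P(A_k)$), and your observation that $\P^{\alpha}(A_k)=\P(A_k)$ for every $\alpha\in\A^*$ is correct and useful. The two places where your proof diverges from the paper's are the convexity argument and the claimed closedness of $\Theta_N$, and in both places there are gaps.

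\emph{Convexity.} The paper proves convexity of $\M_N$ by producing, for given $\alpha^1,\alpha^2\in\A^*$ and $\lambda\in(0,1)$, a control $\alpha\in\A^*$ \emph{on the same canonical space} whose Girsanov density is $\frac{d\P^{\alpha}}{d\P}=\lambda\frac{d\P^{\alpha^1}}{d\P}+(1-\lambda)\frac{d\P^{\alpha^2}}{d\P}$ (this is \cite[Lemma 4.7]{TangpiWang22}, and it works because $\alpha_t$ is a $\mathbb G$-adapted convex combination of $\alpha^1_t$ and $\alpha^2_t$, hence $A$-valued since $A$ is a convex interval). Your Bernoulli randomization $\gamma=\chi\alpha+(1-\chi)\beta$ does produce the right vector of conditional laws, but on an \emph{enlarged} space. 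The set $\M_N$ is defined through controls in $\A^*$ on the fixed canonical space $\Omega_c$, so to conclude that the convex combination lies in $\M_N$ you still need to de-randomize, i.e.\ exhibit an $\alpha'\in\A^*$ on $\Omega_c$ realizing the same vector of conditional laws. That de-randomization step is precisely the content of \cite[Lemma 4.7]{TangpiWang22} and is missing from your proof; the parenthetical ``which alters none of the conditional laws under consideration'' is exactly what has to be shown.

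\emph{Compactness of $\Theta_N$.} The paper treats this as immediate from compactness and convexity of $A$ (pre-compactness of $\Theta_N$ follows at once from compactness of $\Theta$ in the stable topology, and convexity from the same randomization/Girsanov argument). Your proposal goes further and attempts to prove \emph{closedness} of $\Theta_N$ via a weak-limit / martingale-problem argument. As you note yourself, this is delicate, and your sketch does not resolve the delicacy: the constraint $\hat\upgamma_t=\delta_0$ on $[0,\T)$ involves the hitting time $\T$, which is a discontinuous functional of the weakly converging data; more fundamentally, $\Theta_N$ is defined using conditional laws of \emph{strict} controls $\upgamma(\alpha)$, and a weak limit of strict controls is a priori only a relaxed control. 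Your ``chattering/de-randomization'' claim that a strict representative can be recovered at the level of conditional laws given the $A_k$ is asserted without proof and is not obviously achievable — indeed the paper defers the strictness of the equilibrium control to Lemma~\ref{Lemma:StrictControl}, which uses strict convexity of $f$ and optimality, neither of which is available here. So this part of your argument does not close the gap it correctly identifies.
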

\begin{proof}
    The statement on $\Theta_N$ is immediate given that $A$ is compact and convex. For convexity of $\M_N$, take $\alpha^1, \alpha^2 \in \A^*$. It suffices to show that for all $\lambda \in [0, 1]$, there is $\alpha \in \A^*$ such that
\begin{equation*}
    \frac{d\P^{\alpha}}{d\P} = \lambda \frac{d\P^{\alpha^1}}{d\P} + (1-\lambda)\frac{d\P^{\alpha^2}}{d\P}.
\end{equation*}
See \cite[Lemma 4.7]{TangpiWang22} for a proof. As for Wasserstein pre-compactness, we show that for each marginal. Since the dimensionality is finite, it suffices to check for each $k \in \{1, \dots, |\mathcal{V}_N|\}$ which follows from \cite[Lemma 5.9]{TangpiWang22} with the additional fact that $\E[\xi|A_k] \leq \E[\xi]/\P(A_k)$ for arbitrary non-negative random variable $\xi$, and $\P(A_k) > 0$ for all $k$.
\end{proof}
Suppose we take any $\vmm^N = (\mm^N_1, \dots, \mm^N_{|\mathcal{V}_N|}) \in \M_N$ and $\vma^N = (\ma^N_1, \dots, \ma^N_{|\mathcal{V}_N|}) \in \Theta_N$. Define the inputs to the optimization problem: for each $(\beta, \upeta) = \omega_0 \in \Omega_0 = \X \times \R_+$,
\begin{equation}\label{eq:conditional_measures_N}
\upmu^N(\omega_0) \ce \sum_{k=1}^{|\mathcal{V}_N|} \mm^N_{k}\ind{(\beta, D(\upeta)) \in A_k} \in \calP_1(\X), \quad \upvartheta^N(\omega_0) \ce \sum_{k=1}^{|\mathcal{V}_N|}\ma^N_k\ind{(\beta, D(\upeta)) \in A_k} \in \Theta.
\end{equation}
 Again $(\upmu^N, \upvartheta^N)$ can be viewed as a $\mathbb{F}^{B, D}$-measurable random process taking values $(\upmu_t^N, \upvartheta_t^N)$ in $\calP(\R) \times \calP(\calP(A))$. By Proposition \ref{Prop:OptimalControl} we obtain strict, optimal control $\hat{\alpha}^{\T, N}\in \A^*$ with $\hat{\alpha}^{\T, N}_t = \hat{a}(Z^{\T, N}_t)$ along with the probability measure $\P^{N} \ce \P^{\hat{\alpha}^{\T, N}}$ on $\Omega_c$, and $W^N_t \ce W_t - \intt \sigma^{-1}\hat{\alpha}^{\T, N}_sds$ is a Brownian motion under $\P^N$. Denote by $\upgamma^N$ the optimal control in the strict relaxed form $\upgamma(\hat{\alpha}^{\T, N}) = (\delta_{\hat{\alpha}_t^{\T, N}}dt)_{t \in [0, T]}$. Define output conditional measures $(\vmm^{N,out},\vma^{N, out}) \in \M_N \times \Theta_N$:
\begin{equation}\label{eq:output_mu_N}
    \mm^{N,out}_k(\cdot) \ce \frac{\P^N(A_k \cap \{X^{\T}\in \cdot\})}{\P^N(A_k)}, \quad \ma^{N, out}_k(\cdot) \ce \frac{\P^N(A_k \cap \{\upgamma^N \in \cdot\})}{\P^N(A_k)}.
\end{equation}
We have now defined our fixed point mapping: $$\Phi^N: \M_N \times \Theta_N \ni (\vec{\mathfrak{m}}^{N}, \vec{\mathfrak{a}}^{N}) 
 \mapsto (\vec{\mathfrak{m}}^{N,out}, \vec{\mathfrak{a}}^{N,out}) \in \M_N \times \Theta_N.$$
\begin{lemma}\label{Lemma:ContFixedPoint_N}
    For each $N \in \N$, the mapping $\Phi^N$ is continuous.  
\end{lemma}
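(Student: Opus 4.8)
The plan is to prove sequential continuity of $\Phi^N$. Fix $N$, take $(\vmm^{N,n}, \vma^{N,n}) \nto (\vmm^N, \vma^N)$ in $\M_N \times \Theta_N$, and show that every subsequence has a further subsequence along which $\Phi^N(\vmm^{N,n}, \vma^{N,n}) \nto \Phi^N(\vmm^N, \vma^N)$; since the limit is always $\Phi^N$ evaluated at the input limit, continuity follows. The first step is to propagate convergence through the construction of the inputs: because $\upmu^{N,n}$ and $\upvartheta^{N,n}$ are obtained from $\vmm^{N,n}$ and $\vma^{N,n}$ by the fixed finite combinations in \eqref{eq:conditional_measures_N}, we get $\upmu^{N,n}(\omega_0) \nto \upmu^N(\omega_0)$ in $\calP_1(\X)$ and $\upvartheta^{N,n}(\omega_0) \nto \upvartheta^N(\omega_0)$ in $\Theta$ for every $\omega_0 \in \Omega_0$. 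Lemma \ref{Lemma:mu_bar_cont} then gives $\bar\tau(\upmu^{N,n}) \nto \bar\tau(\upmu^N)$, hence $\tau^*(\upmu^{N,n}) = \bar\tau(\upmu^{N,n}) \wedge \tau \nto \tau^*(\upmu^N)$ $\P$-a.s. Since the running maximum of the price has $\P$-a.s.\ continuous paths and $\tau$ is atomless (Assumptions \eqref{Assumption:B}, \ref{Assumption:E2}), $\tau^*(\upmu^N)$ is atomless and avoids the entry time (cf.\ the arguments in Lemma \ref{Lemma:mu_bar_cont}), so $\ind{t < \tau^*(\upmu^{N,n})} \nto \ind{t < \tau^*(\upmu^N)}$ for Lebesgue-a.e.\ $t$, $\P$-a.s.; moreover the post-burst impact term $\qv{\rho, \upvartheta^{N,n}_\cdot}$ converges a.e.\ (it is bounded and continuous in the measure variable, using that $A$ is compact and $\rho$ is continuous), and the normalizer stays bounded below by $F_p(P_0) > 0$. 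Consequently the prices, burst times, entry times $\T_n$ and driftless states $X^{\T_n}$ attached to the $n$-th input converge to those attached to the limit, $\P$-a.s.

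The second step feeds this into the BSDE. Using the stability estimate for \eqref{eq:BSDE_Main} (the one invoked in the proof of Proposition \ref{Prop:OptimalControl}) on the random interval $[\T \wedge T, T]$, together with the continuity and quadratic-growth properties of $h$ and $g$ from Remarks \ref{Remark:C} and \ref{Remark:S}, the uniform moment bound \eqref{eq:Xestimate}, and dominated convergence, the terminal data and the driver (evaluated at the limiting $Z$) converge in $L^2$, so $Z^n \nto Z$ in $\mathcal{H}^2_{\mathbb{G}, c}$ — first along the subsequence where the a.e.\ convergences above hold, then for the whole sequence by uniqueness. By continuity of $\hat a$ (\ref{Remark:S1}) and boundedness of $A$, the optimizers $\hat\alpha^{\T, N, n}_t = \ind{t \geq \T_n}\hat a(Z^n_t)$ converge to $\hat\alpha^{\T, N}_t = \ind{t \geq \T}\hat a(Z_t)$ in $\mathcal{H}^2_{\mathbb{G}, c}$, and hence the associated strict relaxed controls $\upgamma^{N,n}$ converge to $\upgamma^N$ in $\Gamma$, in $\P$-probability.

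The third step transfers this to the outputs \eqref{eq:output_mu_N}. Since $A$ is compact, the Girsanov densities $d\P^{N,n}/d\P = \mathcal{E}\big(\int_0^\cdot \sigma^{-1}\hat\alpha^{\T, N, n}_s\, dW_s\big)_T$ form a uniformly integrable family and converge in probability, hence in $L^1(\P)$, to $d\P^N/d\P$. Therefore $\P^{N,n}(A_k) \nto \P^N(A_k) > 0$ for each $k$, and for bounded continuous test functions $\phi$ one has $\E\sqbra{\ind{A_k}\phi(X^\T)\,d\P^{N,n}/d\P} \nto \E\sqbra{\ind{A_k}\phi(X^\T)\,d\P^N/d\P}$; for the control marginal, combining the $L^1$ convergence of the densities with $\upgamma^{N,n}\nto\upgamma^N$ in probability yields the analogous statement. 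This gives weak convergence of $\mm^{N,n,out}_k$ and stable convergence of $\ma^{N,n,out}_k$; the moment bound \eqref{eq:Xestimate} makes $\norm{X^\T}_{\X^*}$ uniformly integrable under the measures $\P^{N,n}$, so the former upgrades to $\W_1$-convergence, while compactness of $\Gamma$ makes stable convergence sufficient for the latter. Hence $\Phi^N(\vmm^{N,n}, \vma^{N,n}) \nto \Phi^N(\vmm^N, \vma^N)$.

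The main obstacle is the BSDE-stability step: $h$ and $g$ depend on the burst parameter $\tau^*$ through the discontinuous indicators $\ind{t<\tau^*}$ and (in $g$) through the random time at which $X^\T$ is evaluated, so there is no Lipschitz-in-parameter stability to invoke. The remedy is exactly the atomlessness of $\tau^*$ (continuity of the running maximum of the price plus absolute continuity of $\tau$), which yields Lebesgue-a.e.\ convergence of the indicators and a.s.\ convergence of the stopped states, after which the $L^2$ stability estimate can be closed by dominated convergence using the uniform $L^p$ bounds in \eqref{eq:Xestimate} and the quadratic growth in \ref{Remark:C1}. Checking the uniform integrability and $L^1$-convergence of the Girsanov densities is the other technical point, but it is routine because the controls live in the compact set $A$.
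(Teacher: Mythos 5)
Your proof follows the paper's argument in its essentials — BSDE stability in $\mathcal H^2$ for the control, then transfer of convergence to the conditional output measures through the Girsanov densities and uniform integrability — and the key technical insight (atomlessness of $\tau^*$, via a.s.\ continuity of the running maximum of the price, to handle the discontinuous indicators) is exactly the one the paper uses. Two points are worth flagging.

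First, you treat the entry times $\T_n$, prices, and driftless state processes $X^{\T_n}$ as $n$-dependent quantities whose convergence must be established. In the paper's proof these are held fixed: in the stability estimate the quantities $X^{\T(p^*)}$, $b^P_s$, $\T(p^*)$ carry no $n$-superscript, and only $\tau^*(\upmu^n)$ and $\qv{\rho, \upvartheta^n_s}$ vary with the input. Your version therefore introduces a convergence claim that is not established (you assert convergence of $\T_n$ and $X^{\T_n}$ but give no argument — and it is not obvious, since the entry time would depend on the post-burst price which in turn depends on $\upvartheta$). Either this dependence is absent, in which case your convergence claims are vacuous and should be dropped, or it is present and you owe a proof. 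Given the paper's formulation, the safer and cleaner route is the former: keep $\T$, $X^\T$, $b^P$ fixed and vary only the burst time and post-burst impact.

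Second, your route to total-variation convergence of $\P^{N,n}\to\P^{N}$ differs from the paper's. You argue that the Girsanov densities converge in probability (because the controls converge in $\mathcal H^2$ and take values in a compact set) and are uniformly integrable (again by compactness of $A$), hence converge in $L^1(\P)$. The paper instead computes the relative entropy $\mathcal H(\P^{\infty}\mid\P^{n}) = \tfrac12\E^{\P^\infty}\bigl[\int_0^T\sigma^{-2}|\hat\alpha^n_t - \hat\alpha^\infty_t|^2\,dt\bigr]\to 0$ explicitly from the Girsanov form and invokes Pinsker's inequality. Both are valid; the entropy argument is slightly more streamlined here because the relative entropy has a closed form, while your argument requires a separate uniform-integrability check but is more elementary and needs less structure. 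The remaining steps — stable-topology convergence of $\ma^{n,\text{out}}_k$ via boundedness of $A$, upgrading weak to $\W_1$ convergence of $\mm^{n,\text{out}}_k$ via the uniform moment bound \eqref{eq:Xestimate} — match the paper.
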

\begin{proof}
    We shall fix $N$ and drop the notation to avoid confusion with the proof steps for sequential continuity. Recall $\hat{\alpha}^{\T}$ is obtained by solving the optimization problem $p^*$-by-$p^*$. Let $\P^{\hat{\alpha}^{\T(p^*)}}$ denote the conditional probability measure on $\Omega$ from Girsanov transform for a given $p^* \in [P_0, \infty)$. Take a sequence of vectors $(\vmm^n, \vma^n) \nto (\vmm^\infty, \vma^\infty) \in \M_N \times \Theta_N$ in the sense that each coordinate converges in Wasserstein distance, and define correspondingly $(\upmu^n, \upvartheta^n)$ for $n \in \N \cup \{\infty\}$. Then $\Q_0$ almost surely, $\upmu^n$ converges to $\upmu^\infty$ in $\M$ and $\upvartheta^n$ converges to $\upvartheta^\infty$ in $\Theta$. For $n \in \N \cup \{\infty\}$ define the discretized conditional probability measures $\upmu^n$ and $\upvartheta^n$ as in \eqref{eq:conditional_measures_N}, and let $Z^{\T(p^*), n}$ be part of the unique solution to the BSDE \eqref{eq:BSDEp*} with $M\equiv 0$ and input $(\upmu^n, \upvartheta^n)$. Let $\hat{\alpha}^{\T(p^*), n}_t = \hat{a}(Z^{\T(p^*), n}_t)$ be the optimal control given $\mathscr{P} = p^*$.  
    
    % Since $B$ remains a Brownian motion independent from $W^{\hat{\alpha}^{\T(p^*)}}$ under $\P^{\hat{\alpha}^{\T(p^*)}}$ for all $p^*$, we can rewrite the outputs as 
    % \begin{align*}
    % \mathfrak{m}'_k(\cdot) & = \int_{[P_0, \infty)}\P^{\hat{\alpha}^{\T(p^*)}}(X^{\T(p^*)} \in \cdot | A_k)\nu_p(dp^*) = \frac{1}{\P(A_k)}\int_{[P_0, \infty)}\P^{\hat{\alpha}^{\T(p^*)}}(A_k  \cap\{X^{\T(p^*)} \in \cdot\})\nu_p(dp^*)\\
    % v_{k, t}(\cdot) & = \frac{1}{\P(A_k)}\int_{[P_0, \infty)}\P^{\hat{\alpha}^{\T(p^*)}}(A_k  \cap\{\hat{\alpha}_t^{\T(p^*)} \in \cdot\})\nu_p(dp^*), \quad\text{and} \quad \mathfrak{a}'_k = \delta_{v_{k, t}}dt.
    % \end{align*}

    We first show that $\hat{\alpha}^{\T(p^*), n} \nto \hat{\alpha}^{\T(p^*), \infty}$ in $\mathcal{H}^2_{\mathbb{G}}$ for \emph{every} $p^* \in [P_0, \infty)$. Recall that optimal controls are continuous in $Z^{\T(p^*)}$, so it suffices to show $Z^{\T(p^*), n} \nto Z^{\T(p^*), \infty}$ in $\mathcal{H}^2_{\mathbb{G}}$. Using the stability property of BSDE solutions \cite[Proposition 54.2]{KarouiMazliakBSDE}, convergence of $Z^{\T(p^*), n}$ is immediately implied if we show $$\limn \E\sqbra{\abs{\Delta_ng}^2+ \int_{\T(p^*)}^T\abs{\Delta_n h_s}^2ds} = 0,$$
    where $\Delta_n g \ce g\pa{X^{\T(p^*)}_T, X^{\T(p^*)}_{\tau^*(\upmu^n)}, \tau^*(\upmu^n)} - g\pa{X^{\T(p^*)}_T, X^{\T(p^*)}_{\tau^*(\upmu^\infty)}, \tau^*(\upmu^\infty)}$ and 
    $$\Delta_n h_s \ce h(s,X^{\T(p^*)}_s, b^P_s, \tau^*(\upmu^n), \qv{\rho, \upvartheta^n_s}, Z^{\T(p^*), \infty}_s) - h(s,X^{\T(p^*)}_s, b^P_s, \tau^*(\upmu^\infty), \qv{\rho, \upvartheta^\infty_s}, Z^{\T(p^*), \infty}_s).$$
By continuity of $g$ we know that $\Delta_n g$ converges to $0$ in probability if both $\tau^*(\upmu^n) \nto \tau^*(\upmu^\infty)$ and $X_{\tau^*(\upmu^n)} \nto X_{\tau^*(\upmu^\infty)}$ in probability. Lemma \ref{Lemma:mu_bar_cont} gives us convergence in burst time. Also, the inventory is continuous everywhere except at entry $\T(p^*)$. Therefore, $X_{\tau^*(\upmu^n)}$ does not converge in probability to $ X_{\tau^*(\upmu^\infty)}$ only if $\T(p^*) = \tau^*(\upmu^\infty)$, which also has probability $0$. Therefore,
    \begin{equation*}
   \P(\{|\Delta_n g|^2 \nto 0\})  \geq 1 - \P\pa{\{\T(p^*) = \tau^*(\upmu^\infty)\}} = 1.
   \end{equation*}
  By dominated convergence theorem, we have $\E[\abs{\Delta_ng}^2] \nto 0$.

Now let $I_n$ denote the random interval $[\tau^*(\upmu^n) \wedge \tau^*(\upmu^\infty),\tau^*(\upmu^n) \vee \tau^*(\upmu^\infty)]$. Observe that $$\abs{\ind{0 \leq t < \tau^*(\upmu^n)} - \ind{0 \leq t < \tau^*(\upmu^\infty)}}^2 = \abs{1-\ind{t \geq \tau^*(\upmu^n)} - 1+ \ind{t \geq \tau^*(\upmu^\infty)}}^2= \ind{t \in I_n}.$$
Define $I^{p^*}_n \ce I_n \cap [\T(p^*), T]$. Remark \ref{Remark:C1} implies
    \begin{align*}
        \E\sqbra{\int_{\T(p^*)}^T\abs{\Delta_n h_s}^2 ds} & \leq \E\sqbra{\int_{\T(p^*)}^T\abs{f_b(s, X^{\T(p^*)}_s, b^P_s)\ind{0 \leq t < \tau^*(\upmu^n)} - f_b(s, X^{\T(p^*)}_s, b^P_s)\ind{0 \leq t < \tau^*(\upmu^\infty)}}^2ds}\\
        & + \E\sqbra{\int_{\T(p^*)}^T\abs{f_c(s, X^{\T(p^*)}_s, \qv{\rho, \upvartheta^n_s})\ind{s \geq \tau^*(\upmu^n)} - f_c(s, X^{\T(p^*)}_s, \qv{\rho, \upvartheta^\infty_s})\ind{s \geq \tau^*(\upmu^\infty)}}^2ds}\\
        & \leq \E\sqbra{\int_{I_n^{p^*}}|f_b(s, X^{\T(p^*)}_s, b^P_s)|^2ds} + 2 \E\sqbra{\int_{I_n^{p^*}}|f_c(s, X^{\T(p^*)}_s, \qv{\rho, \upvartheta^\infty_s})|^2ds} \\
        & + 2\E\sqbra{\int_{\T(p^*)}^T\abs{f_c(s, X^{\T(p^*)}_s, \qv{\rho, \upvartheta^n_s})- f_c(s, X^{\T(p^*)}_s, \qv{\rho, \upvartheta^\infty_s})}^2\ind{s \geq \tau^*(\upmu^n)}ds}.
    \end{align*}
    % By Assumption \ref{Assumption:C2} and allowing $C>0$ to vary by line, applying Cauchy-Schwartz and Lemma \ref{Lemma:Xestimate} yields
    % \begin{align*}
    % \E[|\Delta_n g|^2] & \leq \E\sqbra{\abs{L_g(X^{\T(p^*)}_{\tau^*(\upmu^n)}, X^{\T(p^*)}_{\tau^*(\upmu^\infty)})\pa{|X^{\T(p^*)}_{\tau^*(\upmu^n)} - X^{\T(p^*)}_{\tau^*(\upmu^\infty)}| + \abs{\gamma_g(\tau^*(\upmu^n)) - \gamma_g(\tau^*(\upmu^\infty)}}}^2}\\
    % & \leq \E\sqbra{\abs{L_g(X^{\T(p^*)}_{\tau^*(\upmu^n)}, X^{\T(p^*)}_{\tau^*(\upmu^\infty)})}^4}^{\frac{1}{2}}\E\sqbra{\pa{|X^{\T(p^*)}_{\tau^*(\upmu^n)} - X^{\T(p^*)}_{\tau^*(\upmu^\infty)}| + \abs{\gamma_g(\tau^*(\upmu^n)) - \gamma_g(\tau^*(\upmu^\infty)}}^4}^{\frac{1}{2}}\\
    % & \leq C\pa{\E\sqbra{\abs{X^{\T(p^*)}_{\tau^*(\upmu^n)} - X^{\T(p^*)}_{\tau^*(\upmu^\infty)}}^4} +\E\sqbra{\abs{\gamma_g(\tau^*(\upmu^n)) - \gamma_g(\tau^*(\upmu^\infty)}^4}}^{\frac{1}{2}}
    % \end{align*}

    % In light of Lemma \ref{Lemma:mu_bar_cont} and continuity of $\gamma_g$, the second term converges to $0$ independently from $p^*$. For some $C > 0$ that does not depend on $p^*$, the first term can be decomposed as
    % \begin{align*}
    % \E\sqbra{\abs{X^{\T(p^*)}_{\tau^*(\upmu^n)} - X^{\T(p^*)}_{\tau^*(\upmu^\infty)}}^4} & \leq C\P(\T(p^*)\in I_n) + \E\sqbra{\abs{X^{\T(p^*)}_{\tau^*(\upmu^n)} - X^{\T(p^*)}_{\tau^*(\upmu^\infty)}}^4 | T(p^*)\notin I_n} \\
    % & \leq C\sqbra{\P(\T(p^*)\in I_n) + \E[\abs{W_{\tau^*(\upmu^n)} - W_{\tau^*(\upmu^\infty)}}^4]}
    % \end{align*}
    % which also converges to $0$ uniformly in $p^*$. 
    Lemma \ref{Lemma:mu_bar_cont} implies $\ind{t \in I_n^{p^*}} \nto \ind{t = \tau^*(\upmu^\infty)}$ almost surely. Under Assumption \ref{Assumption:E1} and Remark \ref{Remark:C1}, all terms converge to $0$ by dominated convergence theorem and Fubini's theorem. Therefore, for all $p^* \in [P_0, \infty)$, $\hat{\alpha}^{\T(p^*), n} \nto \hat{\alpha}^{\T(p^*), \infty}$ in $\mathcal{H}^2_{\mathbb{G}}$. Since $A$ is assumed to be bounded, this implies that $\hat{\alpha}^n \nto \hat{\alpha}^{\infty}$ in the $\mathcal{H}^2_{\mathbb{G}, c}$ sense on the product space $\Omega_c$.

    For each $n \in \N \cup \{\infty\}$, denote by $\P^n \ce \P^{\hat{\alpha}^{n}}$ the probability measure on $\Omega_c$ from Girsanov transformation. By construction, $\P^{n} \ll \P$ for all $n$ and $$\frac{d\P^{n}}{d\P^{\infty}} = \calE\pa{\int_0^\cdot \sigma^{-1}\pa{\hat{\alpha}^{n}_t - \hat{\alpha}^{\infty}_t}dW_t}_T.$$ 
    Therefore, by boundedness of $A$ we can calculate the relative entropy
    \begin{equation*}
        \mathcal{H}(\P^{\infty}|\P^{n}) = - \E^{\P^\infty}\sqbra{\log \frac{d\P^{n}}{d\P^{\infty}}} = \frac{1}{2}\E^{\P^\infty}\sqbra{\int_{0}^T\sigma^{-2}\abs{\hat{\alpha}^{n}_t - \hat{\alpha}^{\infty}_t}^2dt}\nto 0,
    \end{equation*}
    Pinsker's inequality implies that $\P^{n}$ converges to $\P^{\infty, p^*}$ in total variation. By triangular inequality and the convergence in $\mathcal{H}^2_{\mathbb{G}, c}$ of controls, we have $$\P^{n} \circ \pa{\hat{\alpha}^{n}_t}^{-1} \nto \P^{\infty} \circ \pa{\hat{\alpha}_t^{\infty}}^{-1} \text{ in } dt\text{-measure}.$$
    Bounded convergence theorem yields $\P^n \circ (\upgamma(\hat{\alpha}^n))^{-1} \nto \P^\infty \circ (\upgamma(\hat{\alpha}^\infty))^{-1}$ in the stable topology. Since $\P(A_k) > 0$ for each $k$, this implies convergence of $\ma^{n, out}_{k} \nto \ma^{\infty, out}_{k}$ as well. Boundedness of $A$ ensures Wasserstein convergence as well. $\P^{n} \nto \P^{\infty}$ in total variation also implies weak convergence of $\mm^{n, out}_{k} \nto \mm^{\infty, out}_{k}$ for each $k$. To show Wasserstein convergence, it suffices to show uniform integrability (see e.g. \cite[Theorem 6.9]{villani09}).
    \begin{align*}
        & \lim_{R \to \infty}\limsup_{n\to\infty}\sup_{k \in \{1, \dots, |\calV_N|\}}\int_{\{\norm{\bx}_{\X^*} > R\}}\norm{\bx}_{\X^*} \mm^{n, out}_k(d\bx)\\
        & \leq \lim_{R \to \infty}\sup_{\alpha \in \A^*}\sup_{k \in \{1, \dots, |\calV_N|\}}\frac{1}{\P(A_k)}\E^{\P^{\alpha}}\sqbra{\norm{X^{\T}}_{\X^*}\ind{\norm{X^\T}_{\X^*} > R}} = 0
    \end{align*}
    which follows from \eqref{eq:Xestimate}. Therefore, since $|\calV_N|$ is finite: $$\Phi^N(\vmm^n, \vma^n) \nto \Phi^N(\vmm^\infty, \vma^\infty) \text{ in } \M_N \times \Theta_N,$$ 
    and continuity of $\Phi^N$ holds for all $N$. 
\end{proof}

\begin{proposition}\label{Prop:FixedPoint}
    The mapping $\Phi^N$ admits a fixed point $(\vmm^N, \vma^N) \in \M_N \times \Theta_N$ for all $N \in \N$.
\end{proposition}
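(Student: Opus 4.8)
The plan is to apply a Schauder-type fixed point theorem to $\Phi^N$; the only real subtlety is that Lemma~\ref{Lemma:Mcompact} gives \emph{pre}-compactness of $\M_N$ — indeed the $\W_1$-limit of conditional laws of strict controls need not itself arise from a strict control — so we first pass to the closure. Let $\overline{\M_N}$ be the $\W_1$-closure of $\M_N$ in $\calP_1(\X^*)^{|\mathcal{V}_N|}$. By Lemma~\ref{Lemma:Mcompact} it is compact, and since convex combinations of probability measures are $\W_1$-continuous it is also convex; $\Theta_N$ is already compact and convex by the same lemma. Each factor embeds, in the usual way, as a convex subset of a locally convex Hausdorff topological vector space: $\calP_1(\X^*)$ sits inside the finite signed measures on $\X^*$ with the weak topology (which on the $\W_1$-compact set $\overline{\M_N}$ coincides with the $\W_1$-topology, the first moments being uniformly integrable thanks to \eqref{eq:Xestimate}), and $\Theta$ sits inside the finite signed measures on the compact space $[0,T]\times\calP(A)$. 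Hence $\mathcal{K}_N\ce\overline{\M_N}\times\Theta_N$ is a nonempty compact convex subset of such a space.

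Next I extend $\Phi^N$ to all of $\mathcal{K}_N$. For $(\vmm,\vma)\in\mathcal{K}_N$, formula \eqref{eq:conditional_measures_N} still produces $\mathbb{F}^{B,D}$-measurable — hence $\mathbb{G}$-progressive with $\mathbb{G}=\mathbb{F}^{\calI,W,B,D}$ — inputs $(\upmu^N,\upvartheta^N)$, and by Lemma~\ref{Lemma:mu_bar_cont} the associated conditional mean processes remain continuous on the closure, so $\bar\tau(\upmu^N)$ and therefore $\tau^*(\upmu^N)$ are well defined. Proposition~\ref{Prop:OptimalControl} (equivalently Corollary~\ref{Cor:BSDEProductSpace}) then applies verbatim to these inputs, yielding a strict optimal control $\hat\alpha^{\T,N}\in\A^*$, the measure $\P^N\ce\P^{\hat\alpha^{\T,N}}$, and — via \eqref{eq:output_mu_N} — an output pair $(\vmm^{N,out},\vma^{N,out})$ which, by the very definitions \eqref{Definition:Mn_discretization}--\eqref{Definition:Thetan_discretization}, belongs to $\M_N\times\Theta_N$. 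This defines a self-map $\widehat\Phi^N:\mathcal{K}_N\to\M_N\times\Theta_N\subseteq\mathcal{K}_N$ that restricts to $\Phi^N$ on $\M_N\times\Theta_N$.

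The continuity of $\widehat\Phi^N$ on $\mathcal{K}_N$ is established exactly as in Lemma~\ref{Lemma:ContFixedPoint_N}, whose proof used only: (i) that a $\W_1$-convergent sequence of inputs $(\vmm^n,\vma^n)$ produces $\Q_0$-a.s.\ convergence of $(\upmu^n,\upvartheta^n)$; (ii) continuity of $\tau^*$ along such sequences, which Lemma~\ref{Lemma:mu_bar_cont} supplies precisely on $\overline{\M_N}$; (iii) the stability estimate for the BSDE \eqref{eq:BSDEp*} together with the dominated-convergence arguments driven by Remark~\ref{Remark:C1} and Assumption~\ref{Assumption:E1}; and (iv) the uniform bound \eqref{eq:Xestimate} to upgrade weak to $\W_1$-convergence of the output. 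None of these is affected by enlarging the domain to the closure. Schauder's fixed point theorem (applied to the continuous self-map $\widehat\Phi^N$ of the compact convex set $\mathcal{K}_N$) then furnishes a fixed point $(\vmm^N,\vma^N)\in\mathcal{K}_N$. Finally, since $\widehat\Phi^N$ takes \emph{all} of its values in $\M_N\times\Theta_N$ (the output is built from the genuine admissible control $\hat\alpha^{\T,N}\in\A^*$), we get $(\vmm^N,\vma^N)=\widehat\Phi^N(\vmm^N,\vma^N)\in\M_N\times\Theta_N$, where $\widehat\Phi^N$ agrees with $\Phi^N$; hence $(\vmm^N,\vma^N)$ is the desired fixed point of $\Phi^N$. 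The step demanding the most care is the passage to the closure: one must confirm that the burst-time map and the stability-based continuity argument both survive there, which is exactly why Lemma~\ref{Lemma:mu_bar_cont} was stated on $\overline{\M_N}$ rather than merely on $\M_N$.
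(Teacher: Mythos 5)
Your proof is correct and follows essentially the same route as the paper: pass to the $\W_1$-closure $\overline{\M_N}$ (compact and convex by Lemma~\ref{Lemma:Mcompact}), extend $\Phi^N$ to the closure noting the BSDE inputs remain well-defined and $\bar\tau$ remains continuous there (Lemma~\ref{Lemma:mu_bar_cont}), apply the Schauder--Tychonoff fixed point theorem (the paper cites the same result under the name Brouwer via Aliprantis--Border Corollary 17.56), and then observe that the range of the extended map lies in $\M_N\times\Theta_N$, so the fixed point is where it should be. Your spelling out of the locally-convex-space embedding and the coincidence of weak and $\W_1$ topologies on the compact set is slightly more detailed than the paper's one-line invocation, but it is the same argument.
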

\begin{proof}
Let $\overline{\M}_N$ denote the closure of $\M_N$, which by Lemma \ref{Lemma:Mcompact} is convex and compact. Note that the input $(\upmu^N, \upvartheta^N)$ to the BSDE is still well-defined for $\vmm^N \in \overline{\M}_N$, except that $\upvartheta^N$ might not be the law of a strict control anymore. Therefore, we can define $\Phi^N$ on the larger domain $\overline{\M}_N \times \Theta_N$. Moreover, the Wasserstein closure preserves the continuity of the mean process. This implies that $\tau^*(\cdot)$ is still continuous on $\overline{\M}_N$ (see the proof of \cite[Theorem 6.1]{TangpiWang22}), so continuity of $\Phi^N$ still holds. Applying Brouwer's fixed point theorem (e.g. \cite[Corollary 17.56]{bookInfDimAnalysis06}) yields a fixed point $(\vmm^N, \vma^N)$ of $\Phi^N$. However, the range of $\Phi^N$ is still strictly in $\M_N \times \Theta_N$ since allowing $\upmu$ to take values in $\overline{\M}_N$ doesn't affect how we construct the optimal control and its corresponding state process. Then $(\vmm^N, \vma^N) \in \M_N \times \Theta_N$ is also a fixed point of $\Phi$ had we not enlarged the domain.
\end{proof}
For each $N$, let $\alpha^N \in \A^*$ be the equilibrium strategy from a fixed point $(\vmm^N, \vma^N)$ from Proposition \ref{Prop:FixedPoint} and $\P^N = \P^{\alpha^N}$. Define the corresponding random laws $(\upmu^N, \upvartheta^N)$ as in \eqref{eq:conditional_measures_N}. Then by construction, $\alpha^N$ minimizes $J^{\upmu^N, \upvartheta^N}_{weak}$ defined in \eqref{eq:MFGobjectiveWeak} over $\A^*$.

\begin{remark}\label{Remark:Optimality}
    For each $N \in \N$, we work with the same filtration $\mathbb{F}^{\calI, W, B, D}$. In fact, by the argument in \cite[Proposition A.10]{TangpiWang22}, $\hat{\alpha}^N$ also minimizes \eqref{eq:MFGobjectiveWeak} had we allowed for a larger filtration $\widetilde{\mathbb{F}} \supseteq \mathbb{F}^{\calI, W, B, D}$. We provide a proof to keep the paper self-contained.
\end{remark}
\begin{proof}
    We fix $N \in \N$ and recall that the uncontrolled state process is defined by $$X^{\T}_t \ce K_0/\mathscr{P} + \sigma(W_t - W_{\T}) \text{ for } t \geq \T \quad \text{and} \quad X^{\T}_t \ce 0 \text{ for } t \in [0, \T).$$
Recall from Remark \ref{Remark:S} that $\hat{a}(\cdot)$ is a continuous function that minimizes the Hamiltonian. For each $\beta \in \A^*$ defined on the probability space $(\Omega, \F, \P, \widetilde{\mathbb{F}})$, by \cite[Theorem 53.1]{KarouiMazliakBSDE} there exists a unique solution $(Y^{\beta}, Z^{\beta}, \mZ^{\beta}, U^{\beta}, N^{\beta}) \in \mathcal{S}^2_{\widetilde{\mathbb{F}}} \times \mathcal{G}^2_{\widetilde{\mathbb{F}}} \times \mathcal{H}^2_{\widetilde{\mathbb{F}}}\times \mathcal{H}^2_{\widetilde{\mathbb{F}}, D} \times \mathcal{S}^2_{\widetilde{\mathbb{F}}}$ to the following BSDE
    \begin{equation}\label{label:BSDEgeneral_intermediate}
        \begin{split}
    Y_t & = g(X^{\T}_T, X^{\T}_{\tau^*(\upmu^N)}, \tau^*(\upmu^N)) + \int_t^T H(s,X^{\T}_s, b^P_s, \tau^*(\upmu^N), \qv{\rho, \upvartheta^N_s}, Z_s, \beta_s)ds\\
    & - \int_t^T Z_sdW_s - \int_t^T \mathfrak{Z}_s dB_s- \int_t^T U_sdM^\tau_s - \int_t^T dN_s, \quad t \in [0, T].
    \end{split}
    \end{equation}
    Remark \ref{Remark:S2} also implies well-posedness for the following BSDE solved on $(\Omega, \F, \P, \widetilde{\mathbb{F}})$:
    \begin{equation}\label{eq:BSDEgeneral}
        \begin{split}
    Y_t & = g(X^{\T}_T, X^{\T}_{\tau^*(\upmu^N)}, \tau^*(\upmu^N)) + \int_t^T \ind{s \geq \T}h(s,X^{\T}_s, b^P_s, \tau^*(\upmu^N), \qv{\rho, \upvartheta^N_s}, Z_s)ds\\
    & - \int_t^T Z_sdW_s - \int_t^T \mathfrak{Z}_s dB_s- \int_t^T U_sdM^\tau_s - \int_t^T dN_s, \quad t \in [0, T].
    \end{split}
    \end{equation}
    whose unique solution we denote by $(\hat{Y}, \hat{Z}, \hat{\mZ}, \hat{U}, \hat{N})$. However, since $(\mu^N, \upvartheta^N)$ are $\mathbb{F}^{B, D}$ measurable, uniqueness of the solution implies that $\hat{N}$ is $\P^N \otimes dt$ almost surely zero, and $(\hat{Y}, \hat{Z}, \hat{\mZ}, \hat{U})$ coincides with the solution of \eqref{eq:BSDE_Main} on $[\T, T]$. Recall from Corollary \ref{Cor:BSDEProductSpace} and the construction of the fixed point mapping $\Phi^N$ that $\P^N \otimes dt$ almost surely, we must also have $\alpha^N_t = \hat{a}(\hat{Z}_t)\ind{t \geq \T}$.
    
    Recall from Remark \ref{Remark:C4} that for any admissible control, the Hamiltonian is $0$ before entry. Then the generator of \eqref{label:BSDEgeneral_intermediate} is $\P \otimes dt$-almost surely greater than the generator of \eqref{eq:BSDEgeneral}, and they are equal when we take $\beta = \alpha^N$. If a comparison principle for the general BSDE \eqref{label:BSDEgeneral_intermediate} holds, then Remark \ref{Remark:Optimality} follows from the argument in the proof of Proposition \ref{Prop:OptimalControl}. To ease some notation, for $t \in [0, T]$ we denote 
    \begin{equation*}
        \Delta H_t(\beta, \alpha^N) \ce H(t, X^{\T}_t, b^P_t, \tau^*(\upmu^N), \qv{\rho, \upvartheta^N_t}, Z^{\beta}_t, \beta_t) - H(t, X^{\T}_t, b^P_t, \tau^*(\upmu^N), \qv{\rho, \upvartheta^N_t}, Z^{\alpha^N}_t, \alpha^N_t).
    \end{equation*}
    
    Due to the presence of compensated martingale $M^\tau$ and orthogonal martingale $N$, additional conditions are required for comparison principle to hold. In light of \cite{Cohen10}, a sufficient condition is the existence of an equivalent measure $\widetilde{\P}$ to $\P$ such that 
    \begin{equation}\label{label:ComparisonSufficientCondition}
    \begin{split}
        S_t & \ce -\inttT \Delta H_s(\beta, \alpha^N)ds + \inttT (Z^{\beta}_s - Z^{\alpha^N}_s)dW_s + \inttT (\mZ^{\beta}_s - \mZ^{\alpha^N}_s)dB_s\\
        &\qquad + \inttT(U^{\beta}_s - U^{\alpha^N}_s)dM^{\tau}_s + \inttT dN^{\beta}_s - \inttT dN^{\alpha^N}_s, \quad t\in[0,T]
    \end{split}
    \end{equation}
    is a martingale under $\widetilde{\P}$. This probability can be constructed by defining $$\frac{d\widetilde{\P}}{d\P} \ce \calE\pa{\int_0^{\cdot}\frac{\Delta H_t(\beta, \alpha^N)}{Z^{\beta}_t - Z^{\alpha^N}_t}\ind{Z^{\beta}_t - Z^{\alpha^N}_t \neq 0}dW_t}_T.$$
    Since $H$ is Lipschitz in $z$, we can apply Girsanov's theorem and conclude by \cite[Theorem 2]{Cohen10} that comparison principle holds. Denote again by $\P^\beta$ the probability measure corresponding to $\beta$, which is defined as in \eqref{eq:Girsanov}. Recall that $\P^N, \P^\beta$ and $\P$ all agree at time $t = 0$ and the Hamiltonian is $0$ for any admissible strategy before $\T$. Using comparison principle and applying optional stopping theorem yield $$J^{\upmu^N, \upvartheta^N}_{weak}(\alpha^N) = \E^{\P^N}\sqbra{Y^{\alpha^N}_{\T}} = \E^{\P^N}\sqbra{Y^{\alpha^N}_0} = \E^{\P}\sqbra{Y^{\alpha^N}_0} \leq \E^{\P^\beta}\sqbra{Y^{\beta}_0}=\E^{\P^\beta}\sqbra{Y^{\beta}_\T} = J^{\upmu^N, \upvartheta^N}_{weak}(\beta).$$
    The inequality holds for all $\mathbb{F}$-admissible $\beta$.
\end{proof}
\begin{remark}\label{Remark:Relaxed_vs_Strict}
    Since the state variable is linearly controlled, for each $\upgamma \in \bbGamma$ we can define an $A$-valued control $\alpha_t = \int_A a \upgamma_t(da)$ in $\A^*$ that corresponds to the same state process. Moreover, Remark \ref{Remark:C3} and Jensen's inequality imply that this control $\alpha$ is at least as good as $\upgamma$ in terms of minimizing the objective value. See \cite[Theorem A.9]{HaussmannLepeltier90} and \cite[Theorem 4.11]{KarouiNguyenJeanblanc87} for more rigorous arguments on existence of strict controls achieving the same value and regarding the weak formulation of the problem. In other words, given inputs $(\upmu, \upvartheta)$, the optimal $A$-valued control is also optimal over relaxed controls for the relaxed objective \eqref{eq:Objective_Relaxed}.
\end{remark}

% Let $(Y^N, Z^N, \mathfrak{Z}^N, U^N)$ denote the unique solution of the BSDE on the product space from Corollary \ref{Cor:BSDEProductSpace} with input $(\vec{\mathfrak{m}}^N, \vec{\mathfrak{a}}^N)$. Then by \eqref{eq:GirsanovLaw}, the process $(X^\T, Y^N, \mathfrak{Z}^N, Z^N, U^N)$ under $\P^N$ solves the following system of forward-backward SDE (FBSDE) on $[\T, T]$:
% \begin{equation}\label{eq:FBSDE_underPN}
% \begin{cases}
%     X^\T_t & = K_0/\mathscr{P} + \intt \hat{\alpha}^N_sds + \sigma(W_t - W_\T)\\
%     Y^N_t & = g(X^{\T}_T, X^{\T}_{\tau^*(\vec{\mathfrak{m}}^N))}, \tau^*(\vec{\mathfrak{m}}^N)) + \int_t^T f(s,X^{\T}_s, M^P_s, P_s, \tau^*(\vec{\mathfrak{m}}^N), \qv{\rho, \vec{\mathfrak{a}}^N_s}, \hat{\alpha}^N_s)ds\\
%      &\quad \qquad - \int_t^T Z^N_sdW^{\hat{\alpha}^N}_s - \int_t^T \mathfrak{Z^N}_s dB_s- \int_t^T U^N_sdM^\tau_s.
% \end{cases}
% \end{equation}

% {\color{blue}Again we extend the processes to $[0, T]$ by setting $Y^N_t = Y^N_\T, \mathfrak{Z}^N_t = Z^N_t = U^N_t = 0$ whenever $t < \T$. (Check measurability requirements).}  Assumptions \eqref{Assumption:B} and \eqref{Assumption:E} along with Remark \ref{Remark:C} imply that
%     \begin{equation}\label{eq:boundBSDEsolutions}
%         \sup_{N \geq 0}\pa{\norm{Y^N}_{\mathcal{S}_{\mathbb{G}, c, \P^N}^2} + \norm{Z^N}_{\mathcal{H}^2_{\mathbb{G}, c, \P^N}} + \norm{\mathfrak{Z}^N}_{\mathcal{H}^2_{\mathbb{G}, c, \P^N}} + \norm{U^N}_{\mathcal{H}_{\mathbb{G}, c, \P^N, D}^2}} < \infty.
%     \end{equation}
\subsubsection{Solutions as Weak Limit Points}
Before taking $N \to \infty$, we shall ``lift'' the marginal laws of $\X^\T$ and $\upgamma$ to a joint law in order to carry more information along the way. Let $\upgamma^N = \upgamma(\alpha^N)$ be the strict control corresponding to the equilibrium strategy $\alpha^N$ from the fixed point mapping.  Recall that $\calI = (K_0, \mathscr{P})$, taking values in $\R \times [p^*, \infty)$ denotes the initial wealth and entry threshold. Define on $\Omega_0$ the lifted random law $\mM^N  \in \calP(\X^* \times \X \times \Gamma \times \R^2)$ by
\begin{equation}\label{eq:LiftedEnvironment}
\mM^N(\omega_0) \ce \sum_{k = 1}^{|\calV_N|}\ind{(\beta, \upeta) \in A_k}\L^{\alpha^N, k}\pa{(X^\T, W^N, \upgamma^N, \calI)}, \quad \omega_0 = (\beta, \upeta) \in \Omega_0.
\end{equation}
Recall the notation from Definition \ref{Def:weakMFGsolution_Relaxed} that $\mM^{N, x}$ and $\mM^{N, \upgamma}$ denote the first and third marginals of $\mM^N$, which are the conditional law of $(X^{\T}, \upgamma^N)$ under $(V^N, D^N)$.
% Denote the relaxed controls by $\upgamma^{N} = (\upgamma^N_tdt)_{t \in [0, T]} \in \Theta$ where $\upgamma^N_t = \delta_{\hat{\alpha}^N_t}$. 
% For each $N \in \N$, define 
% $$\calS^N \ce \pa{Y^N, \int Z^NdW^N, \int |Z^N|^2ds, \int \mathfrak{Z}^NdB, \int U^NdM^\tau},$$
% $$\quad \calI^N \ce \pa{X^\T, \upgamma^N, V^N, \mathfrak{P}^N, K_0, \mathscr{P}, \tau}, \text{and }\calR^N \ce \pa{(W^N, B, M^\tau), \calS^N, \calI^N}.$$
\begin{lemma}\label{Lemma:TightnessLifted}
The sequence $\P^N \circ (B, \tau, \mM^N, X^\T, W^N, \upgamma^N, \calI)^{-1}$ is tight.
\end{lemma}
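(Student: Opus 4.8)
The plan is to establish tightness of the joint law coordinate by coordinate, since a finite product of tight families is tight on the product space. First I would handle the ``fixed'' components: $B$ has law $\W$ under every $\P^N$ (Girsanov only changes the drift of $W$), so its law is constant and hence tight; $\tau$ has law $\nu_\tau$ under every $\P^N$ by the independence built into the admissible setup, so it too is tight; and $\calI=(K_0,\mathscr P)$ has law $\nu_K\otimes\nu_p$ under $\P$, which by Assumption \ref{Assumption:E3} (all moments of $K_0$) gives a tight family (a single measure). For $W^N$, recall $W^N_t = W_t - \int_0^t\sigma^{-1}\alpha^N_s\,ds$ is a $\P^N$-Brownian motion, so $\P^N\circ (W^N)^{-1}=\W$ as well; tightness is immediate.

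Next I would treat $X^\T$ and $\upgamma^N$. For the relaxed controls: $\Theta$ (and hence $\Gamma$) is compact in the stable topology because $A$ is compact, convex and metrizable, as recalled after the definition of $\Theta$; so any family of $\Gamma$-valued random variables is automatically tight, and likewise $\mM^{N,\upgamma}$ lives in a compact space. For $X^\T$: using the state equation \eqref{eq:StateProcess} (equivalently \eqref{eq:DriftlessState} under $\P^N$ with $W^N$) and the uniform moment bound \eqref{eq:Xestimate}, I get $\sup_N \E^{\P^N}[\|X^\T\|_{\X^*}^p]<\infty$ for every $p$. To upgrade this to tightness on $\X^*$ equipped with the metric $d_{\X^*}$ in \eqref{supmetric}, I would use a standard Aldous/Kolmogorov-type criterion: the continuous part $\bar X^\T$ on $[\T,T]$ is controlled by the bounded drift $\alpha^N\in A$ plus a Brownian term, giving uniform Hölder-moment estimates $\E^{\P^N}[|X^\T_t-X^\T_s|^4]\le C|t-s|^2$ on the continuous part, while the single ``jump'' location is the entry time $\T$, which has a tight law on $[0,T]\cup\{T+1\}$ (it is a function of the price process, whose law is controlled via Proposition \ref{Prop:PriceSDEWellPosed} and does not depend on $\alpha^N$ since the price's own dynamics are exogenous to the individual control). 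Compactness in the $\X^*$-metric then follows from the Arzelà--Ascoli characterization applied to $\bar X^\T$ together with tightness of the scalar $\T$; I would phrase this via \cite[Lemma 5.9]{TangpiWang22} or the analogous compactness criterion already invoked there.

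Finally, tightness of $\mM^N$ on $\calP(\X^*\times\X\times\Gamma\times\R^2)$: since $\mM^N$ is a random probability measure, by the standard criterion (e.g. \cite[Proposition 2.2]{CarmonaBookII}) it suffices that the family of mean measures $\E^{\P^N}[\mM^N]=\P^N\circ(X^\T,W^N,\upgamma^N,\calI)^{-1}$ be tight on $\X^*\times\X\times\Gamma\times\R^2$ — but that is exactly the tightness already obtained for those four coordinates. Assembling the pieces, the full vector $(B,\tau,\mM^N,X^\T,W^N,\upgamma^N,\calI)$ has tight law under $\P^N$. The main obstacle I anticipate is the $\X^*$-tightness of $X^\T$: one must be careful that the nonstandard metric $d_{\X^*}$ (which penalizes the entry-time mismatch $|t_1-t_2|$ separately from the sup distance of the continuous counterparts) is correctly matched by controlling the law of $\T$ and the modulus of continuity of $\bar X^\T$ simultaneously, and that the moment bound \eqref{eq:Xestimate} is genuinely uniform in $N$ — which it is, because \eqref{eq:Xestimate} is a supremum over all of $\A^*$ and the price process entering $\T$ and $b^P$ is not affected by the choice of control.
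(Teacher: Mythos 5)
Your coordinate-by-coordinate decomposition and the handling of the fixed components $(B,\tau,\calI,W^N)$ agree with the paper, as does your use of compactness of $\Gamma$ for $\upgamma^N$ and the uniform moment bound \eqref{eq:Xestimate} plus an Aldous-type argument for $X^\T$. (The paper's version of Aldous uses the $L^1$ increment $\E^{\P^N}[|X^\T_{s+\delta}-X^\T_s|]\leq C\delta^{1/2}$ with $s$ a stopping time, rather than your fourth-moment bound restricted to the continuous part and a separate treatment of $\T$; both are legitimate, just different implementations.)

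The genuine gap is in the last step, concerning $\mM^N$. You invoke the standard criterion that tightness of the mean measures $\P^N\circ(X^\T,W^N,\upgamma^N,\calI)^{-1}$ yields tightness of $\P^N\circ(\mM^N)^{-1}$. That is correct, but it only gives tightness of $\mM^N$ when $\calP(\X^*\times\X\times\Gamma\times\R^2)$ carries the \emph{weak} topology. The paper needs tightness with respect to the \emph{Wasserstein-1} topology on that space (over the $\ell_1$ product metric), because the downstream argument uses Lemma \ref{Lemma:mu_bar_cont} to pass continuity of the endogenous burst map $\bar\tau(\cdot)$ to the limit, and that continuity is only available in the Wasserstein metric. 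Weak tightness does not imply Wasserstein tightness; one needs in addition uniform integrability of the first moment of the random measures, i.e.\ a condition of the form
\begin{equation*}
\lim_{R\to\infty}\sup_{N}\E^{\P^N}\bigl[\norm{X^\T}_{\X^*}\ind{\norm{X^\T}_{\X^*}\geq R}\bigr]=0
\end{equation*}
together with the analogous bound for $\norm{W^N}_\infty$ (automatic since $W^N$ is a $\P^N$-Brownian motion) and boundedness of $A$ for the $\Gamma$-coordinate. This is exactly what \eqref{eq:Xestimate} delivers, but you must say so explicitly: you used \eqref{eq:Xestimate} only to get tightness of the marginal $\P^N\circ(X^\T)^{-1}$, not to upgrade the topology on the measure-valued coordinate. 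Without that upgrade, the lemma as stated is weaker than what the rest of Section 3 requires, and the limiting argument for $\tau^*$ would not go through.
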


\begin{proof}
    Assumption \ref{Assumption:E1} implies that 
    $$\E^{\P^N}\sqbra{\sup_{t\in[0, T]} |X^\T_t|} \leq C$$
    % $$\E^{\P^N}\sqbra{\sup_{t\in[0, T]} |X^\T_t| + \sup_{t\in[0, T]}|W_t|} \leq C$$
    for some $C > 0$ that does not depend on $N$. Let $s$ be a $\mathbb{G}$ stopping time and $\delta > 0$. Then 
    \begin{equation*}
    \E^{\P^N}\sqbra{|X^\T_{s + \delta} - X^\T_s|} \leq \E^{\P^N}\sqbra{\int_{s}^{s + \delta}|\hat{\alpha}^N_t|dt + \sigma|W^N_{\rho + \delta} - W^N_{\rho}|} \leq C\delta^{1/2}
    \end{equation*}
    % \begin{align*}
    % \E^{\P^N}\sqbra{|X^\T_{\rho + \delta} - X^\T_\rho|} & \leq \E^{\P}\sqbra{\int_{\rho}^{\rho + \delta}|\hat{\alpha}^N_t|dt + \sigma|W_{\rho + \delta} - W_{\rho}|} \leq C\delta^{1/2}\\
    % \E^{\P^N}\sqbra{|W_{\rho + \delta} - W_\rho|} & \leq \E^{\P^N}\sqbra{\int_{\rho}^{\rho + \delta}\sigma^{-1}|\hat{\alpha}^N_t|dt + |W^N_{\rho + \delta} - W^N_{\rho}|} \leq C\delta^{1/2}
    % \end{align*}
    for a possibly different $C$. These two conditions are sufficient for Aldous' criterion for tightness of $\P^N \circ (X^\T)^{-1}$. For each $N$, $(W^N, B)$ remain independent Brownian motions under $\P^N$, and $(\tau, \calI)$ also retain the same law throughout by independence. Compactness of $A$ implies the tightness of $\P^N \circ (\upgamma^N)^{-1}$. 
    
    Now tightness of $\P^N\circ (\mM^N)^{-1}$ in the weak topology follows from the tightness of $\P^N \circ (X^\T, W^N, \upgamma^N, \calI)^{-1}$ (see the proof of \cite[Lemma 3.16]{CarmonaBookII}). As suggested by Lemma \ref{Lemma:mu_bar_cont}, we will need to equip $\calP(\X^* \times \X \times \Gamma \times \R^2)$ with the Wasserstein metric in order to guarantee continuity of $\bar\tau$, where the product space is equipped with the $l_1$ metric. By \cite[Corollary 5.6]{CarmonaBookI} and boundedness of $A$, the proof of \cite[Lemma 3.16]{CarmonaBookII} implies that it suffices to show uniform square-integrability of $\norm{X^\T}_{\X^*} + \norm{W^N}_\infty$. Since $W^N$ is a Brownian motion under $\P^N$, we only need to show that $$\lim_{R \to \infty}\sup_{N\in\N}\E^{\P^N}\sqbra{\norm{X^\T}_{\X^*}\ind{\norm{X^\T}_{\X^*} \geq R}} < \infty,$$
    which is implied by \eqref{eq:Xestimate}. Finally, tightness of the marginals implies that of the joint law. 
    
    \end{proof}

\begin{remark}\label{Remark:Candidate}
    Let $\bar\P \ce \P^\infty \circ (B, \tau,\mM^\infty, X^\T, W^\infty, \upgamma^\infty, \calI)^{-1}$ be a limit point in Lemma \ref{Lemma:TightnessLifted}. Since we work with a weak limit, we only care about the law, not the processes themselves. Therefore, without loss of generality, we can take $\bar\Omega_0 \ce \R_+ \times \X \times \calP_1(\X^*\times \X \times \Gamma \times \R^2), \ \bar\Omega_1 \ce \X^* \times \X \times \Gamma \times \R^2$. Then let $(\tau, B, \mM^\infty)$, $(X, W^\infty, \upgamma^\infty, \calI)$ be the respective canonical processes on $\bar\Omega \ce \bar\Omega_0 \times \bar\Omega_1$, and $\bar\P$ a probability measure on $\bar\Omega$.  For each $N$, define also on $\bar{\Omega}$ the law $\barP^N \ce \P^N \circ (B, \tau,\mM^N, X^\T, W^N, \upgamma^N, \calI)^{-1}$. We can obviously drop the $\infty$ from the notation (or even $N$, since we can always work on the canonical space), but we keep it to avoid confusion.
\end{remark}

Define the jump process $D$, price process $P$ and entry time $\T$, thanks to the strong solvability from Proposition \ref{Prop:PriceSDEWellPosed}. Take $\mathbb{G}$ to be the $\bar\P$-completed natural filtration generated by $(B, D,\mM^\infty, X^\T, W^\infty, \upgamma^\infty, \calI)$, which is again the progressively enlarged version of $\mathbb{F}^{B, \mM^\infty, X, W^\infty, \upgamma^\infty, \calI}$ by $D_t = \ind{\tau \leq t}$, so $\tau$ is a $\mathbb{G}$-inaccessible stopping time. Let $\F = \G_T \vee \sigma(\tau)$. Weak convergence of $\upgamma^N$ to $\upgamma^\infty$ implies that $X^\T$ with entry time $\T$ satisfies the relaxed state SDE \eqref{eq:StateProcess_Relaxed} on $(\bar\Omega, \F, \bar\P, \mathbb{G})$.

Note that the existence of fixed points of the discretized problem and tightness result both hold for arbitrary choice of discretized process $V^N$. Now we need to specify the time and space grids to ensure $V^N$ approximates $B$ well enough. We will prove Theorem \ref{Theorem:Existence_Weak} by verifying that $\mM^\infty$ defined on $(\bar\Omega, \F, \bar\P, \mathbb{G})$ satisfies the three required conditions in Definition \ref{Def:weakMFGsolution_Relaxed}, namely consistency, compatibility and optimality. 

\subsubsection{Consistency}
Under a slight abuse of notation of re-indexing $N$, we take the same processes used in \cite{CarmonaBookII, Campi21} where time is discretized to the dyadic mesh and space projected to a more refined grid. Specifically, for a fixed $N \in \N$, let $t_i = \frac{iT}{2^N}$ for $i = 0, \dots, 2^N$. Set $v_0 = 0$ and $v_i = v_{i-1} + \Pi^{(N)}(B_{t_i} - B_{t_{i-1}})$, where the projection map $\Pi^{(N)}:\R \to \R$ is defined as  
\begin{equation*}
    \Pi^{(N)}(x) = 
    \begin{cases}
        4^{-N}\lfloor 4^Nx\rfloor & |x| \leq 4^N\\
        4^N \sign(x) & |x| > 4^N
    \end{cases}.
\end{equation*}
Then on the event $E_{N} \ce \{\norm{B}_{\infty} \leq 4^N - 1\}$ the process $V^N$ satisfies 
\begin{equation*}
    \abs{V^N_{t_i} - B_{t_i}} \leq \frac{1}{2^N}, \quad \forall N \in \N
\end{equation*}
and so we always have 
\begin{equation*}
    \norm{V^N - B}_\infty \leq \frac{1}{2^N} + \sup_{s, t \in [0, T]:\ |s - t| \leq 2^{-N}}|B_s - B_t|.
\end{equation*}
The right hand side converges to $0$ in $\P$ probability. Since $B$ remains a Brownian motion under each $\P^N$, $\P^N(E_N)$ converges to $1$. Therefore, for all $\varepsilon > 0$ we have
\begin{equation}\label{eq:V^NProperty}
    \lim_{N \to \infty}\P^N\pa{\norm{V^N - B}_\infty \leq \varepsilon}  = 1.
\end{equation}
With this choice of $V^N$, we have the following consistency property in the limit.

\begin{lemma}\label{Lemma:Consistency}
    For all bounded, uniformly continuous functions $h^0: \X \times \calP_1(\bar\Omega_1) \times \X_D \to \R$ and $h^{1}: \X^* \times \X \times \Gamma \times \R^2\to \R$, we have $$\barE\sqbra{h^0(B, \mM^\infty, D)h^{1}(X^\T, W^\infty, \upgamma^\infty, \calI)}=\barE \sqbra{h^0(B, \mM^\infty, D)\int_{\bar\Omega_1} h^1(x, w, q, \iota)d\mM^\infty(x, w, q, \iota)}.$$
\end{lemma}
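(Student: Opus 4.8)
The plan is to derive the stated identity from an \emph{exact} pre-limit identity and then pass to the limit $N\to\infty$. By the construction \eqref{eq:LiftedEnvironment}, the random measure $\mM^N$ is a version of the regular conditional law of $(X^\T,W^N,\upgamma^N,\calI)$ under $\P^N$ given the finite $\sigma$-algebra $\calV_N=\sigma(V^N,D^N)$: indeed it is constant on each atom $A_k$ of $\calV_N$, equal there to $\L^{\alpha^N,k}$. Since $V^N,D^N$ and $\mM^N$ itself are $\calV_N$-measurable, the random variable $h^0(V^N,\mM^N,D^N)$ is $\calV_N$-measurable, so the tower property yields, for every $N$,
\[
\E^{\P^N}\big[h^0(V^N,\mM^N,D^N)\,h^1(X^\T,W^N,\upgamma^N,\calI)\big]
=\E^{\P^N}\big[h^0(V^N,\mM^N,D^N)\,\qv{h^1,\mM^N}\big],
\]
where $\qv{h^1,\mu}\ce\int_{\bar\Omega_1}h^1\,d\mu$. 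It remains to take $N\to\infty$ on both sides and land on the claimed equality with the limiting data $(B,D,\mM^\infty)$.

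For the left-hand side I would proceed in two steps. First, replace $(V^N,D^N)$ by $(B,D)$ inside $h^0$: since $h^0$ is bounded and uniformly continuous and $h^1$ is bounded, the resulting error is bounded by $\|h^1\|_\infty$ times the $\P^N$-expectation of the modulus of continuity of $h^0$ evaluated at $\|V^N-B\|_\infty+d(D^N,D)$; splitting on the event $\{\|V^N-B\|_\infty+d(D^N,D)\le\varepsilon\}$ and its complement and using \eqref{eq:V^NProperty}, \eqref{eq:D^NProperty} together with boundedness of $h^0$, this error tends to $0$ as $N\to\infty$ and then $\varepsilon\to0$. Second, $\E^{\P^N}[h^0(B,\mM^N,D)\,h^1(X^\T,W^N,\upgamma^N,\calI)]$ converges to $\barE[h^0(B,\mM^\infty,D)\,h^1(X^\T,W^\infty,\upgamma^\infty,\calI)]$ by the weak convergence $\barP^N\to\bar\P$ of Lemma \ref{Lemma:TightnessLifted} (see Remark \ref{Remark:Candidate}), applied to the bounded continuous test function $(b,\mu,d,x,w,q,\iota)\mapsto h^0(b,\mu,d)\,h^1(x,w,q,\iota)$; continuity in the $\mu$-variable is automatic here because the $\mM^N$-component converges in the $\calP_1$-Wasserstein topology, which is stronger than weak convergence. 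The right-hand side is treated identically: the swap of $(V^N,D^N)$ for $(B,D)$ in $h^0$ costs at most $\|h^1\|_\infty$ times the same vanishing modulus-of-continuity term (using $|\qv{h^1,\mM^N}|\le\|h^1\|_\infty$), and then $\E^{\P^N}[h^0(B,\mM^N,D)\,\qv{h^1,\mM^N}]\to\barE[h^0(B,\mM^\infty,D)\,\qv{h^1,\mM^\infty}]$ because $\mu\mapsto\qv{h^1,\mu}$ is bounded and continuous (as $h^1$ is bounded and continuous), so $(b,\mu,d)\mapsto h^0(b,\mu,d)\,\qv{h^1,\mu}$ is again a bounded continuous test function. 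Chaining these limits with the exact identity above gives the lemma.

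The one genuine subtlety, and the step I expect to be the main obstacle, is reconciling the exact conditioning identity, which is phrased in terms of the discretized data $(V^N,D^N,\mM^N)$ under the \emph{moving} measures $\P^N$, with the target identity phrased in terms of the limiting data $(B,D,\mM^\infty)$ under $\bar\P$. This is precisely why the mesh and projection grid in \eqref{eq:V_N}--\eqref{eq:D_N} are chosen so that \eqref{eq:V^NProperty} and \eqref{eq:D^NProperty} hold, and why the test functions are taken uniformly continuous: only then does replacing $V^N$ by $B$ and $D^N$ by $D$ inside $h^0$ cost a modulus-of-continuity term that vanishes uniformly in $N$. A secondary, routine point is that $h^0$ must be evaluated at $\mM^N\in\calP_1(\bar\Omega_1)$, which requires a uniform first-moment bound on $(X^\T,W^N,\upgamma^N,\calI)$ under $\P^N$; this follows from \eqref{eq:Xestimate}, the fact that $W^N$ is a $\P^N$-Brownian motion, compactness of $\Gamma$, and $\calI$ having all moments under Assumption \ref{Assumption:E3}.
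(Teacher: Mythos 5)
Your proof is correct and follows essentially the same route as the paper's: establish the exact tower-property identity at each discretization level $N$ via the $\calV_N$-measurability of $(V^N,D^N,\mM^N)$, swap $(V^N,D^N)$ for $(B,D)$ inside $h^0$ at a cost controlled by uniform continuity together with \eqref{eq:V^NProperty} and \eqref{eq:D^NProperty}, and then pass to the limit using the weak convergence from Lemma~\ref{Lemma:TightnessLifted}. You have simply made explicit the steps the paper states in a single compressed sentence, including the minor point about $\mM^N$ landing in $\calP_1(\bar\Omega_1)$.
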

\begin{proof}
    This equality holds at the discretization level by \eqref{eq:LiftedEnvironment}, namely
    \begin{align*}
        \E^N[h^0(V^N, \mM^N, D^N)h^{1}(X^\T, & W^N, \upgamma^N, \calI)]\\
        & = \E^N \sqbra{h^0(V^N, \mM^N, D^N)\int_{\bar\Omega_1} h^1(x, w, q, \iota)d\mM^N(x, w, q, \iota)}.
    \end{align*}
    By \eqref{eq:V^NProperty} and \eqref{eq:D^NProperty}, uniform continuity of the function $h^0$ allows us to swap $(V^N, D^N)$ with $(B, D)$ on both sides of the equality above without changing the limits, if they exist. Boundedness of $h_0, h_1$ and weak convergence enables us to take $N\to \infty$ and retain equality in the limit.
\end{proof}
Lemma \ref{Lemma:Consistency} says that $\mM^\infty$ is a version of conditional distribution of $(X^\T, W^\infty, \upgamma^\infty, \calI)$ given $(B, \mM^\infty, D)$. which implies the consistency of marginals $\mM^{\infty, x}$ and $\mM^{\infty, \upgamma}$. We want to carry the conditional joint laws because of the compatibility condition. 

\subsubsection{Optimality}
For each $N \in \N \cup \{\infty\}$, define $\mP^N \ce (\upmu^N, \upvartheta^N) =(\mM^{N, x}, \mM^{N, \upgamma})$ to be the first and third marginals of $\mM^N$. Define the state process corresponding to relaxed control any $\upgamma \in \bbGamma$ as in \eqref{eq:StateProcess_Relaxed} but in the environment $\P^N$ using $W^N$, namely
\begin{equation*}
    X^{N, \upgamma} \ce  \ind{t \geq \T}K_0/\mathscr{P} + 
    \intt\int_A a\upgamma(ds, da) + \sigma (W^N_{t \vee \T} - W^N_{\T}), \quad t \in [0, T].
\end{equation*}
Then in particular, $X^{N, \upgamma^N} = X^\T$. Recalling \eqref{eq:Objective_Relaxed} the objective function for relaxed controls, we now define for each $N$ the objective under the environment $\P^N$: $$J^N(\upgamma) \ce \E^{\barP^N}\sqbra{g(X^{N, \upgamma}_T, X^{N, \upgamma}_{\tau^*(\upmu^N)}, \tau^*(\upmu^N)) + \int_{\T \wedge T}^T \int_A f(s, X^{N, \upgamma}_s, b^P_s, \tau^*(\upmu^N), \qv{\rho, \upvartheta^N_s}, a)\upgamma(da,ds)}.$$

\begin{lemma}\label{Lemma:Optimality} $\lim_{N\to\infty}J^N(\upgamma^N)=J^\infty(\upgamma^\infty)$.
\end{lemma}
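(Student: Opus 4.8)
The plan is to observe that, since the state process $X^{N,\upgamma^N}$ equals the canonical process $X^\T$ on $\bar\Omega$, the number $J^N(\upgamma^N)$ is the $\barP^N$-expectation of one and the same functional of the canonical data $(B,\tau,\mM,X^\T,\upgamma,\calI)$. Writing $\mM^x,\mM^{\upgamma}$ for the first and third marginals of the canonical lifted measure, and letting $P$ (hence $b^P$, $\gamma$ and the entry time $\T$) denote the functional of $(B,\tau,\mM)$ produced by \eqref{eq:PriceDynamics} and Proposition \ref{Prop:PriceSDEWellPosed}, set
\[
\Psi\ce g\bigl(X^\T_T,X^\T_{\tau^*(\mM^x)},\tau^*(\mM^x)\bigr)+\int_{\T\wedge T}^T\int_A f\bigl(s,X^\T_s,b^P_s,\tau^*(\mM^x),\qv{\rho,\mM^{\upgamma}_s},a\bigr)\,\upgamma(da,ds).
\]
By construction of $\barP^N$ (recall \eqref{eq:LiftedEnvironment}, \eqref{eq:conditional_measures_N} and Remark \ref{Remark:Candidate}) one has $J^N(\upgamma^N)=\E^{\barP^N}[\Psi]$ for every $N\in\N\cup\{\infty\}$, so the lemma reduces to proving $\E^{\barP^N}[\Psi]\to\E^{\barP^\infty}[\Psi]$, knowing that $\barP^N\to\barP^\infty$ weakly (Lemma \ref{Lemma:TightnessLifted}, Remark \ref{Remark:Candidate}).

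I would obtain this through the continuous mapping theorem combined with uniform integrability. First, by the Skorokhod representation theorem realize the $\barP^N$ on one auxiliary space so that the canonical variables converge $\barP^\infty$-almost surely; it then suffices to show that $\Psi$ is $\barP^\infty$-a.s.\ continuous on $\bar\Omega$ (in the appropriate product of Wasserstein, sup, and stable topologies). The continuity inputs are: (i) $\mM\mapsto\tau^*(\mM^x)=\bar\tau(\mM^x)\wedge\tau$ is $\barP^\infty$-a.s.\ continuous, extending Lemma \ref{Lemma:mu_bar_cont} to $\calP_1(\X^*)$ and using that $\zeta$ is continuous and strictly increasing, $\nu_p(\{P_0\})>0$ and $\tau$ is non-atomic, exactly as in the proof of \cite[Theorem 6.1]{TangpiWang22}; (ii) $(B,\tau,\mM)\mapsto P$ is continuous, from the monotone-drift well-posedness of Proposition \ref{Prop:PriceSDEWellPosed} (here $P^+$ depends only on $B$, while $P^-$ depends continuously on $B$, $\tau^*$, $\gamma_{\tau^*}$ and on $\mM^{\upgamma}$ through $\qv{\rho,\mM^{\upgamma}_s}_{F_p}$, continuous in the stable topology since $\rho$ is locally bounded and $F_p\ge\nu_p(\{P_0\})>0$), whence $b^P$, $\gamma$, $\T$ are continuous; (iii) $\barP^\infty(\tau^*(\mM^x)=\T)=0$, so the only jump of the c\`adl\`ag path $X^\T$ (at $\T$) does not sit at the evaluation point $\tau^*$, making $X^\T\mapsto X^\T_{\tau^*}$ and $X^\T\mapsto X^\T_\T$ continuous at $\barP^\infty$-a.e.\ point while the indicators $\ind{s<\tau^*}$, $\ind{s\ge\tau^*}$ of Remark \ref{Remark:C1} jump only on a $ds$-null set; (iv) continuity of $a\mapsto f$ (Remark \ref{Remark:C1}) with the stable topology on $\Gamma$ makes $\upgamma\mapsto\int_A f(\cdots)\upgamma(da,ds)$ continuous. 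Combining (i)--(iv) with the a.s.\ convergence of the Skorokhod copies gives $\Psi\to\Psi$ a.s.

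It remains to rule out escape of mass. Using the quadratic bound $|f(t,x,\mb,\upeta,\varrho,a)|\le\ell_f(1+|x|^2+|\mb|^2)$ of Remark \ref{Remark:C1} and Remark \ref{Remark:C2} for the terminal term, one gets $|\Psi|^2\le C\bigl(1+\sup_t|X^\T_t|^4+\bigl(\int_0^T|b^P_s|^2ds\bigr)^2+\sup_{\upeta,\alpha}|g(X^\T_T,X^\T_\upeta,\upeta)|^2\bigr)$; since \eqref{eq:Xestimate} bounds $\E^{\barP^N}[\sup_t|X^\T_t|^4]$ and the a priori estimates for $P$ (via Proposition \ref{Prop:PriceSDEWellPosed} and the growth of $b$, exactly as in the display following \eqref{eq:TerminalCost}) bound $\E^{\barP^N}[\sup_t|P_t|^4]$, both uniformly in $N$, we obtain $\sup_N\E^{\barP^N}[|\Psi|^2]<\infty$. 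Hence $\{\Psi\}$ is uniformly integrable under $\{\barP^N\}$ and Vitali's theorem yields $\E^{\barP^N}[\Psi]\to\E^{\barP^\infty}[\Psi]$, which is the claim.

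The main obstacle I anticipate is establishing the $\barP^\infty$-a.s.\ continuity in (i)--(iii): propagating continuity through the path-dependent price SDE with merely monotone (non-Lipschitz) drift, and, above all, controlling the exceptional events $\{\tau^*=\T\}$ and the non-tangency of $\bar\mu$ at the barrier $\zeta$, since those are precisely the configurations where the c\`adl\`ag evaluation $X^\T_{\tau^*}$ and the hitting-time functionals $\bar\tau$, $\T$ can fail to be continuous. The uniform-integrability part, by contrast, is routine given \eqref{eq:Xestimate} and the bounds on $P$.
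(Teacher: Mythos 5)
Your proposal is essentially the same argument as the paper's, repackaged. Both proofs reduce the lemma to the convergence of expectations of a fixed functional of the canonical data under $\barP^N \to \barP^\infty$ weakly, and both rest on exactly two ingredients: (a) $\barP^\infty$-a.s.\ continuity of that functional (entering the paper's proof implicitly in the claim that ``weak convergence implies $\lim_N|J^{N,k}(\upgamma^N)-J^{\infty,k}(\upgamma^\infty)|=0$''), and (b) uniform integrability coming from \eqref{eq:Xestimate} and the a priori bounds on $P$.

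The only difference is how the limit exchange is organized. The paper truncates $f,g$ to bounded $f^k,g^k$, passes to the limit in $N$ for the bounded functional, then lets $k\to\infty$ using $L^2$ tail estimates that are uniform in $N$; you instead identify $J^N(\upgamma^N)=\E^{\barP^N}[\Psi]$ for a single unbounded $\Psi$, invoke Skorokhod representation, a.s.\ continuity and Vitali. These are standard, interchangeable packagings. Your version buys a slightly cleaner statement of the reduction at the cost of needing an explicit a priori moment bound on $\Psi$ itself: the estimate you write, $\sup_N\E^{\barP^N}[|\Psi|^2]<\infty$, involves $\E[(\int_0^T|b^P_s|^2ds)^2]$, i.e.\ fourth moments of $\sup_t|P_t|$, which Proposition \ref{Prop:PriceSDEWellPosed} does not literally state (it only records $\E[\|X\|_\infty^2]<\infty$). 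This is harmless — the same Gr\"onwall argument gives all polynomial moments, or you could bound $\E[|\Psi|^{1+\epsilon}]$ instead, or simply mimic the paper's truncation which works with second moments throughout — but it is worth flagging as a small extra step you would owe the reader.

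The part you rightly single out as the genuine technical content, namely the $\barP^\infty$-a.s.\ continuity of the hitting-time maps and the evaluation $X^\T\mapsto X^\T_{\tau^*}$ across the single jump at $\T$, is exactly the issue the paper addresses earlier (Lemma \ref{Lemma:mu_bar_cont}, and the $\P(\T(p^*)=\tau^*(\upmu^\infty))=0$ argument inside the proof of Lemma \ref{Lemma:ContFixedPoint_N}); the paper's proof of the present lemma leans on those results silently, while you make the dependence explicit. So no gap — you are, if anything, a bit more careful about what continuity is really needed — and the approaches are otherwise the same.
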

\begin{proof}
    Since the convergence in Lemma \ref{Lemma:TightnessLifted} is weak, we need to uniformly approximate $f, g$ by bounded functions. For $k \in \N$ and $x \in \R$, denote by $\underline{x}_k$ the projection of $x$ on $[-k, k]$. Define $f^k: [0, T] \times \R \times \R \times \R \times [0, T] \times \R \times A \to \R$ and $g: \Omega \times \R \times \R \times [0, T] \to \R$ by: 
    \begin{gather}
        \label{eq:fk}f^k(t, x, \mb, \upeta, \varrho, a) = \kappa(a) + \phi \underline{x}_k^2 - \underline{x}_k\pa{\underline{\mb}_k\ind{t < \upeta} + \varrho\ind{t \geq \upeta}},\\
        \label{eq:gk}g^k(x, y, \upeta) = c \underline{x}_k^2 + \beta_{\upeta}\underline{\gamma_{\upeta}}_k\underline{y}_k.
    \end{gather}
    Recall that $\gamma$ here is the bubble component defined in \eqref{eq:bubble_component}. Since we only care about $\gamma$ at the burst time, we can equivalently take $\gamma_t = \intt b^P_s ds$. Since the price impact functions $\kappa$ and $\rho$ are continuous, compactness of $A$ implies that for each $k \in \N$, there exists some $C_k > 0$ such that  $$\abs{g^k(X^\T_T, X^\T_{\tau^*(\upmu^N)}, \tau^*(\upmu^N))} + \abs{\int_{\T \wedge T}^T\int_A f^k(s, X^{\T}_s, b^P_s, \tau^*(\upmu^N), \qv{\rho, \upvartheta^N_s}, a)\upgamma(da,ds)} \leq C_k.$$
    For $N \in \N \cup \{\infty\}$ and $k \in \N$, define the approximated objective $J^{N, k}$ on $\bbGamma$ by $$J^{N, k}(\upgamma) \ce \E^{\P^N}\sqbra{g^k(X^\T_T, X^\T_{\tau^*(\upmu^N)}, \tau^*(\upmu^N)) + \int_{\T \wedge T}^T \int_A f^k(s, X^{\T}_s, b^P_s, \tau^*(\upmu^N), \qv{\rho, \upvartheta^N_s}, a)\upgamma(da,ds)}.$$
    Then weak convergence implies that $\lim_{N \to \infty}|J^{N, k}(\upgamma^N) - J^{\infty, k}(\upgamma^\infty)| = 0$. To shorten the notation, let $\tau^{N} \ce \tau^*(\upmu^N)$. Using \eqref{eq:Xestimate}, we have 
    \begin{align*}
        \sup_{N \in \N} \E^{\P^N}\sqbra{\sup_{t \in [0, T]}\abs{X^\T_t - \underline{X^\T_t}_k}^2} & = \sup_{N \in \N}\E^{\P^N}\sqbra{\sup_{t \in [0, T]}\abs{|X^\T_t| - k}^2\ind{|X^\T_t| > k}}\\
        & \leq \sup_{N \in \N}\E^{\P^N}\sqbra{\norm{X^\T}^2_{\X^*}\ind{\norm{X^\T}_{\X^*} > k}} \kto 0.
    \end{align*}
    Recall that $P$ has the same law under $\P^N$ for each $N$. Then similarly, Assumption \eqref{Assumption:B} and Proposition \ref{Prop:PriceSDEWellPosed} together imply
    \begin{align*}
        \sup_{N \in \N} \E^{\P^N}\sqbra{\sup_{t \in [0, T]}|b^P_t - \underline{b^P_t}_k|^2} & = \E^{\P^1}\sqbra{\sup_{t \in [0, T]}|b^P_t - \underline{b^P_t}_k|^2} \leq \E^{\P^1}\sqbra{\norm{b^P}_\infty\ind{\norm{b^P}_\infty > k}}\kto 0.\\
        \sup_{N \in \N} \E^{\P^N}\sqbra{\sup_{t \in [0, T]}|\gamma_t - \underline{\gamma_t}_k|^2} & = \E^{\P^1}\sqbra{\sup_{t \in [0, T]}|\gamma_t - \underline{\gamma_t}_k|^2} \leq \E^{\P^1}\sqbra{\norm{\gamma}^2_{\infty}\ind{\norm{\gamma}_\infty > k}}\\
        & \leq \E^{\P^1}\sqbra{T\norm{b^P}_\infty \ind{T\norm{b^P}_\infty > k}} \kto 0.
    \end{align*}
    These uniform integrability properties, along with Assumption \eqref{Assumption:B} and the separability condition in \ref{Remark:C1}, imply that there exists $C > 0$ such that  
    \begin{equation}\label{label:g-gk}
        \begin{gathered}
        \sup_{N \in \N}\E^{\P^N}\sqbra{|g - g^k|(X^\T_T, X^\T_{\tau^{N}}, \tau^{N})}  \leq c\sup_{N \in \N}\E^{\P^N}\sqbra{\abs{(X^\T_T)^2 - \underline{(X^\T_T)}_k^2}}\\ + C\sqrt{\sup_{N \in \N}\E^{\P^N}[|\gamma_{\tau^N} -\underline{\gamma_{\tau^N}}_k|^2]} + C\sqrt{\sup_{N \in \N}\E^{\P^N}[|X^\T_{\tau^N} - \underline{X^\T_{\tau^N}}_k|^2]} \ \kto 0,
     \end{gathered}
    \end{equation}
     and also
     \begin{equation}\label{label:f-fk}
    \begin{gathered}
     \sup_{N \in \N}\E^{\P^N}\Bigl[  \int_{\T \wedge T}^T\int_A |f - f^k|(s,  X^{\T}_s,  b^P_s, \tau^N, \qv{\rho, \upvartheta^N_s}, a)\upgamma^N(da,ds) \Bigr]\\ 
     \leq |\phi| \intT\sup_{N \in \N}\E^{\P^N}\sqbra{\abs{(X^\T_s)^2 - \underline{(X^\T_s)}_k^2}}ds
    + C\intT\sup_{N \in \N}\E^{\P^N}\sqbra{\abs{X^\T_s - \underline{(X^\T_s)}_k}^2}ds \\
    + \intT\sqrt{\sup_{N \in \N}\E^{\P^N}[|b^P_s -\underline{b^P_s}_k|^2]} + \sqrt{\sup_{N \in \N}\E^{\P^N}[|X^\T_s -\underline{X^\T_s}_k|^2]} \ ds \kto 0.
    \end{gathered}
    \end{equation}
Therefore, for any fixed $k \in \N$, 
\begin{align*}
\abs{J^N(\upgamma^N) - J^{\infty}(\upgamma^\infty)} & \leq \abs{J^N(\upgamma^N) - J^{N, k}(\upgamma^N)} + \abs{J^{N, k}(\upgamma^N) - J^{\infty, k}(\upgamma^\infty)} + \abs{J^{\infty, k}(\upgamma^\infty) - J^{\infty}(\upgamma^\infty)}.
\end{align*}
Taking limit $N\to \infty$ gives
\begin{equation*}
\lim_{N \to \infty}\abs{J^N(\upgamma^N) - J^{\infty}(\upgamma^\infty)} \leq \sup_{N \in \N}\abs{J^N(\upgamma^N) - J^{N, k}(\upgamma^N)} + \abs{J^{\infty, k}(\upgamma^\infty) - J^{\infty}(\upgamma^\infty)}.
\end{equation*}
Taking $k\to \infty$ on the right hand side and using \eqref{label:g-gk} and \eqref{label:f-fk} give the result.
    \end{proof}

Let $\beta \in \bbGamma$ be another $\mathbb{G}$-admissible relaxed strategy. Following the proof of Lemma \ref{Lemma:Optimality}, we also have $J^N(\beta) \Nto J(\beta)$. Remarks \ref{Remark:Optimality} and \ref{Remark:Relaxed_vs_Strict} together imply that $J^N(\upgamma^N) \leq J^N(\beta)$ for each $N$. Taking $N\to \infty$ on both sides we have $J^\infty(\upgamma^\infty) \leq J^\infty(\beta)$ for all $\beta \in \bbGamma$, so optimality is proved.

Now recall that for an $A$-valued control $\alpha \in \A^*$, we denote by $\upgamma(\alpha)$ its corresponding strict control in the space of relaxed controls, where each time marginal is the Dirac measure at $\alpha_t$. Using the optimality lemma above, we can in fact show that $\upgamma^\infty$ must be a strict control.
\begin{lemma}\label{Lemma:StrictControl}
    There is a version of $\upgamma^\infty$ that is $\mathbb{F}^{B, X^\T,  W^\infty, \calI, D}$-progressively measurable that is a strict control taking the form $\upgamma^\infty = \upgamma(\hat{\alpha}^\infty)$ for some $\hat{\alpha}^\infty \in \A^*$. 
\end{lemma}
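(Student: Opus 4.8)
The plan is to replace $\upgamma^\infty$ by its barycentre, use the strict convexity of the running cost together with the optimality of $\upgamma^\infty$ established just above to deduce that the barycentre in fact reproduces $\upgamma^\infty$, and then read the barycentre off the state equation to get adaptedness to the smaller filtration. Concretely, I would set $\hat\alpha^\infty_t \ce \int_A a\,\upgamma^\infty_t(da)$. As $A$ is a compact interval, $\hat\alpha^\infty$ is an $A$-valued, $\mathbb G$-progressive process with $\hat\alpha^\infty_t\ind{t\in[0,\T)}=0$ $\bar\P$-a.s., so $\upgamma(\hat\alpha^\infty)\in\bbGamma$. Since \eqref{eq:StateProcess_Relaxed} is affine in the control and $\int_A a\,\upgamma^\infty_t(da)=\hat\alpha^\infty_t$ for a.e.\ $t$, the state process is unchanged \emph{pathwise}: $X^{\T,\upgamma(\hat\alpha^\infty)}=X^{\T,\upgamma^\infty}=X^\T$, $\bar\P$-a.s.; in particular the terminal term $g(X^\T_T,X^\T_{\tau^*(\upmu^\infty)},\tau^*(\upmu^\infty))$, the drift $b^P$ and the burst time $\tau^*(\upmu^\infty)$ are identical for the two controls, and in \eqref{eq:RunningCost} the control enters only through the strictly convex $\kappa$. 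Jensen's inequality then gives, for $\bar\P\otimes dt$-a.e.\ $(\omega,t)$ with $t\geq\T(\omega)$,
\begin{equation*}
\int_A f\big(t,X^\T_t,b^P_t,\tau^*(\upmu^\infty),\qv{\rho,\upvartheta^\infty_t},a\big)\,\upgamma^\infty_t(da)\;\geq\; f\big(t,X^\T_t,b^P_t,\tau^*(\upmu^\infty),\qv{\rho,\upvartheta^\infty_t},\hat\alpha^\infty_t\big),
\end{equation*}
with equality iff $\upgamma^\infty_t=\delta_{\hat\alpha^\infty_t}$ (cf.\ Remark \ref{Remark:Relaxed_vs_Strict}). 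Hence $J^\infty(\upgamma(\hat\alpha^\infty))\leq J^\infty(\upgamma^\infty)$; since $\upgamma^\infty$ minimises $J^\infty$ over $\bbGamma$ this is an equality, and integrating the nonnegative pointwise gap forces $\upgamma^\infty_t=\delta_{\hat\alpha^\infty_t}$ for $\bar\P\otimes dt$-a.e.\ $(\omega,t)$ with $t\geq\T(\omega)$, while on $[0,\T)$ both sides equal $\delta_0$. Thus $\upgamma^\infty=\upgamma(\hat\alpha^\infty)$, $\bar\P$-a.s., i.e.\ $\upgamma^\infty$ is strict.

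It then remains to exhibit a version of $\hat\alpha^\infty$ that is $\mathbb F^{B,X^\T,W^\infty,\calI,D}$-progressive. By Remark \ref{Remark:Candidate}, under $\bar\P$ the process $W^\infty$ is a $\mathbb G$-Brownian motion and $X^\T$ solves \eqref{eq:StateProcess_Relaxed} driven by $\upgamma(\hat\alpha^\infty)$, so $\bar\P$-a.s.
\begin{equation*}
\mathscr A_t \ce \int_0^t\hat\alpha^\infty_s\,ds \;=\; X^\T_t-\ind{t\geq\T}\,K_0/\mathscr{P}-\sigma\big(W^\infty_{t\vee\T}-W^\infty_{\T}\big),\qquad t\in[0,T].
\end{equation*}
Because $\T=\inf\{t\in[0,T]:X^\T_t\neq 0\}\wedge(T+1)$ $\bar\P$-a.s.\ (entry is witnessed either by the jump of size $K_0/\mathscr{P}$ or, when $K_0=0$, by the Brownian part leaving the origin instantly), $\T$ is a stopping time of the natural filtration of $X^\T$, and $\calI=(K_0,\mathscr{P})$ is initial data; hence $\mathscr A$ is a continuous, $\mathbb F^{B,X^\T,W^\infty,\calI,D}$-adapted process, absolutely continuous in $t$ with derivative valued in $A$. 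Taking $\tilde\alpha_t\ce\limsup_{h\downarrow 0}h^{-1}\big(\mathscr A_t-\mathscr A_{(t-h)^+}\big)$ yields an $\mathbb F^{B,X^\T,W^\infty,\calI,D}$-progressive, $A$-valued process (a $\limsup$ of averages of $A$-valued quantities over the convex set $A$) which equals $\hat\alpha^\infty$ for $\bar\P\otimes dt$-a.e.\ $(\omega,t)$ by Lebesgue differentiation. Redefining $\hat\alpha^\infty\ce\tilde\alpha\,\ind{\cdot\geq\T}\in\A^*$ gives a strict control with $\upgamma^\infty=\upgamma(\hat\alpha^\infty)$ as required.

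The main obstacle is this last reduction: one must make sure that the entry time $\T$ can be recovered from the controlled state alone — so that the drift-extraction identity for $\mathscr A$ is valid in the small filtration without reference to the conditional law $\mM^\infty$ — and that the density of $\mathscr A$ can be selected in a progressively measurable, $A$-valued way. The convexity-and-optimality part is routine given Remark \ref{Remark:Relaxed_vs_Strict} and the optimality of $\upgamma^\infty$ already established.
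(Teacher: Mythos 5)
Your proof follows the paper's own argument essentially verbatim: define the barycentre $\hat\alpha^\infty_t=\int_A a\,\upgamma^\infty_t(da)$, invoke strict convexity of $f$ in $a$ together with Jensen's inequality and the optimality of $\upgamma^\infty$ to conclude the relaxed control is in fact strict, and then read off the measurability of $\int_0^\cdot\hat\alpha^\infty_s\,ds$ from the state equation \eqref{eq:StateProcess}, followed by Lebesgue differentiation to obtain a progressive version. The one place you go beyond the paper is in spelling out why $\T$ is recoverable from $X^\T$ (rather than requiring the conditional law $\mM^\infty$ or the price process); the paper leaves this implicit, and your elaboration is correct and worth making explicit.
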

\begin{proof}
    Define $\alpha^\infty_t \ce \int_A a\upgamma^\infty_t(da)$ for $t \in [0, T]$ and $\tilde{\upgamma} \ce \upgamma(\alpha^\infty)$. Then $\alpha^\infty \in \A^*$ and $\tilde{\upgamma} \in \bbGamma$ is a strict control. It is obvious that $\tilde{\upgamma}$ and $\upgamma^\infty$ both give rise to the same state process $X^\T$ according to \eqref{eq:StateProcess_Relaxed}. Using \emph{strict} convexity of $f$ in $a$ and Jensen's inequality, we have
    \begin{align*}
        J^\infty(\tilde{\upgamma}) & = \bar\E\sqbra{g(X^\T_T, X^\T_{\tau^*(\upmu^\infty)}, \tau^*(\upmu^\infty)) + \int_{\T \wedge T}^T \int_A f(s, X^{\T}_s, b^P_s, \tau^*(\upmu^\infty), \qv{\rho, \upvartheta^\infty_s}, a)\tilde{\upgamma}(da,ds)}\\
        & = \bar\E\sqbra{g(X^\T_T, X^\T_{\tau^*(\upmu^\infty)}, \tau^*(\upmu^\infty)) +\int_{\T \wedge T}^T f(s, X^{\T}_s, b^P_s, \tau^*(\upmu^\infty), \qv{\rho, \upvartheta^\infty_s}, \alpha^\infty_s)ds}\\
        & \leq \bar\E\sqbra{g(X^\T_T, X^\T_{\tau^*(\upmu^\infty)}, \tau^*(\upmu^\infty)) +\int_{\T \wedge T}^T \int_A f(s, X^{\T}_s, b^P_s, \tau^*(\upmu^\infty), \qv{\rho, \upvartheta^\infty_s}, a)\upgamma^\infty(da,ds)}\\
        & = J^\infty(\upgamma^\infty).
    \end{align*}
    The inequality is strict (which contradicts with optimality of $\upgamma^\infty$) unless $\upgamma^\infty = \upgamma(\alpha^\infty)$.  

    Lebesgue differentiation theorem allows us to define $\hat{\alpha}^\infty \in \A^*$ by $$\hat{\alpha}^\infty_t = \begin{cases}
        \limn n\int_{(t-1/n)+}^t \alpha^\infty_sds & \text{ if the limit exists}\\
        0 & \text{ otherwise}.
    \end{cases}$$
    Then $\bar\P \otimes dt$ almost surely, $\hat{\alpha}^\infty_t = \alpha^\infty_t$. Note that $\hat{\alpha}^\infty$ shares the same measurability with $\int_0^\cdot \alpha^\infty_s ds$, which by \eqref{eq:StateProcess} is $\mathbb{F}^{X^\T, W^\infty,B, \calI, D}$ measurable. 
\end{proof}
    A consequence of the lemma above is that we can drop either $\upgamma^\infty$ or $X^\T$
    from the definition of $\mathbb{G}$ and simply consider $\mathbb{G} = \mathbb{F}^{B, \mM^\infty, X^\T, W^\infty, \calI, D} = \mathbb{F}^{B, \mM^\infty, \upgamma^\infty, W^\infty, \calI, D}$. Moreover, both optimality and consistency still hold for $\upgamma^\infty = \upgamma(\hat{\alpha}^\infty)$. In fact, this is the case for every limit point of the sequence in Lemma \ref{Lemma:TightnessLifted}.
\subsubsection{Compatibility}
 Following Definition \ref{Def:weakMFGsolution_Relaxed}, we need to show that $\bar{\mathbb{F}} \ce \mathbb{F}^{\calI, \mM^\infty, W^\infty, B, D}$ is immersed in $\mathbb{G} = (\G_t)_{t\in [0, T]}$ defined above. We need to keep in mind that while $\calI$ is $\G_0$-measurable, it is not $\F^{\X^\T}_0$-measurable due to random entry, which is why we need to treat $\calI$ separately. 
\begin{lemma}\label{Lemma:Immersion}
    The filtration $\mathbb{G}$ is compatible with $(\calI, \mM^\infty, W^\infty, B, D)$.
\end{lemma}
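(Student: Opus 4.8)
The plan is to verify the immersion property via the characterization in Proposition~\ref{Prop:Immersion}, transferring it from the discretized problems to the limit. First note that by Lemma~\ref{Lemma:StrictControl} and the remark following it, $\mathbb{G} = \mathbb{F}^{\calI,\mM^\infty,X^\T,W^\infty,B,D}$, so it suffices to show that $\bar{\mathbb{F}} \ce \mathbb{F}^{\calI,\mM^\infty,W^\infty,B,D}$ is immersed in $\mathbb{G}$. By Proposition~\ref{Prop:Immersion} this is equivalent to $\bar\F_T \perp \G_t \mid \bar\F_t$ for every $t$; since $\G_t$ is generated by $\bar\F_t$ together with $(X^\T_s)_{s\le t}$, the factors measurable with respect to $\bar\F_t$ split off, and — using that $X^\T$ is $\bar\P$-a.s.\ continuous at every deterministic time (which holds as in Lemma~\ref{Lemma:mu_bar_cont} because the only jump of $P$ is negative and $\mathscr{P}$ has no atom in $(P_0,\infty)$, so $\bar\P(\T=t)=0$ for $t>0$, while $X^\T_0$ is $\calI$-measurable) — a monotone-class argument reduces the claim to: for all dyadic $0\le s_1<\dots<s_m\le t$ and all bounded continuous $\Phi$,
\begin{equation*}
\barE\big[\Phi(X^\T_{s_1},\dots,X^\T_{s_m})\mid\bar\F_T\big] = \barE\big[\Phi(X^\T_{s_1},\dots,X^\T_{s_m})\mid\bar\F_t\big].
\end{equation*}

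Second, I would establish the analogue of this identity at the discretization level, where it is in fact an \emph{adaptedness} statement. Working under $\P^N$ and substituting $dW = dW^N + \sigma^{-1}\hat\alpha^N\,dt$ into \eqref{eq:BSDE_Main}, the coupled forward–backward system for $(X^\T,Y,Z,\mathfrak{Z},U)$ is driven by the Brownian motions $(W^N,B)$ and the jump process $D$, with coefficients depending only measurably on $(\calI,\mM^N)$: the price $P$, the burst time $\tau^*(\upmu^N)$ and $\qv{\rho,\upvartheta^N}$ are $\sigma(B,D,\mM^N)$-measurable, and $\mM^N$ is itself $\sigma(V^N,D^N)\subseteq\sigma(B,D)$-measurable by \eqref{eq:LiftedEnvironment}. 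By the well-posedness invoked in Proposition~\ref{Prop:OptimalControl}, the solution — hence $X^\T$, $\hat\alpha^N=\hat a(Z)$ and $\upgamma^N$ — is $\bar{\mathbb{F}}^N \ce \mathbb{F}^{\calI,\mM^N,W^N,B,D}$-adapted under $\P^N$, so $\mathbb{G}^N = \bar{\mathbb{F}}^N$ and the displayed identity holds trivially under $\barP^N$, the conditioned random variable being $\bar\F^N_t$-measurable.

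Third comes the core of the argument, the passage to the limit: pointwise adaptedness of $X^\T$, which is what we have for each $N$, is not preserved under weak convergence, but the weaker immersion property is, once it is written in terms of bounded continuous functionals evaluated at deterministic times. Concretely, the pre-limit identity can be recast as $\barE^N\big[\Phi\,\Psi\,(\Xi - \barE^N[\Xi\mid\bar\F^N_t])\big]=0$ for all bounded continuous $\Phi$ of $(X^\T_{s_i})_{s_i\le t}$, bounded continuous $\Psi$ of $\calI$, $\mM^N$ and the restriction to $[0,t]$ of $(W^N,B,D)$, and bounded continuous $\Xi$ of the increments of $(W^N,B,D)$ on $[t,T]$; here $\barE^N[\Xi\mid\bar\F^N_t]$ is an honest continuous functional because $(W^N,B)$ remains a two-dimensional Brownian motion under $\P^N$ and $D$ has the controlled compensator \eqref{eq:compensatorMG}, so these future increments are conditionally independent of $\bar\F^N_t$. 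Using the Wasserstein weak convergence $\barP^N\Rightarrow\bar\P$ from Lemma~\ref{Lemma:TightnessLifted} and Remark~\ref{Remark:Candidate}, the $\bar\P$-a.s.\ continuity of $X^\T, W^\infty, B$ at the times $s_i,t$, and the consistency identity of Lemma~\ref{Lemma:Consistency} (to pin down the limit of $\barE^N[\Xi\mid\bar\F^N_t]$), one lets $N\to\infty$ and recovers the identity of the first paragraph under $\bar\P$. Alternatively, one may directly invoke the stability-of-compatibility lemma of \cite{CarmonaBookII} (as used in \cite{Campi21}): $\barP^N\Rightarrow\bar\P$ together with the immersion of $\bar{\mathbb{F}}^N$ in $\mathbb{G}^N$ for every $N$ yields the immersion of $\bar{\mathbb{F}}$ in $\mathbb{G}$.

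The hard part will be exactly this transfer step, and within it the interplay between the non-standard Skorokhod-type space $(\X^*,d_{\X^*})$ of \eqref{supmetric} and the random entry time: the whole test-function argument rests on $X^\T$ being $\bar\P$-a.s.\ continuous at each fixed time and on $\T$ being a stopping time of the small filtration, which is precisely what $\bar\P(\T=t)=0$ and the $\sigma(B,D,\mM)$-measurability of $\T$ provide. Tracking $\mM^\infty$ — which, unlike $\mM^N$, is only known at time $0$ in $\bar{\mathbb{F}}$ and is self-referentially characterized through Lemma~\ref{Lemma:Consistency} — consistently through the limiting procedure is the other delicate point.
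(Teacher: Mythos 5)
Your plan is genuinely different from the paper's. The paper works entirely in the limit: it first proves the consistency Lemma~\ref{Lemma:Consistency} (that $\mM^\infty$ is a version of the conditional law of $(X^\T,W^\infty,\upgamma^\infty,\calI)$ given $(B,\mM^\infty,D)$), deduces in particular that $W^\infty$ is a $\bar\P$-Brownian motion independent of $(B,\mM^\infty,D)$, and then obtains the required conditional independence $\F^{X^\T}_t\perp\bar\F_T\mid\bar\F_t$ by a short chain of conditional-expectation manipulations — first producing \eqref{label:Lemma:Compatibility} for bounded test functions, then bootstrapping through \eqref{label:ImmersionLemma}. No pre-limit immersion statement and no FBSDE adaptedness are ever invoked.

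The concrete gap in your proposal is the claim that the solution of the coupled forward--backward system under $\P^N$ is $\bar{\mathbb{F}}^N=\mathbb{F}^{\calI,\mM^N,W^N,B,D}$-adapted, which you justify by ``the well-posedness invoked in Proposition~\ref{Prop:OptimalControl}.'' That proposition establishes solvability of a \emph{BSDE under $\P$} in the filtration $\mathbb{F}^{\calI,W,B,D}$, with the driftless $X^\T$ as an exogenous input. After the Girsanov change to $\P^N$ and the substitution $dW=dW^N+\sigma^{-1}\hat\alpha^N\,dt$, you obtain a fully coupled FBSDE ($X^\T$ depends on $\hat a(Z)$ through the drift, and $Z$ depends on $X^\T$ through the terminal and running costs). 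Solvability of that system in the filtration generated by $W^N$ is a separate, non-trivial assertion and is not a consequence of Proposition~\ref{Prop:OptimalControl}. In the weak formulation, the optimal control is $\mathbb{F}^{\calI,W,B,D}$-adapted, but $W$ itself is not a priori $\mathbb{F}^{\calI,W^N,B,D}$-measurable; showing that it is amounts exactly to showing what you assume. Indeed, the paper proves the ``strong control'' reduction only \emph{after} immersion (Section~\ref{Section:StrongControl}), via optional projection, and that argument uses \eqref{eq:OptionalProjection_2}, which needs immersion — so establishing strong adaptedness before immersion is precisely the circularity the paper's route is designed to avoid.

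Your third step (recasting the identity in terms of bounded continuous functionals, using the Wasserstein convergence and the a.s.\ continuity of $X^\T$ at fixed times) is in the spirit of stability-of-compatibility results and could in principle be made rigorous, but it still feeds on the pre-limit adaptedness above. The cleaner route, and the one the paper takes, is to push all the limit-taking into a single place (the consistency identity), after which compatibility becomes a purely ``static'' deduction; you should either adopt that route or supply an independent proof that the FBSDE under $\P^N$ is uniquely solvable in $\bar{\mathbb{F}}^N$.
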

\begin{proof}
    By Proposition \ref{Prop:Immersion}, it suffices to show that for all $t \in [0, T]$, $\F^{X^\T}_t$ is conditionally independent from $\bar\F_T = \F^{\calI, \mM^\infty, W^\infty, B, D}_T$ given $\bar\F_t = \F^{\calI, \mM^\infty, W^\infty, B, D}_t$. We follow the proof of \cite[Lemma 3.7]{CarmonaCommonNoise}. 
    
    Lemma \ref{Lemma:Consistency} implies that $W^\infty$ is a $\bar\P$-Brownian motion independent from $(B, \mM^\infty, D)$. Fix $t \in [0, T]$. Consider three bounded functions $\phi^m_t, \phi^w_{t+}$ and $\phi^1_t$ where $\phi^m_t: \calP_1(\bar\Omega_1) \to \R$ is $\F_t^{\mM^\infty}$ measurable, $\phi^w_{t+}: \X \to \R$ is $\sigma(W_s - W_t: s \in [t, T])$ measurable, and $\phi^1_t: \bar\Omega_1 \to \R$ is $\F^{\calI, X^\T, W^\infty}_t$ measurable. By Lemma \ref{Lemma:Consistency} and property of Brownian motion we have
    \begin{align*}
    \barE\sqbra{\phi_t^m(\mM^\infty)\int_{\bar\Omega_1}\phi^1_td\mM^\infty}\barE\sqbra{\phi^w_{t+}(W^\infty)} & = \barE\sqbra{\phi_t^m(\mM^\infty)\phi^1_t(X^\T, W^\infty, \upgamma^\infty, \calI)\phi^w_{t+}(W^\infty)}\\
        & =\barE\sqbra{\phi_t^m(\mM^\infty)\int_{\bar\Omega_1}\phi^w_{t+}(w)\phi^1_t(x, w, q, \iota)d\mM^\infty(x, w, q, \iota)}.
    \end{align*}
    Since this holds for all $\phi_t^m$, $\bar\P$ almost surely we have
    \begin{equation}\label{label:Lemma:Compatibility}
        \barE\sqbra{\phi^w_{t+}(W^\infty)}\E^{\mM^\infty}[\phi^1_t(X^\T, \upgamma^\infty, \calI, W^\infty)]= \E^{\mM^\infty}[\phi^w_{t+}(W^\infty)\phi^1_t(X^\T, \upgamma^\infty, \calI, W^\infty)]
    \end{equation}
    where by $\E^{\mM^\infty}[\phi(X^\T, \upgamma^\infty, \calI, W^\infty)]$ we mean the integral $\int_{\bar{\Omega}}\phi d\mM^\infty$ for $\phi : \bar\Omega_1 \to \R$. Note that this expectation is $\bar{\F}_t$-measurable if $\phi$ is $\F^{\calI, X^\T, W^\infty}_t$-measurable.
    
    Additionally, consider bounded functions $\phi^\iota, \phi^x_t, \varphi_t, \varphi_T$ where $\phi^\iota: \R^2 \to \R$ is Borel measurable, $\phi^x_t: \X^* \to \R$ is $\F^{X^\T}_t$ measurable, $\varphi_t$ and $\varphi_T$ are functions from $\X \times \calP_1(\bar\Omega_1) \times \X_D$ to $\R$ that are $\F^{B, \mM^\infty, D}_t$ and $\F^{B, \mM^\infty, D}_T$ measurable, respectively. Using \eqref{label:Lemma:Compatibility} and Lemma \ref{Lemma:Consistency}, we have
    \begin{equation}
    \begin{split}\label{label:ImmersionLemma}
    \barE[&\phi^x_t(X^\T) \varphi_T(B, \mM^\infty, D)\phi^\iota(\calI)\phi^w_{t+}(W^\infty)\phi_t^w(W^\infty)\varphi_t(B, \mM^\infty, D)]\\
    & = \barE\sqbra{\E^{\mM^\infty}[\phi^x_t(X^\T)\phi^\iota(\calI)\phi^w_{t+}(W^\infty)\phi_t^w(W^\infty)](\varphi_T\cdot\varphi_t)(B, \mM^\infty, D)}\\
    & = \barE\sqbra{\E^{\mM^\infty}[\phi^x_t(X^\T)\phi^\iota(\calI)\phi_t^w(W^\infty)](\varphi_T\cdot\varphi_t)(B, \mM^\infty, D)}\barE[\phi^w_{t+}(W^\infty)]\\
    & = \barE\biggl[\bar\E\sqbra{\phi^x_t(X^\T)\phi^\iota(\calI)\phi_t^w(W^\infty)| \bar\F_t}(\varphi_T\cdot\varphi_t)(B, \mM^\infty, D)\biggr]\barE[\phi^w_{t+}(W^\infty)]\\
    & = \barE\biggl[\bar\E\sqbra{\phi^x_t(X^\T)\phi^\iota(\calI)\phi_t^w(W^\infty)| \bar\F_t}\varphi_t(B, \mM^\infty, D)\barE\sqbra{\varphi_T(B, \mM^\infty, D)|\bar{\F}_t}\biggr]\barE[\phi^w_{t+}(W^\infty)]\\
    & = \barE\sqbra{\barE[\phi_t^x(X^\T)|\bar\F_t]\barE\sqbra{\varphi_T(B, \mM^\infty, D)|\bar\F_t}\varphi_t(B, \mM^\infty, D)\phi^\iota(\calI)\phi_t^w(W^\infty)\phi^w_{t+}(W^\infty)},
    \end{split}
    \end{equation}
    where the last equality follows from the independence of $\phi^w_{t+}(W^\infty)$ and $\bar\F_t$. Since $\phi^\iota$ and $\phi_t^w$ are arbitrary, we can replace them with bounded $\phi^{\iota}\cdot\varphi^{\iota}$ and $\phi^w_t \cdot \varphi_t^w$, each with the same corresponding mesurability requirements. Then by definition of conditional expectation we have 
    \begin{align*}
    \barE\bigl[\phi^x_t( X^\T) &\varphi_T(B, \mM^\infty, D)\phi^\iota(\calI)\phi^w_{t+}(W^\infty)\phi_t^w(W^\infty)| \bar\F_t\bigr]\\
    & = \barE\sqbra{\phi^x_t( X^\T)| \bar\F_t}\barE\sqbra{\varphi_T(B, \mM^\infty, D)\phi^\iota(\calI)\phi^w_{t+}(W^\infty)\phi_t^w(W^\infty) | \bar\F_t}.
    \end{align*}
    We conclude by noting that $\F^{W^\infty}_T$ is generated by $\phi^w_{t+}(W^\infty)\phi_t^w(W^\infty)$ with arbitrary $\phi^w_{t+}$ and $\phi^w_t$.
\end{proof}
We have then finished the proof of Theorem \ref{Theorem:Existence_Weak}.

\section{Strong Control and Separability by Burst}\label{Section:StrongControl}
\subsection{Strong Control in Original Environment}
Recall from Lemma \ref{Lemma:StrictControl} that the weak control found in the previous section is in fact a strict control $\upgamma(\alpha^\infty)$, and $\alpha^\infty$ is $\mathbb{F}^{B, X^\T, W^\infty, \mM^\infty, \calI, D}$ progressive. In order to obtain an equilibrium with strong control, we will show that $\alpha^\infty$ is $\mathbb{F}^{B, W^\infty, \mP^\infty, \calI, D}$ measurable after bringing the lifted environment $\mM^\infty$ back to the ``original'' environment $\mP^\infty = (\upmu^\infty, \upvartheta^\infty) = (\mM^{\infty, x}, \mM^{\infty, \upgamma})$. 
\subsubsection{Back to Original Environment}
The reason for lifting the environment is solely for the proof of the compatibility lemma \ref{Lemma:Immersion}, in particular the first and third equality in \eqref{label:ImmersionLemma}. Recall from \eqref{eq:LiftedEnvironment} that we took $\mM^N$ to be the joint conditional law of $(X^\T, W^N, \upgamma^N, \calI)$ given $(B, D)$ under $\P^N$. We did this to ease the notation in the consistency and compatibility lemmas. Notice that we did not need the full fledged joint law in deriving \eqref{label:ImmersionLemma}, but only the product of the marginals. This implies that for fixed $N \in \N$, we could alternatively define for each $(\beta, \upeta) \in \X \times \R_+$:
$$\widetilde{\mM}^N(\beta, \upeta)  \ce \sum_{k = 1}^{|\calV_N|}\ind{(\beta, D(\upeta)) \in A_k}\L^{\alpha^N, k}(X^\T) \otimes \L^{\alpha^N, k}(W^N)\otimes \L^{\alpha^N, k}(\upgamma^N) \otimes \L^{\alpha^N, k}(\calI).$$
This version still carries the necessary inputs $\mP^N$ to the BSDE \eqref{eq:BSDE_Main} as its first and third marginals, and tightness of $\P^N\circ (\widetilde{\mM}^N)^{-1}$ follows immediately from that of $\P^N\circ (\mM^N)^{-1}$ in Lemma \ref{Lemma:TightnessLifted}. Then we take a limit point $\bar\P \ce \P^\infty \circ (B, \tau,\widetilde{\mM}^\infty, X^\T, W^\infty, \upgamma^\infty, \calI)^{-1}$ and follow the same argument in Remark \ref{Remark:Candidate} to work on the canonical space. In particular, $\widetilde{\mM}^\infty$ is the canonical process on $\calP_1(\bar{\Omega}_1)$. Following the argument in Lemma \ref{Lemma:Consistency}, the fixed point property for each $N \in \N$ now leads to a weaker consistency in the limit. Namely, for all bounded, uniformly continuous, $\R$-valued functions $h^0, h^1_x, h^1_w, h^1_\upgamma, h^1_\iota$ with respective domains $\X \times \calP_1(\bar{\Omega}_1) \times \X_D, \X^*, \X, \Gamma, \R^2$, we have
\begin{equation}\label{eq:Consistency_Product}
\begin{split}
    \barE \biggl[&h^0(B, \widetilde{\mM}^\infty, D)\int_{\X^*} h^1_x(x)d\widetilde{\mM}^{\infty}(x)\int_{\X}h^1_w(w)d\widetilde{\mM}^{\infty}(w)\int_{\Gamma}h^1_{\upgamma}(q)d\widetilde{\mM}^{\infty}(q)\int_{\R^2}h^1_\iota(\iota)d\widetilde{\mM}^{\infty}(\iota)\biggr]\\
    & = \barE[h^0(B, \widetilde{\mM}^\infty, D)h^{1}_x(X^\T)h^1_w(W^\infty)h^1_{\upgamma}(\upgamma^\infty)h^1_\iota(\calI)]\\
    & = \barE \sqbra{h^0(B, \widetilde{\mM}^\infty, D)\int_{\bar\Omega_1} h^1_x(x)h^1_w(w)h^1_{\upgamma}(q)h^1_\iota(\iota)d\widetilde{\mM}^\infty(x, w, q, \iota)},
\end{split}
\end{equation}
which results from taking $N \to \infty$ of the following equalities by construction
\begin{equation*}
\begin{split}
    \barE \biggl[&h^0(V^N, \widetilde{\mM}^N, D^N)\int_{\X^*} h^1_x(x)d\widetilde{\mM}^{N}(x)\int_{\X}h^1_w(w)d\widetilde{\mM}^{N}(w)\int_{\Gamma}h^1_{\upgamma}(q)d\widetilde{\mM}^{N}(q)\int_{\R^2}h^1_\iota(\iota)d\widetilde{\mM}^{N}(\iota)\biggr ]\\
    & = \barE[h^0(V^N, \widetilde{\mM}^N, D^N)h^{1}_x(X^\T)h^1_w(W^N)h^1_{\upgamma}(\upgamma^N)h^1_\iota(\calI)]\\
    & = \barE \sqbra{h^0(V^N, \widetilde{\mM}^N, D^ N)\int_{\bar\Omega_1} h^1_x(x)h^1_w(w)h^1_{\upgamma}(q)h^1_\iota(\iota)d\widetilde{\mM}^N(x, w, q, \iota)}.
\end{split}
\end{equation*}

This is also sufficient for the consistency requirement in Definition \ref{Def:weakMFGsolution_Relaxed}. Similarly, with $\widetilde{\mM}^\infty$ the equality \eqref{label:Lemma:Compatibility} holds only for $\phi^1_t$ taking the form of a product, separable in each coordinate. This weaker property, however, is sufficient for \eqref{label:ImmersionLemma} and hence the compatibility requirement. Since the optimality property only depends on the marginals and thus is not influenced, we can replace $\mM^\infty$ with $\widetilde{\mM}^\infty$ in the final filtration $\mathbb{G} = \mathbb{F}^{B, X^\T, W^\infty, \widetilde{\mM}^\infty, \calI, D}$.

Note from \eqref{eq:Consistency_Product} that $\barP$ almost surely, $\widetilde{\mM}^\infty$ is a product measure of its four marginals by uniqueness of measures on the product space. More importantly, its second and fourth are almost surely the Wiener measure and $(\nu_K \otimes \nu_p)$, respectively, since for each $N \in \N$, $(W^N, \calI, B, D)$ are mutually independent under $\P^N$. Being complete, the filtration $\mathbb{F}^{\mP^{\infty}}$ in the original environment coincides with $\mathbb{F}^{\widetilde{\mM}^\infty}$ from the lifted environment. Therefore, we can equivalently take $\mathbb{G} = \mathbb{F}^{B, X^\T, W^{\infty}, \mP^\infty, \calI, D}$, and the compatibility condition reads that $\bar{\mathbb{F}} \ce \mathbb{F}^{B, W^{\infty}, \mP^\infty, \calI, D}$ is immersed in $\mathbb{G}$.

\subsubsection{Strong Control via Optional Projection}
To further strengthen the measurability property of $\upgamma$ from $\mathbb{G}$ to $\bar{\mathbb{F}}$, we follow the proof of \cite[Proposition 4.4]{CarmonaCommonNoise}. Recall that the state equation \eqref{eq:StateProcess} with $W^\infty$ as the Brownian motion is satisfied by $X^\T$ and $\hat{\alpha}^\infty$. By optional projection we can find $\bar{\mathbb{F}}$-optional processes $\bar{X}^\T$ and $\bar{\alpha}$ such that for any finite $\bar{\mathbb{F}}$-stopping time $\rho$: 
\begin{equation}\label{eq:OptionalProjection_1}
    \bar{X}^{\T}_{\rho} \ce \barE[X^\T_{\rho}|\bar{\F}_{\rho}], \quad \bar{\alpha}_{\rho} \ce \barE[\hat{\alpha}^\infty_{\rho} | \bar{\F}_{\rho}], \quad \as
\end{equation}
Since $\bar{\mathbb{F}}$ is immersed in $\mathbb{G}$, Proposition \ref{Prop:Immersion} implies that for each $0 \leq s \leq t \leq T$, 
\begin{equation}\label{eq:OptionalProjection_2}
    \bar{X}^{\T}_{s} = \barE[X^\T_{s}|\bar{\F}_t], \quad \bar{\alpha}_{s} = \barE[\hat{\alpha}^\infty_{s} | \bar{\F}_t], \quad \as
\end{equation}
Using Fubini's theorem for conditional expectation along with \eqref{eq:OptionalProjection_2} on \eqref{eq:StateProcess}, we can replace $\bar{X}^{\T}$ by a modification such that $\bar{\P}$ almost surely 
\begin{equation*}
    \bar{X}^{\T}_t = \ind{t \geq \T}K_0/\mathscr{P} + 
    \intt \bar{\alpha}_s ds + \sigma (W^\infty_{t \vee \T} - W^\infty_{\T}), \quad t \in [0, T].
\end{equation*}
Notice that given $\mP^\infty = (\widetilde{\mM}^{\infty, x}, \widetilde{\mM}^{\infty, \upgamma}) \equiv (\upmu^\infty, \upvartheta^{\infty})$, the bubble burst time $\tau^*(\upmu^\infty)$ is a $\bar{\mathbb{F}}$-stopping time, and the bubble component $\gamma_{\tau^*(\upmu^\infty)}$ is $\bar{\F}_{\tau^*(\upmu^\infty)}$ measurable. Recall also from \ref{Remark:C1} that the running cost $f$ depends on $\tau^*(\upmu^\infty)$ only through $D$, which is $\bar{\mathbb{F}}$-adapted.
Then by conditional Jensen's inequality, Remark \ref{Remark:C3} and \eqref{eq:OptionalProjection_1},
\begin{align*}
    J^{\upmu^\infty, \upvartheta^\infty}(\hat\alpha^\infty) & = \barE\sqbra{\int_{\T\wedge T}^Tf(s, X^{\T}_s, b^P_s, \tau^*(\upmu^\infty), \qv{\rho, \upvartheta_s^\infty}, \hat{\alpha}_s)ds + c(X^\T_T)^2 +\beta_{\tau^*(\upmu^\infty)}\gamma_{\tau^*(\upmu^\infty)}X^{\T}_{\tau^*(\upmu^\infty)}}\\
    & = \barE\sqbra{\int_{\T\wedge T}^T\barE\sqbra{f(s, X^{\T}_s, b^P_s, \tau^*(\upmu^\infty), \qv{\rho, \upvartheta_s^\infty}, \hat{\alpha}_s)|\bar{\F}_s}ds}\\
    & \qquad \qquad + c\barE\sqbra{\barE\sqbra{|X^{\T}_T|^2 | \bar{\F}_T}} + \barE\sqbra{\beta_{\tau^*(\upmu^\infty)}\gamma_{\tau^*(\upmu^\infty)}\barE\sqbra{X^{\T}_{\tau^*(\upmu^\infty)} | \bar{\F}_{\tau^*(\upmu^\infty)}}}\\
    & \geq \barE\sqbra{\int_{\T\wedge T}^Tf(s, \bar{X}^{\T}_s, b^P_s, \tau^*(\upmu^\infty), \qv{\rho, \upvartheta_s^\infty}, \bar{\alpha}_s)ds + c(\bar{X}^\T_T)^2 +\beta_{\tau^*(\upmu^\infty)}\gamma_{\tau^*(\upmu^\infty)}\bar{X}^{\T}_{\tau^*(\upmu^\infty)}}\\
    & = J^{\upmu^\infty, \upvartheta^\infty}(\bar{\alpha}).
\end{align*}
By strict convexity of $f$ in $(x, a)$, the inequality is strict unless $\hat{\alpha}^\infty$ and $X^\T$ are both already $\bar{\mathbb{F}}$ adapted. Strict inequality would lead to a contradiction to optimality of $\hat{\alpha}^\infty$ among $\mathbb{G}$-progressive controls, since $\bar{\alpha}$ is $\bar{\mathbb{F}}$-optional, hence also $\mathbb{G}$-progressive. 

\subsubsection{Exogenous Burst Time as Totally Inaccessible Stopping Time}
The section above implies that we can take $\mathbb{G} = \bar{\mathbb{F}} = \mathbb{F}^{\calI, B, W^\infty, D, \mP^\infty}$ to begin with. This concludes the proof for the existence statement of Theorem \ref{Theorem:MFGExistence}. We now mention a desired feature for the bubble model as a corollary.
\begin{corollary}
    The exogenous burst time $\tau$ is a $\mathbb{F}^{\calI, B, W^\infty, D, \mP^\infty}$-totally inaccessible stopping time.
\end{corollary}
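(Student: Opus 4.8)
The plan is to reduce everything to Remark~\ref{Remark:Inaccessible}. After the reduction carried out in the previous subsection, the equilibrium filtration is $\mathbb{G} = \mathbb{F}^{\calI, B, W^\infty, D, \mP^\infty}$, and by the construction recalled in Remark~\ref{Remark:Candidate} this is the progressive enlargement, by the jump process $D_t = \ind{\tau \le t}$, of a filtration $\mathbb{F}$ generated by $\calI$, $B$, $W^\infty$ and whatever randomness $\mP^\infty$ carries that is not supplied by $D$ itself. So it suffices, by Remark~\ref{Remark:Inaccessible}, to check that $\tau$ is independent of $\mathbb{F}$; condition~(2) there, combined with the absolute continuity of $\nu_\tau$ from Assumption~\ref{Assumption:E2} --- which makes $\P(\tau = L) = 0$ for every $\mathbb{F}$-stopping time $L$ --- then yields total inaccessibility.

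To obtain $\tau \perp \mathbb{F}$, I would propagate independence through the discretisation. Under each $\P^N$ the pair $(\calI, \tau)$ keeps its product law $\nu_K \otimes \nu_p \otimes \nu_\tau$, and the Girsanov change \eqref{eq:Girsanov} reweights only $W$ via a density that is a stochastic exponential of an integral against $W$; hence $W^N$ is again a Brownian motion independent of $(\calI, B, \tau)$ under $\P^N$, as already observed in the proof of Lemma~\ref{Lemma:TightnessLifted}. Moreover $\mM^N$ is $\sigma(V^N, D^N)$-measurable with $V^N$ a function of $B$ and $D^N$ a function of $\tau$, so the only channel through which $\mM^N$, and hence $\mP^N$, feels $\tau$ is $D^N$. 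Passing to the weak limit preserves both the independence of $(\calI, B, W^\infty)$ from $\tau$ and the fact that the part of $\mP^\infty$ not determined by $D$ is a weak limit of conditional laws given $B$-measurable inputs; that part, and with it $\mathbb{F}$, is therefore independent of $\tau$.

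An equivalent and arguably cleaner route is via the compensator: once $\tau \perp \mathbb{F}$ is known, the computation around \eqref{eq:compensatorMG} transfers verbatim to the equilibrium filtration and identifies the $\mathbb{G}$-compensator of $D$ as $\int_0^{\cdot} (1 - D_{s-}) k_s\, ds$, with $k$ bounded by Assumption~\ref{Assumption:E2}. This compensator is absolutely continuous, in particular continuous, and a stopping time whose single-jump process admits a continuous compensator is totally inaccessible. I would present whichever of these two arguments is shorter given what is already set up upstream.

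The hard part is the independence step: verifying that the extra randomness the weak limit may inject into $\mP^\infty$ beyond $(B, D)$ does not secretly encode $\tau$ other than through $D$. The clean way to phrase this, I expect, is to record that at every discretisation level the ``non-$D$'' inputs to the fixed point are deterministic functions of $B$ alone, and that this property is stable under the tightness-and-weak-limit passage used to build $\bar\P$ in Lemma~\ref{Lemma:TightnessLifted} and Remark~\ref{Remark:Candidate}. Everything else is then either a direct appeal to Remark~\ref{Remark:Inaccessible} or a standard point-process computation.
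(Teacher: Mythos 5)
Your proposal follows the same route as the paper: reduce to Remark~\ref{Remark:Inaccessible}, then verify that $\tau$ is independent of the rest of the filtration by propagating independence from the discretised level (under each $\P^N$, Girsanov only reweights $W$, so $(\calI,B,W^N)\perp\tau$, and $\mM^N$ touches $\tau$ only through $D^N$) and passing to the weak limit. The paper's proof is in fact a one-liner that simply asserts $\tau\perp(\calI,B,W^\infty,\mP^\infty)$ "which follows from the independence between $\tau$ and $(\calI,B,W^N,\mP^N)$"; your caution about $\mP^\infty$ depending on $D$ is well placed, since strictly speaking one must isolate the part of $\mP^\infty$ not determined by $D$ before the independence claim makes sense — the paper elides this point but your reading (and your proposed compensator-continuity alternative) is the correct way to make it precise.
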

\begin{proof}
    In light of Remark \ref{Remark:Inaccessible} and Assumption \ref{Assumption:E2}, it suffices to remark that $\tau$ is independent from $(\calI, B, W^\infty, \mP^\infty)$, which follows from the independence between $\tau$ and $(\calI, B, W^N, \mP^N)$.
\end{proof}

\section{Concluding Remarks}
In this paper we proposed a more realistic extension of the bubble riding game introduced in \cite{TangpiWang22}.
In contrast to \cite{TangpiWang22} where agents were assumed to enter the game at independent and identically distributed times on an awareness window $[0,\eta]$, here we allow players to enter the game when the price trajectory of the bubble asset reaches a given threshold.
We also allow the initial inventory to depend on the initial (cash) investment and the price level at time of entry. 
Due to these improvements on the model, the resulting MFG in the $N\to\infty$ limit is one with common noise in addition to non-standard features such as random entry times, interaction through the controls and possible jump of the state processes.
%Such games have not been studied in the literature.
Because the coefficients of the game do not satisfy the usual monotonicity conditions assumed in common noise MFG theory, we have to settle for existence of equilibria in a suitable weak form (see Definition \ref{Def:weakMFGsolution}).
In short, the weaker, more realistic model assumptions made in the present paper result in weak, abstract equilibrium strategies whereas the stronger model assumptions made in \cite{TangpiWang22} result in stronger equilibrium strategies that can be numerically simulated thus providing interesting economical insights.

\appendix
\section{Two Auxiliary results}
%\subsection{Well-Posedness of Price Dynamics}
For a c\`{a}dl\`{a}g process $Y$, denote by $M^Y_t = \sup_{0 \leq s \leq t}Y_s$ its running maximum. Recall from \eqref{eq:N_in} and the price dynamics of the N-player game that the bubble trend function $b$ naturally depends on $F_p(M^P_t)$, which is not Lipschitz in $M^P$. In general, the dynamics of asset price in the bubble phase is not well-posed. However, as the bubble is fueled by players' entry, $b$ should be increasing in $F_p(M^P_t)$, hence also increasing in $M^P_t$ at each time $t \in [0, T]$ since $F_p$ is a CDF. This monotonicity property of the path-dependent SDE \eqref{eq:PriceDynamics+} restores unique solvability.
\begin{proposition}\label{Prop:PriceSDEWellPosed}
The following path-dependent SDE
\begin{equation}\label{eq:MFGPriceDynamics}
    X_t = x + \intt \tilde{b}(s, M^X_s, X_s)ds + \sigma_0 B_t
\end{equation}
has a unique strong solution satisfying $\E[\norm{X}_\infty^2] < \infty$ if for each fixed $t\in [0,T]$:
\begin{enumerate}
    \item There exists $C > 0$ such that for all $\bx \in C([0, T]; \R)$: $$\abs{\tilde{b}\pa{t, M^{\bx}_t, \bx_t}} \leq C\pa{1 +  M^{|\bx|}_t}.$$
    \item $\tilde{b}(t, \cdot, \cdot)$ is increasing (not necessarily strictly) in each argument.
\end{enumerate}
\end{proposition}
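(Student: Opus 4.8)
The plan is to prove well-posedness by a monotone Lipschitz approximation of the drift, crucially exploiting that the additive noise $\sigma_0 B$ is non-degenerate.

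\emph{Existence.} First I would replace $\tilde b$ by an increasing sequence of globally Lipschitz functions $\tilde b^n$, each still increasing in $(m,x)$, with $\tilde b^n \uparrow \tilde b$ pointwise and a linear growth constant uniform in $n$; the inf-convolution $\tilde b^n(t,m,x) = \inf_{m',x'}\{\tilde b(t,m',x') + n(|m-m'|+|x-x'|)\}$ does the job (it preserves monotonicity in $(m,x)$, and the growth assumption yields the uniform bound). For each fixed $n$, the map $\bx \mapsto x + \int_0^\cdot \tilde b^n(s, M^{\bx}_s, \bx_s)\,ds + \sigma_0 B_\cdot$ on $C([0,T];\R)$ becomes, after finitely many iterations, a contraction for the sup norm — using $|M^{\bx}_s - M^{\by}_s| \le \|\bx - \by\|_{\infty,[0,s]}$ — so there is a unique strong solution $X^n$, and the linear growth together with Gr\"onwall gives $\sup_n \E[\|X^n\|_\infty^2] < \infty$ and a pathwise equicontinuity estimate on the family $\{X^n\}$. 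I would then prove a comparison principle for these \emph{Lipschitz} running-max-dependent SDEs: with $\Delta = X^{n+1} - X^n$ and $\phi(t) = \sup_{s\le t}(\Delta_s)^-$, the monotonicity of $\tilde b^{n+1}$ together with its Lipschitz bound yields $\dot\Delta_t \ge -2L_{n+1}\phi(t)$ for a.e.\ $t$, whence $\phi(t) \le 2L_{n+1}\int_0^t \phi(s)\,ds$ and $\phi \equiv 0$. Thus $X^n \uparrow X$ a.s.; the limit is continuous by the equicontinuity estimate, so by Dini's theorem $X^n \to X$ and $M^{X^n} \to M^X$ uniformly on $[0,T]$, and $\E[\|X\|_\infty^2] < \infty$.

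\emph{Passing to the limit.} The only non-routine step is identifying the drift, since $\tilde b$ need not be continuous (empirical or true CDFs jump): I must check that $\tilde b^n(s, M^{X^n}_s, X^n_s) \to \tilde b(s, M^X_s, X_s)$ for a.e.\ $(\omega,s)$. As $(m,x)\mapsto\tilde b(t,m,x)$ is monotone, its discontinuity set is contained in countably many lines $\{m=c\}\cup\{x=c\}$. Because $X$ is a semimartingale with quadratic variation $\sigma_0^2 t$ it has a local time and hence spends zero Lebesgue time at each fixed level; and because $B$ takes values of both signs in every right neighbourhood of any stopping time, $X$ strictly exceeds each newly attained running maximum immediately, so the continuous increasing process $M^X$ likewise spends zero Lebesgue time at each fixed level. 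Hence, a.s., $\tilde b$ is continuous at $(s, M^X_s, X_s)$ for a.e.\ $s$; combining this with $\tilde b^n \le \tilde b$, monotonicity of $\tilde b$, and the uniform convergences, a $\liminf/\limsup$ squeeze identifies the limit, and dominated convergence gives $X_t = x + \int_0^t \tilde b(s, M^X_s, X_s)\,ds + \sigma_0 B_t$.

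\emph{Uniqueness.} Any solution $X'$ satisfies $X^n \le X'$ (apply the comparison argument between the Lipschitz drift $\tilde b^n$ and the true drift $\tilde b \ge \tilde b^n$), and symmetrically $X' \le \bar X^n$ for the solutions $\bar X^n$ driven by an upper Lipschitz approximation $\bar{\tilde b}^n \downarrow \tilde b$; so every solution is squeezed in $[X^n, \bar X^n]$, and it remains to show $\bar X^n - X^n \to 0$. This is the delicate point: the Lipschitz constants of the approximants blow up, so one cannot simply Gr\"onwall the gap; instead one exploits that the approximation error $\bar{\tilde b}^n - \tilde b^n$ is concentrated within the approximation scale of the discontinuity set of $\tilde b$, and the trajectory $(M^{X^n}_s, X^n_s)$ meets any such neighbourhood on a set of times of Lebesgue measure tending to $0$, again by the occupation-time facts above. (Equivalently, the difference of two solutions has finite variation, hence zero local time, so an $L^1$/Tanaka estimate reduces uniqueness to controlling the drift difference, for which monotonicity supplies one sign for free and the occupation-time bound controls the other.) \textbf{This occupation-time control of the discontinuities of $\tilde b$ --- forced on us because $\tilde b$ is allowed to be merely monotone and not Lipschitz --- is the main obstacle; everything else is the standard Lipschitz-approximation and comparison machinery.}
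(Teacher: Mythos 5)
The proposal takes a genuinely different route from the paper, and it has a real gap --- one the student partly acknowledges --- so let me address both points.

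\textbf{What the paper does.} The paper's proof (adapted from Belfadli--Ouknine) is considerably shorter. Linear growth and the nondegenerate additive noise give a weak solution that is unique in law (Karatzas--Shreve Prop.~5.3.6 plus Girsanov), so by Yamada--Watanabe it suffices to prove pathwise uniqueness. For two solutions $X,Y$ on the same space, one applies Tanaka's formula to $X\vee Y$ and $X\wedge Y$, and the key lemma --- that $X_s > Y_s$ forces $M^X_s \geq M^Y_s$, an elementary argument by contradiction using only monotonicity of $\tilde b$ --- shows that the cross terms vanish, so $X\vee Y$ and $X\wedge Y$ are themselves solutions. Uniqueness in law then gives $\E[X\vee Y - X\wedge Y]=0$, whence $X=Y$ a.s. Crucially this argument uses only \emph{measurability and monotonicity} of $\tilde b$; no continuity enters at any point, so the issue of where the discontinuity set sits never arises.

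\textbf{What you do, and where it breaks.} You instead approximate $\tilde b$ from below (and above) by globally Lipschitz monotone drifts, solve, compare, pass to the monotone limit, and try to identify the limit drift. The comparison principle for the approximants is fine (after a careful split: use monotonicity to discard the favourable direction and Lipschitz plus the running-sup control to bound the unfavourable one). The problem is the identification of the limit and, especially, the uniqueness step. First, your characterisation of the discontinuity set of a coordinatewise-monotone $\tilde b(t,\cdot,\cdot)$ as "countably many lines $\{m=c\}\cup\{x=c\}$" is false: a function increasing in each variable can jump across any weakly decreasing curve (e.g. $\ind{m+x\geq 0}$). The correct statement is that the discontinuities lie on countably many graphs of decreasing functions, which makes the occupation-time control you invoke more delicate. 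It is probably still salvageable for the identification-of-the-drift step, because a.e. time is spent in a flat stretch of $M^X$ and, on such a stretch, the section of a decreasing curve through $\{m=c\}$ is at most a single $x$-level (for all but countably many $c$), which $X$ avoids by non-degenerate quadratic variation; but you need to make that argument, not the false axis-parallel one. Second, and more seriously, your uniqueness step stops at a programme: you say the Lipschitz constants blow up so Gr\"onwall fails, and that one should instead estimate the Lebesgue measure of times at which the trajectory enters the shrinking neighbourhood of the discontinuity set where $\bar{\tilde b}^n - \tilde b^n$ is large. You do not produce that estimate, and you acknowledge it as the ``main obstacle.'' This is not a cosmetic gap --- it is exactly the place the whole route needs a new idea --- and it is precisely the difficulty that the paper's $X\vee Y,\,X\wedge Y$ plus uniqueness-in-law trick is designed to sidestep entirely. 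I would recommend abandoning the approximation scheme and adopting the paper's direct pathwise-uniqueness argument, which is both shorter and requires no regularity of $\tilde b$ beyond measurability and monotonicity.
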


\begin{proof}
We adapt the proof of \cite[Theorem 4.1]{belfadli2009}. The first condition guarantees a weak solution satisfying the integrability condition that is unique in law (see \cite[Proposition 5.3.6 and Remark 5.3.8]{bookKaratzasShreve91}). By the well-known result of \citet{yamada71}, we only need to show pathwise uniqueness. Suppose $X$ and $Y$ are two solutions on the same probability space with respect to the same Brownian motion $B$. Observing that $X - Y$ is absolutely continuous, by Tanaka's formula we get
\begin{equation}\label{smalleq:XmaxY}
\begin{split}
X_t \vee Y_t & = Y_t + (X_t - Y_t)^+ = Y_t + \intt \ind{X_s > Y_s}d(X_s - Y_s) \\
& = x + \sigma_0 B_t + \intt \ind{X_s > Y_s}\tilde{b}(s, M^X_s, X_s)ds + \intt \ind{X_s \leq Y_s}\tilde{b}(s, M^Y_s, Y_s)ds.\\
Y_t \vee X_t & =  x + \sigma_0 B_t + \intt \ind{Y_s > X_s}\tilde{b}(s, M^Y_s, Y_s)ds + \intt \ind{Y_s \leq X_s}\tilde{b}(s, M^X_s, X_s)ds.
\end{split}
\end{equation}
We can equate the above expressions for all $t$, implying that for almost every $t$ we have 
\begin{equation}\label{smalleq:samedriftXYae}
\ind{X_t = Y_t} \pa{\tilde{b}(t, M^Y_t, Y_t) - \tilde{b}(t, M^X_t, X_t)} = 0.
\end{equation}
We now show that if $X_s > Y_s$, then $M^X_s \geq M^Y_s$. Define $$s_0 \ce \sup\{u \in [0, s]: X_u = Y_u\}.$$ 
The case is trivial if $s_0 = 0$.

On the event $\{s_0 > 0\}$, continuity of $X$ and $Y$ implies that $X_t > Y_t$ for all $t \in (s_0, s]$.  Suppose $M^X_s < M^Y_s$, then there must exist $s^* \in [0, s_0)$ where $Y_{s^*} = M^Y_s > M^X_s \geq X_{s^*}$. Then define $$s_1 \ce \inf\{u \in [s^*,s_0]: X_u = Y_u\}.$$
By continuity again, $Y_t > X_t$ for all $t \in [s^*, s_1)$. By definition of $s^*$, we must also have $M^Y_{t} > M^X_{t}$ for all $t \in [s^*, s_1)$. Monotonicity of $\tilde{b}$ leads to a contradiction $$0 > X_{s^*} - Y_{s^*} = \int_{s^*}^{s_1}\tilde{b}(t, M^Y_t, Y_t) - \tilde{b}(t, M^X_t, X_t)dt\geq 0.$$
Therefore, $M^X_s \geq M^Y_s $ and in particular, $M^{X\vee Y}_s = M^{X}_s$.
We can then rewrite \eqref{smalleq:XmaxY} as    
\begin{align*}
X_t \vee Y_t & = x + \sigma_0 B_t + \intt \tilde{b}(s, M^{X \vee Y}_s, X_s \vee Y_s)ds \\
&+ \intt \ind{X_s = Y_s}\pa{\tilde{b}(s, M^{Y}_s, Y_s) - \tilde{b}(s, M_s^{X \vee Y},  X_s \vee Y_s)}ds\\
& = x + \sigma_0 B_t + \intt \tilde{b}(s, M^{X \vee Y}_s, X_s \vee Y_s)ds \\
& + \intt \ind{\{X_s = Y_s\} \cap \{M^{X}_s > M^{Y}_s\}}\pa{\tilde{b}(s, M^{Y}_s, Y_s) - \tilde{b}(s, M_s^{X \vee Y},  X_s \vee Y_s)}ds.
\end{align*}
where the last line vanishes by \eqref{smalleq:samedriftXYae}. Therefore, $X \vee Y$ also satisfies \eqref{eq:MFGPriceDynamics}. Similarly, one can show $X \wedge Y$ is also a solution. Then by uniqueness of law, we have $\E[|X - Y|] = \E[X\vee Y - X\wedge Y] = 0$ which leads to pathwise-uniqueness and completes the proof. The integrability property easily follows from Gr\"{o}nwall's inequality.
\end{proof}
The following measure theoretic result is probably well known. We give a proof since we could not find a directly citable reference.

\begin{lemma}\label{Lemma:Caratheodory}
Let $(S, \Sigma, \mu)$ be a complete measurable space. A function $f: S \times \R \to \R$ is jointly measurable if for all $x \in \R$:
\begin{enumerate}
    \item $f(\cdot, x)$ is measurable.
    \item $f(\cdot, x_n)$ converges to $f(\cdot, x)$ in $\mu$-measure for any increasing sequence $x_n \uparrow x$. 
\end{enumerate}
\end{lemma}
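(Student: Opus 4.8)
The strategy is the classical argument that a Carath\'eodory function is jointly measurable, adapted to the weaker hypothesis that the second-variable regularity here is only left-sequential continuity \emph{in measure} rather than genuine continuity. Since convergence in measure, and all the applications we have in mind, presuppose $\mu$ (locally) finite, I assume without loss of generality that $\mu$ is finite (in general one localizes $S$ along an exhausting sequence of finite-measure sets and glues); fix also an auxiliary finite measure $\nu$ on $(\R,\B(\R))$ equivalent to Lebesgue measure, e.g.\ a standard Gaussian. The first step is to build a piecewise-constant-in-$x$ approximation: enumerate the rationals $\{r_1,r_2,\dots\}$ and for $n\in\N$ set $q_n(x)\ce\max\{r_j:\ j\le n,\ r_j\le x\}$, defining $q_n(x)$ to be some fixed rational $\le x$ on the half-line where this set is empty (which for fixed $x$ happens only for finitely many $n$). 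Then $x\mapsto q_n(x)$ takes finitely many values with Borel level sets, $q_n(x)\le q_{n+1}(x)$, and $q_n(x)\uparrow x$; so $f_n(s,x)\ce f(s,q_n(x))$ is a \emph{finite} sum of terms of the form $g(s)\ind{I}(x)$ with $g$ $\Sigma$-measurable by hypothesis~(1) and $I\in\B(\R)$, hence each $f_n$ is $\Sigma\otimes\B(\R)$-measurable.

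Next I would verify, for each fixed $x$, the slicewise convergence $f_n(\cdot,x)\to f(\cdot,x)$ in $\mu$-measure. If $x$ is rational this is immediate, as $q_n(x)=x$ for all large $n$. If $x$ is irrational, the non-decreasing sequence $q_n(x)\uparrow x$ takes infinitely many distinct values; its strictly increasing subsequence of record values converges to $x$ from below, so hypothesis~(2) applies to it and gives convergence in $\mu$-measure, whence $f_n(\cdot,x)=f(\cdot,q_n(x))\to f(\cdot,x)$ in $\mu$-measure, this sequence being a subsequence, with repetitions, of the record sequence. In particular $\mu(\{s:\ |f_n(s,x)-f_m(s,x)|>\varepsilon\})\to 0$ as $n,m\to\infty$, for every $\varepsilon>0$.

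The main point is to upgrade this slicewise information to joint measurability of $f$. Since every $f_n$ is jointly measurable, Fubini applied to the nonnegative function $\ind{\{|f_n-f_m|>\varepsilon\}}$ on $S\times\R$ yields $(\mu\otimes\nu)(\{|f_n-f_m|>\varepsilon\})=\int_{\R}\mu(\{s:\ |f_n(s,x)-f_m(s,x)|>\varepsilon\})\,\nu(dx)$; the integrand is bounded by $\mu(S)<\infty$ and tends to $0$ for every $x$ by the previous paragraph, so dominated convergence makes $(f_n)_n$ Cauchy in $(\mu\otimes\nu)$-measure on $S\times\R$. By completeness of $L^0(\mu\otimes\nu)$, $f_n\to F$ in $(\mu\otimes\nu)$-measure for some jointly $\Sigma\otimes\B(\R)$-measurable $F$, and along a subsequence $f_{n_k}\to F$ $(\mu\otimes\nu)$-a.e. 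Slicing the a.e.-convergence set via Fubini, for $\nu$-a.e.\ $x$ one has $f_{n_k}(\cdot,x)\to F(\cdot,x)$ $\mu$-a.e., hence in $\mu$-measure; comparing with the slicewise limit established above gives $F(\cdot,x)=f(\cdot,x)$ $\mu$-a.e.\ for $\nu$-a.e.\ $x$, and Fubini again gives $F=f$ $(\mu\otimes\nu)$-a.e. Therefore $f$ agrees, off a $(\mu\otimes\nu)$-null set, with the jointly measurable $F$; using completeness of $(S,\Sigma,\mu)$, $f$ is measurable for the $(\mu\otimes\nu)$-completion of $\Sigma\otimes\B(\R)$, which is the sense of joint measurability used in the rest of the paper.

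I expect the third step to be the only real obstacle. Passing from ``every $x$-slice converges in $\mu$-measure'' to a statement about the two-dimensional product $\sigma$-algebra is exactly where the completeness of $\mu$ and the mollifying reference measure $\nu$ on $\R$ enter, and the argument must be phrased through the Cauchy-in-measure reformulation --- which involves only the already jointly measurable $f_n$'s --- precisely so that one never has to control the $\mu$-null exceptional sets uniformly over $x$. The first two steps are routine.
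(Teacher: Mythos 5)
Your strategy --- approximate $f$ from below by $f_n(s,x)=f(s,q_n(x))$, show these are Cauchy in $(\mu\otimes\nu)$-measure via Fubini, and identify the $L^0$-limit $F$ with $f$ --- is a genuinely different route from the paper's, which for a closed $E\subseteq\R$ and a countable dense $\{x_m\}$ writes $f^{-1}(E)$ directly as the countable intersection--union $\bigcap_n\bigcup_m\{s:f(s,x_m)\in\mathcal O_{1/n}(E)\}\times[x_m,x_m+1/n)$. Steps one and two of your argument are fine, up to minor bookkeeping on the left tail in the definition of $q_n$.

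The gap is exactly where you flagged it, and it is not merely a matter of technique. The clause ``Fubini again gives $F=f$ $(\mu\otimes\nu)$-a.e.'' applies Fubini to the set $\{F\ne f\}$, which is not yet known to lie in $\Sigma\otimes\B(\R)$: joint measurability of $f$ is precisely the conclusion being sought, so this step is circular. Nor can the circle be squared under the hypotheses as stated. Under CH, Sierpi\'nski's classical construction gives a non-Lebesgue-measurable $A\subseteq[0,1]^2$ with every section $A_x=\{s:(s,x)\in A\}$ countable; taking $S=[0,1]$ with Lebesgue $\mu$ and $f=\ind{A}$, each slice $f(\cdot,x)$ is the indicator of a countable set, so both hypotheses of the lemma hold trivially (every convergence in measure is $0\to 0$), your $F$ is $\equiv 0$, and yet $f$ is not measurable even for the completed product $\sigma$-algebra. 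The paper's own proof has the analogous soft spot, compressed into the phrase ``for $\mu$-almost every $s$ and \emph{any} $x$'': convergence in measure only produces an $x$-dependent exceptional $\mu$-null set, so a single $N$ making the displayed set identity hold is not available in general. Both arguments, however, become correct under the hypothesis that is actually verified at the two points the paper invokes the lemma (Lemma~\ref{Lemma:p*measurable}, and the construction of $\hat\alpha$ in Proposition~\ref{Prop:OptimalControl}), namely that $x\mapsto f(s,x)$ is left-continuous for every $s$ outside a \emph{single} $\mu$-null set $N$: then $f_n\to f$ pointwise off $N\times\R$, Fubini applies legitimately, and your proof (as well as the paper's) closes. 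Strengthening hypothesis~(2) to this pointwise form is therefore the fix, and with it your construction goes through.
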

\begin{proof}

    First let $E \subseteq \R$ be any closed set and let $X = \{x_m\}_{m \geq 1}$ be a countable, dense subset of $\R$. For $\varepsilon> 0$, denote by $\mathcal{O}_{\epsilon}(E)$ the open set $\{x \in \R: \inf_{e \in E}|x - e| < \varepsilon\}$. We claim that for $\mu$-almost every $s \in S$ and any $x \in \R$, $f(s, x) \in E$ if and only if for each $n \in \N$, there is $x_m \in X \cap (x - \frac{1}{n}, x]$ such that $f(s, x_m) \in \mathcal{O}_{\frac{1}{n}}(E)$. Note that we can always approximate any $x \in \R$ by an increasing sequence $\{x_{m_k}\}_{k \geq 1}$ with elements in $X$ such that the functions $f(\cdot, x_{m_k})$ converge $\mu$-almost everywhere to $f(\cdot, x)$. The claim follows almost immediately. Denoting by $f^{-1}$ the preimage of $f$, joint measurability is proved by writing 
    $$f^{-1}(E) = \bigcap_{n = 1}^\infty \bigcup_{m = 1}^{\infty}\cbra{s \in S :f(s, x_m) \in \mathcal{O}_{\frac{1}{n}}(E)} \times \left[x_m, x_m + \frac{1}{n}\right).$$
\end{proof}

\bibliographystyle{plainnat}
\bibliography{references}

\vspace{1cm}

\noindent Princeton University\\
Operations Research and Financial Engineering\\
Email address: shichun.wang@princeton.edu

\vspace{.5cm}

\noindent Princeton University\\
Operations Research and Financial Engineering\\
Bendheim Center for Finance \\
Email address: ludovic.tangpi@princeton.edu
\end{document}